\newcommand{\Var}{\mathrm{Var}}
\newcommand{\Cov}{\mathrm{Cov}}
\newcommand{\Cor}{\mathrm{Cor}}
\newcommand{\ind}[1]{{\mathbf{1}}{\left(#1\right)}}
\newcommand{\f}{f}
\newcommand{\bx}{\mathbf{x}}
\newcommand{\npdf}[1]{\phi\left({#1}\right)}
\newcommand{\x}{\mathbf{x}}
\newcommand{\beq}{\begin{equation}}
\newcommand{\eeq}{\end{equation}}
\newcommand{\vT}{T^{(v)}}
\newcommand{\vM}{M^{(v)}}
\newcommand{\M}{M}
\newcommand{\iid}{\overset{iid}{\sim}} 
\newcommand{\g}{g}
\newcommand{\Data}{\mathcal D}
\newcommand{\KL}{\operatorname{KL}}
\newcommand{\bX}{\mathbf{X}}
\newcommand{\Gam}{\operatorname{Gam}}
\newcommand{\eps}{\epsilon}
\newcommand{\ghxi}[2]{g(\x_{#2}, T_{#1}, \M_{#1})}
\newcommand{\vtau}{\tau^{(v)}}
\newcommand{\E}{\mathrm{E}}
\newcommand{\epsilont}{\widetilde \epsilon}
\def\eqd{\,{\buildrel d \over =}\,} 
\newcommand{\aaa}{1.0}
\newcommand{\aab}{0.8}
\newcommand{\aac}{-0.3}
\newcommand{\aad}{-0.2}
\newtheorem{thm}{Theorem}
\newtheorem{lemma}{Lemma}
\theoremstyle{remark}
\newtheorem{remark}{Remark}
\title{Density Regression with \\Bayesian Additive Regression Trees}
\author{\hspace{\aaa cm}Vittorio Orlandi\\
	\hspace{\aaa cm}Dept. of Statistical Science\\
    \hspace{\aaa cm}Duke University\\
	\hspace{\aaa cm}Durham, NC 27708 \\
	\hspace{\aaa cm}\texttt{vdo@duke.edu} \\
	\And
	\hspace{\aab cm}Jared Murray \\
	\hspace{\aab cm}Dept. of Information, \\\hspace{\aab cm}Risk, and Operations Management\\
	\hspace{\aab cm}University of Texas, Austin\\
	\hspace{\aab cm}Austin, TX 78712 \\
	\hspace{\aab cm}\texttt{jared.murray@mccombs.utexas.edu}
	\AND
	\hspace{\aac cm}Antonio Linero \\
	\hspace{\aac cm}Dept. of Statistics and Data Sciences \\
	\hspace{\aac cm}University of Texas, Austin\\
	\hspace{\aac cm}Austin, TX 78712 \\
	\hspace{\aac cm}\texttt{antonio.linero@austin.utexas.edu}
	\And
	\hspace{\aad cm}Alexander Volfovsky \\
	\hspace{\aad cm}Dept. of Statistical Science \\
	\hspace{\aad cm}Duke University\\
	\hspace{\aad cm}Durham, NC 27708 \\
	\hspace{\aad cm}\texttt{alexander.volfovsky@duke.edu}\\
}
\begin{document}
\maketitle

\begin{abstract}
	Flexibly modeling how an entire density changes with covariates is an important but challenging generalization of mean and quantile regression. While existing methods for density regression primarily consist of covariate-dependent discrete mixture models, we consider a continuous latent variable model in general covariate spaces, which we call DR-BART. The prior mapping the latent variable to the observed data is constructed via a novel application of Bayesian Additive Regression Trees (BART). We prove that the posterior induced by our model concentrates quickly around true generative functions that are sufficiently smooth. We also analyze the performance of DR-BART on a set of challenging simulated examples, where it outperforms various other methods for Bayesian density regression. Lastly, we apply DR-BART to two real datasets from educational testing and economics, to study student growth and predict returns to education. Our proposed sampler is efficient and allows one to take advantage of BART's flexibility in many applied settings where the entire distribution of the response is of primary interest. Furthermore, our scheme for splitting on latent variables within BART facilitates its future application to other classes of models that can be described via latent variables, such as those involving hierarchical or time series data. 
\end{abstract}

\keywords{Bayesian Nonparametrics \and Conditional Density Estimation \and Posterior Concentration \and Latent Variables \and Heteroscedasticity}

\section{Introduction}
\label{sec:intro}
Data analysis frequently concerns itself with associating the change in a function of some response variable $y$ with a set of covariates $\bx$. Arguably the most common tool for this is mean regression, which focuses on the expectation $\mathrm{E}[y \mid \bx]$ and foregoes inference about other parts of the conditional density $p(y \mid \bx)$, a much more general quantity. This inflexibility has been recognized as problematic in many modern applications \citep[see, e.g.][]{dr_redshift, bdr_for_causal, prob_forecasting}. It is then natural to ask if other functionals of $p(y \mid \bx)$ are more appropriate, with two immediate candidates: quantile regression and density regression. 

Quantile regression models specific quantiles of the conditional response distribution. This can help address settings where some quantiles carry more probative value. For example, \cite{black_galdo_smith} note that the impact of a job training program on the upper quantiles of the distribution is considered to be more important by policy makers than that on the lower quantiles. While this approach is more general than mean regression, one problem with its application is that computing estimates of functionals of the quantiles is not always straightforward. Another problem is that estimated quantiles oftentimes do not obey the monotonicity constraint inherently satisfied by the true distribution.  While there are approaches for joint modeling of quantiles (e.g. \cite{tokdar_joint_quant_reg, rkhs_joint_quant_reg}) or post-hoc reordering of estimates (e.g. \cite{Chernozhukov2010}), the fundamental limitation with the approach is that individual quantiles are being targeted as proxies for features of the distribution as a whole. 

By modeling the entire probability distribution of the response, density regression methods perform a substantially harder task than mean or even quantile regression. In doing so, however, they are able to compute coherent point estimates and perform uncertainty quantification for arbitrary functionals of the distribution that may be of interest. Studies of income inequality, for example, typically take into account the impact of variables on the entire income distribution \citep[e.g.][]{income_inequality_swiss, income_inequality_fam}. To date, there is a large literature on density regression. We contribute to this body of work by providing a general, yet reasonably structured, formulation of the problem that enjoys important theoretical guarantees and yields an efficient sampler with strong empirical performance.


In this work, we consider the following model for the conditional density:

\beq
p(y\mid \x, f, \sigma) = \int_0^1 
\frac{1}{\sigma(\x, u)}
\npdf{\frac{y-f(\x, u)}{\sigma(\x, u)}}du,\label{eq:cmixfunc}
\eeq
which is parameterized by the latent variable $u$. This model generalizes the covariate dependent mixture model discussed in detail below. At the same time, it posits a fairly specific form by which $f$ and $\sigma$ interact to generate a conditional density. Various forms of this model have been considered in \cite{dunson_density}, \cite{Kundu2014}, and \cite{gp}, where Gaussian process priors were placed on $f(x, u)$ and inverse gamma priors on either $\sigma(x, u) = \sigma$ or $\sigma ^ 2(x, u) = \sigma ^ 2$. Instead, we propose placing BART-type priors on both $f$ and $\sigma$. While BART was designed for mean regression, we also consider a modification allowing it to model variance function to grant the model added flexibility in finite sample settings, and generalize both to accommodate latent variables. Asymptotically, we show that the posterior specified by our model concentrates around a true underlying density, provided its log is $\alpha$-H\"older for $0 < \alpha \leq 2$. The rate of concentration is removed from the minimax rate by a factor of $\alpha/(\alpha + 1)$ because BART generates piecewise constant functions; however, a near--minimax rate can be attained by using the SBART model of \cite{sbart} and restricting to a slightly smaller function class. In addition to yielding a more efficient sampler, we also show that this model outperforms its counterpart employing Gaussian process priors, as well as a variety of other methods for density regression in several empirical evaluations. 

We also apply our method to two real world datasets. In the first, we compute quantile growth targets for students in elementary and middle school mathematics classes using data provided in \cite{Betebenner2011SGP}. Mean regression approaches fail to identify interesting aspects of the data — such as that conditional distributions of test scores become more skewed over time — and quantile regression approaches can suffer from quantile crossing and a limited description of the uncertainty in their estimates. Our model addresses both these problems simultaneously. In the second, we study returns to education from US census microdata originally compiled by \cite{Angrist2006}. The returns are a nonlinear functional of quantiles of the wage distribution and therefore well suited for analysis by density regression methods, which fully capture uncertainty about their estimates. 

The paper proceeds as follows. The remainder of the introduction discusses past work on density regression models. Section \ref{sec:hetbart} gives a brief overview of BART and its application to modeling mean and variance functions. In Section \ref{sec:dr}, we motivate our use of BART for modeling components of the conditional density and state our full model for density regression, which we refer to as DR-BART. Section \ref{sec:theory} outlines theoretical results concerning our model, namely the rate at which the posterior contracts about a true density. Section \ref{sec:sims} compares DR-BART to other models for density regression on a variety of simulated datasets and Section \ref{sec:ex} applies DR-BART to two real world datasets from education and economics. Section \ref{sec:conclusion} concludes.
\subsection{Related Work}

Our proposal generalizes a common approach to density regression that uses covariate dependent mixture models. In these models, the conditional density is given by:
\beq
p(y\mid \x,\theta,\pi) = \sum_{h=1}^k \pi_h(\x)\mathcal{K}(y;\theta_h(\x)),\label{eq:cmix0}
\eeq
allowing for $k=\infty$, where $\mathcal{K}(\cdot, \cdot)$ is a positive definite kernel function and $\pi_h(\bx)$ are covariate-dependent mixture weights. We restrict attention to normal kernels: 
\beq
p(y\mid \x, \theta\equiv(\mu, \sigma),\pi) = \sum_{h=1}^k \pi_h(\x)
\phi_{\sigma_h(\x)}(y - \mu_h(\x))
,\label{eq:cmix00}
\eeq
where $\phi_{\sigma}(z) = (1/\sigma)\phi(z/\sigma)$ and $\phi$ denotes the standard normal pdf -- but the extension to other kernels is straightforward. In practice $\pi$ or $\theta$ may not vary with $\x$, or may vary in a limited way. (e.g., by taking $\mu_h(\x) = \x'\beta_h$ $\sigma_h(\x)= \sigma_h$). Our proposed model in \eqref{eq:cmixfunc} recovers that in \eqref{eq:cmix00} when $f$ and $\sigma$ are step functions with the same points of discontinuity (which can depend on the covariates).

Models of the form in \eqref{eq:cmix00} appear in the machine learning literature as ``mixtures of experts''  \citep{Jacobs1991,Jordan1994} where the initial focus was on using these models for flexible mean regression or classification.
\cite{Geweke2007} and \cite{Villani2009} study models of this form for semiparametric density regression, using finite $k$ and multinomial probit and logit regression models for $\pi(\x)$. While it is possible to get consistency properties for large classes of conditional densities (see e.g. \cite{Norets2010approximation,Pati2013,Norets2014} and the monograph \cite{Norets2014a}), practical experience in finite samples suggests that there can be value in allowing the kernel variance to depend on $\x$, as this can reduce the number of clusters required for an accurate approximation. \cite{Villani2009} provide simulated examples and discussion in the case of fixed $k$, and we will revisit this point in the context of the models introduced here (Section \ref{sec:sims}). 

Another approach proposes to leverage the joint model for $(y, \bx)$ as a convenient device for inducing a particular conditional model as $p(y\mid \x, \theta) = p(y, \x \mid \theta)/p(\x\mid \theta)$ (as in e.g. \cite{Muller1996,Park2010}, among others). There is some cost to the joint modeling approach -- which itself has commanded a large literature (see \cite{West1993, Muller1996} and variations in \cite{Shahbaba2009a, Taddy2010, Molitor2010, Wade2011,Dunson2010, Hannah2011,Wade2014}) -- in terms of computation and accuracy as the dimension of the covariate vector grows (see \cite{Hannah2011,Wade2014}). The posterior can also depend on the distribution of the covariates, even when care is taken to separate the parameter spaces in the prior, as the auxiliary joint model assumes a common clustering for the response and the covariates (see Griffin's discussion of \cite{Dunson2010}; also, \cite{Walker2013} and \cite{Wade2014}). Thus, other nonparametric Bayesian models focus explicitly on the conditional distributions of interest; these date to (at least) MacEachern's seminal work on dependent Dirichlet processes (DDPs) \citep{Maceachern1999,MacEachern2000}. A DDP is a prior for a collection of distributions such that at each covariate value the process is marginally a DP. Models in this class include \cite{DeIorio2004,Griffin2006,dunson2008bayesian,DeIorio2009,Wang2011bayesian}, and numerous other specializations to spatiotemporal or hierarchical models. \cite{Barrientos2012} characterize the DDP in terms of copulas and provide results about its support and about kernel mixtures using the DDP.

\section{Heteroscedastic Regression with BART priors}\label{sec:hetbart}

\subsection{Bayesian Additive Regression Trees (BART)}\label{sec:bart}

Bayesian Additive Regression Trees (BART) were introduced by \cite{Chipman2010} (henceforth CGM) as a nonparametric prior over a regression function $f$ designed to capture complex, nonlinear relationships and interactions. Specifically, for observed data pairs $\mathcal{D} = \{(y_i, \x_i); 1\leq i\leq n\}$, CGM propose the regression model:
\begin{equation}
y_i = f(\x_i) + \epsilon_i,\quad \epsilon_i\iid N(0,\sigma^2)\label{eq:meanbart}.
\end{equation}
The BART prior represents $f$ as the sum of many piecewise constant regression trees.
%
Each tree $T_h,\;1\leq h\leq m$ consists of a set of interior decision nodes (where decisions are generally of the form $x_j<c$ for some value $c$) and a set of $b_h$ terminal nodes. The terminal nodes have associated parameters $\M_h = (\mu_{h1},\mu_{h2},\dots \mu_{hb_h})'$.  For each tree there is a partition of the covariate space $\{\mathcal{A}_{h1},\dots,\mathcal{A}_{hb_h}\}$ with each element of the partition corresponding to a terminal node. A tree and its associated parameters define step functions:
\beq
\g(\x, T_h, \M_h) = \mu_{hb}\text{ if }\x\in\mathcal{A}_{hb} \text{ (for $1\leq b\leq b_h$)}.
\eeq
These functions are additively combined to obtain $f$:
\begin{equation}
f(\x) = \sum_{h=1}^m \g(\x, T_h, \M_h).
\end{equation}
This model has been shown empirically and theoretically (see, e.g. \cite{Chipman2010,artofbart}) to be accurate, highly flexible, and robust to the presence of irrelevant covariates. The default prior parameters work very well in practice and the model admits an efficient sampler. More details on BART can be found in the Supplement.


\subsection{A BART prior for variance functions}
As mentioned in the introduction, to allow for flexible density regression in finite samples, it may be useful to allow the variance of the process to depend on $\x$ as well. To this end, we adopt the log-linear BART prior of \citet{murray2021log} for the log-variance. Specifically, consider the heteroskedastic regression problem $y_i = f(\x) + \sigma^2(\x)\epsilon_i$ where $\epsilon_i\overset{iid}{\sim} N(0,\sigma_0^2)$. \citeauthor{murray2021log} places a log-linear BART prior on $\sigma^2(\cdot)$:
$$\log[\sigma^2(\x)]= \sum_{h=1}^{m_v} g(\x, \vT_h, \vM_h),$$
where $\{(\vT_h$, $\vM_h)\}$ are trees and parameters for the variance function. The prior for the exponentiated leaf parameters $\exp(\mu_{hb}^{(v)})$ is conjugate, symmetric on the log scale, and can be calibrated to match the expected prior range of the log-variance process. More details on BART's extension to log-linear models can be found in the Supplement.

\section{Density Regression with BART priors}\label{sec:dr}

Even with a flexible variance function, the normality assumption of heteroscedastic BART may be too restrictive; for example, at any covariate value it yields symmetric predictive distributions. In this section, we extend the heteroscedastic BART model to general density regression problems by introducing a continuous latent variable $U$, which is treated as an omitted variable independent of $\x$. 
Before introducing the model in full generality we will motivate  the use of continuous latent variables in the density estimation setting.

\subsection{Continuous latent variable priors for a single density}\label{sec:single}

Consider the following generalized location model for estimating a single density:
\beq
p(y) = \int_0^1 \phi_\sigma(y - f(u))
\,du.\label{eq:con0}
\eeq
An equivalent representation in terms of a latent variable is
\beq
Y = f(U) + \epsilon\label{eq:transmodel},\quad U\sim U(0,1),\quad \epsilon\sim N(0, \sigma^2)
\eeq
where \eqref{eq:con0} is obtained on marginalizing over $U$.
%
%
In the limit as $\sigma\rightarrow 0$, $Y \,{\buildrel d \over =}\, f(U)$ where $U\sim U(0,1)$. This class of models can be quite broad, depending on the prior for $f$; if $f$ is the quantile function of a distribution $P$, then $Y\sim P$. We do not restrict $f$ to be monotone; while it would be possible to do so, this substantially increases the computational burden and since subsequent inference is on the induced density for $y$ (or $y\mid \x$ below) or its functionals rather than $f$ itself, the monotonicity constraint is not necessary.

Discrete location mixtures arise as a special case of this model when $f$ is a step function. Suppose $f(u) = \mu_h$ for $u\in [\nu_h, \nu_{h+1})$, where $\nu$ is an increasing sequence on $[0,1)$ such that $\nu_1=0$ and $\sum_{h=0}^\infty (\nu_{h+1}-\nu_h) = 1$. Then we have:
\[
p(y) = 
\int_0^1 \phi_\sigma(y - \mu_h)
\ind{u\in [\nu_h, \nu_{h+1})}d\,u
= \sum_{h=1}^\infty (\nu_{h+1}-\nu_h)\phi_\sigma(y - \mu_h).
\]
This representation is intimately related to the augmented model used for slice sampling infinite mixture models \citep{Walker2007, Kalli2011}, where a prior on $f$ is induced via the prior on mixture component weights $\pi_h\equiv \nu_{h+1}-\nu_h$.

Priors on mixture weights are only one way to induce the prior on $f$. \cite{Kundu2014} proposed placing a Gaussian process prior directly on $f$, suggesting models centered on a prior guess of the quantile function and using a squared exponential covariance function. Theoretically, this is a flexible choice \citep[see][]{dunson_density, Kundu2014}; however, it introduces computational difficulties, requiring a discretization of the space that may reduce the quality of the subsequent inference. Furthermore, it is not immediately clear how to introduce multiple or categorical covariates into this framework.

The continuous latent variable model is appealing, however. In many contexts it is more intuitive to think of distributional features as arising from some omitted or latent continuous variables, and not from heterogeneity due to multiple independent subpopulations. Adapting BART to this setting yields priors for $f$ which incorporate covariates flexibly, are approximately smooth, and do not require discretization of the latent variable \emph{a priori}. 


\subsection{Density Regression with BART (DR-BART)}

\begin{figure}
\begin{center}
{\centering \includegraphics[width=.9\linewidth]{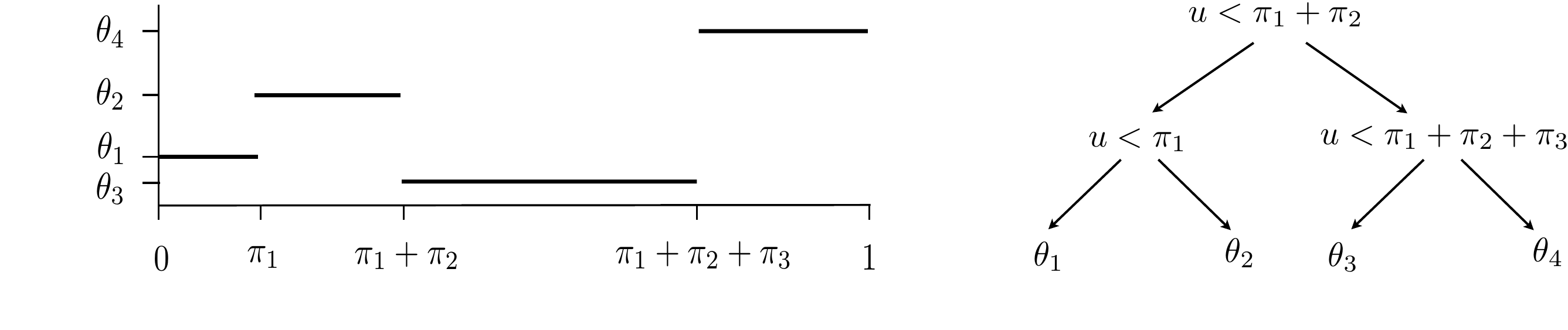} 
}
\end{center}
\caption{Representing a step function (left) as a binary tree (right).}
\label{fig:stepg}
\end{figure} 

\begin{figure}
\begin{center}
{\centering \includegraphics[width=.5\linewidth]{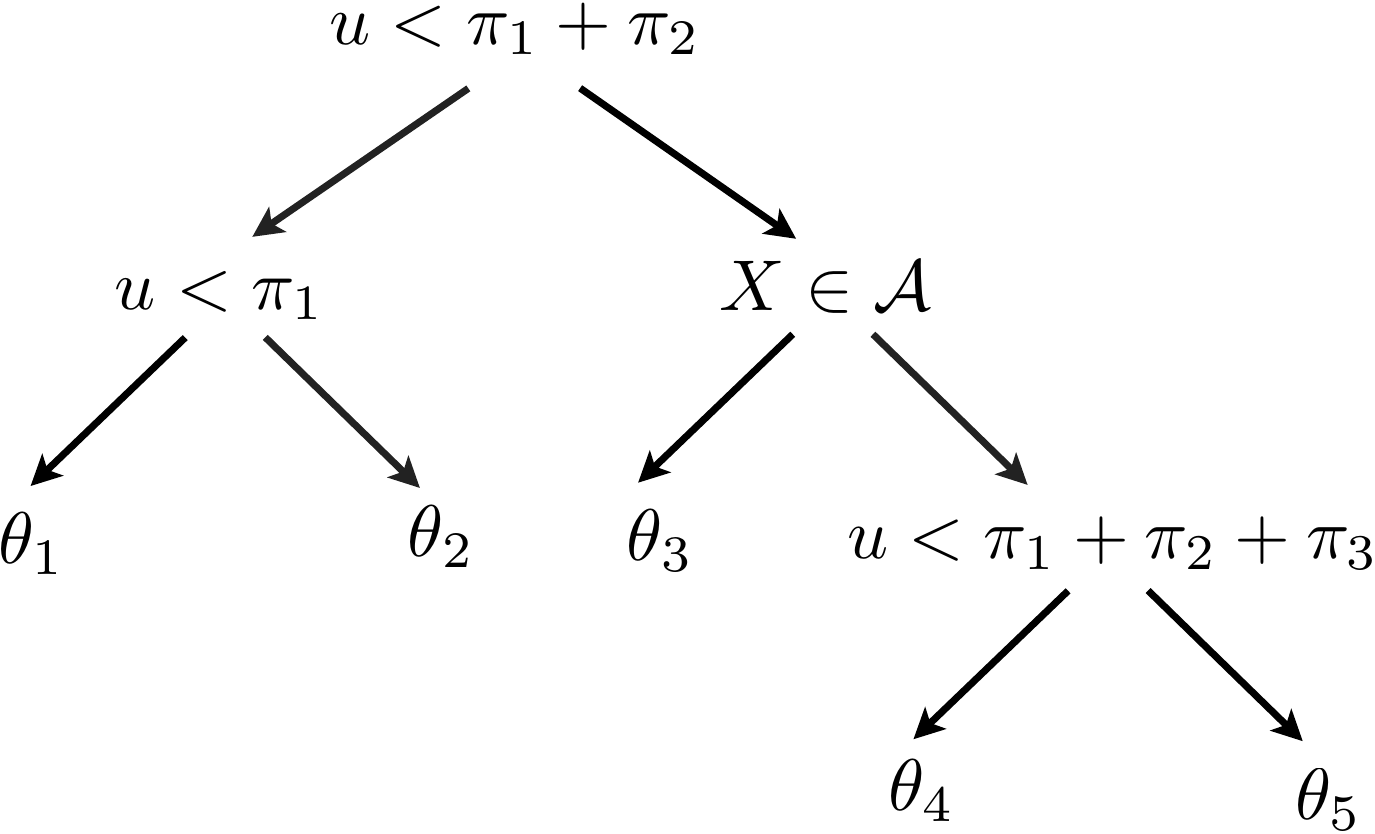} 
}
\end{center}
\caption{Modifying the tree in Fig. \ref{fig:stepg} to incorporate covariates}
\label{fig:stepfunsx}
\end{figure} 

To motivate the use of tree-based priors, recall that the normal location mixture model corresponds to a step function for $f$, which can also be represented as a binary tree (Fig. \ref{fig:stepg}). In the tree-based representation, incorporating covariates is simply a matter of introducing additional covariate-based splitting rules; Figure \ref{fig:stepfunsx} modifies the tree to split on $\x$. Marginalizing over the latent variable in this example gives the two conditional densities:
\begin{align}
p(y\mid \x\in A) =& \pi_1
\phi_\sigma(y - \theta_1)
+\pi_2\phi_\sigma(y - \theta_2)
+(\pi_3+\pi_4)\phi_\sigma(y - \theta_3)\\
p(y\mid \x\not\in A) =& \pi_1\phi_\sigma(y - \theta_1)
+\pi_2\phi_\sigma(y - \theta_2) + \pi_3\phi_\sigma(y - \theta_4) + \pi_4\phi_\sigma(y - \theta_5)
\end{align}

The resulting model has some interesting properties. There are shared components and shared weights (e.g. $\theta_1,\theta_2$ and $\pi_1, \pi_2$) but also components and weights that are unique to each conditional distribution (e.g. $\pi_3+\pi_4$ versus $\pi_3, \pi_4$). This allows borrowing of information across the covariate space, with the degree of borrowing controlled by the tree structure. Trees with multiple interior splits on $\x$ allow the model to capture multiscale structure, as the degree of borrowing varies across the covariate space. In the extreme case, a tree that splits on $\x$ before splitting on $U$ will yield two independent density estimates, while a tree that doesn't split on $U$ yields a standard Bayesian CART model.

In practice, a single tree model will probably be inadequate for many applications. With many covariates, nearly smooth mean functions, or highly skewed/multimodal distributions the tree will have to grow quite large. Additionally, the presence of the latent variable $U$ will tend to yield poor mixing and further complicate the design of MCMC algorithms for the single tree model (see \cite{Chipman1998,Denison1998} for discussion of the complications arising in MCMC for single tree models \emph{without} latent variables).

A BART prior for $\f$ is a natural alternative to a single tree model and additional flexibility can be obtained by modeling the variance function as well. For density regression we modify both BART priors from Section \ref{sec:hetbart} to include a latent variable $U$. 
The most general model for density regression with BART (DR-BART) is:
\begin{align}
U\sim U[0,1],&\quad \epsilon\sim N(0, \sigma_0^2)\\
Y = \f(\x, U) &+ \exp[v(\x, U)/2]\epsilon.
\end{align}
Letting $\sigma(\x, u) = \exp[v(\x, u)/2]\sigma_0$, the density function at $\x$ is
\beq
p(y\mid \x) = \int_0^1 \phi_{\sigma(\x, u)}(y - f(\bx, u))\,du,\label{eq:con}
\eeq
a location-scale mixture of normals. In addition to the full DR-BART model, two reduced models are potentially of interest:
\begin{enumerate}
\item  Location-only mixture (DR-BART-L): Constant bandwidth, $\sigma(\x, u)\equiv \sigma_0$
\item  Location mixture with heteroscedasticity (DR-BART-LH): Covariate-dependent \\bandwidth parameter, $\sigma(\x, u)\equiv \sigma(\x)=\exp[v(\x)/2]\sigma_0$
\end{enumerate}
Priors for the DR-BART parameters are specified as follows:
\begin{itemize}
\item $f$, the location function: The CGM BART prior with $k=2$ and $m = 250$ trees. We also require that the leaves of each tree contain at least 5 observations. In the context of DR-BART this condition is related to priors on mixture models for a single density that require all the components to be occupied (see e.g. \cite{Diebolt1994}).
\item $v$, the bandwidth function: The BART variance prior from Section \ref{sec:hetbart}, with 100 trees and $a_0$ (a hyperparameter for $\sigma(\x,u)$ defined in the Supplement) calibrated to a reasonable range, as described in the Supplement.

\item Depending on the model, $\sigma_0 ^ 2$ is either fixed (DR-BART, DR-BART-LH) or given a further inverse gamma prior (DR-BART-L): $\sigma_0 ^ 2 \sim IG(\nu_0/2, \nu_0\xi_0/2)$, with $\xi_0$ a prior guess at an appropriate bandwidth. In DR-BART-L, posterior inferences are insensitive to reasonable specifications of $\xi_0$, but allowing $\nu_0 \rightarrow 0$ puts too much mass near 0. Since $\sigma_0 ^ 2$ is not identified separately from the flexible variance function $\nu$ in the full DR-BART/DR-BART-LH models, the prior should be very informative and in practice fixing it at a sensible guess (some fraction of the sample variance or of the variance of the OLS residuals) seems to work well. It could also be elicited more formally. 

\end{itemize}
Results appear to be more or less insensitive to the numbers of trees in $f$ and $v$ provided they are large enough, and the values chosen here reflect experience with mean regression BART and the belief that variance functions are less complex than location functions. 

Before describing posterior sampling we will describe the properties of $f$ and $v$ and provide some intuition for their roles in the model.

\subsubsection{The location function $f$}

With a BART prior for $\f$, $\f(\x, u)$ decomposes as
\begin{align}
\f(\x, u) &= \sum_{h=1}^m g((\x, u), T_h, \M_h) \equiv \f_x(\x) + \f_{xu}(\x, u) + \f_u(u).\label{eq:decomp}
\end{align}
The three functions in \eqref{eq:decomp} are defined (from left to right) as the sum of the trees splitting only on $\x$, on both $\x$ and $u$, and only on $u$. These terms can capture covariate effects that are pure location shifts, covariate-dependent distributional features, and distributional features common across covariate space (respectively). Using posterior samples of the trees to try to infer which variables influence the responses and in what manner is somewhat fraught, however. For example, it is possible that a tree in $\f_{xu}(\x, u)$ split trivially on $U$ in that the leaf parameters on either side of the split are nearly identical or cancelled by the contribution of other trees. This is essentially the same difficulty reported by CGM in doing variables selection in BART mean regression by counting the trees splitting on a particular variable.  But \eqref{eq:decomp} does give some insight into how the model \emph{can} capture the complex effects that $\x$ may have on the distribution of $Y$, even with a constant bandwidth.

Since $\f$ and $v$ are step functions, this model is equivalent to a discrete mixture of normal distributions. But the prior is much different than the usual priors in covariate-dependent mixture models. First, the number of distinct mixture components with positive probability varies across covariate space like in the single tree model. Second, unlike in the single tree model, the components are correlated \emph{a priori}; given a fixed set of $m$ trees $T$ we have:
\begin{align}
\Cov(\f(\x, u),\f(\x', u')) &= \sigma^2_\mu \mathcal{N}[(\x, u), (\x', u')]\label{eq:fcov}\\
\Cor(\f(\x, u),\f(\x', u')) &= \mathcal{N}[(\x, u), (\x', u')]/m\label{eq:fcor}
\end{align}
where $\mathcal{N}[(\x, u), (\x', u')]$ is the number of trees where $(\x, u)$ and $(\x', u')$ are in the same leaf. It follows from  \eqref{eq:fcor} that if $|u-u'|>|u-u''|$, then
\begin{align}
\Cov(\f(\x, u),\f(\x, u'))&\geq\Cov(\f(\x, u),\f(\x, u''))\\
\Cor(\f(\x, u),\f(\x, u'))&\geq\Cor(\f(\x, u),\f(\x, u'')). 
\end{align}
Given the potentially strong correlation, it is misleading to think of the steps in $f$ as ``mixture components'' in the usual sense.

\subsubsection{The bandwidth function $v$}

A covariate-adaptive bandwidth parameter will often be important in these models. Since $U$ and $\epsilon$ are independent, in the DR-BART-L model $\Var(Y\mid \x) \geq \Var(\sigma_0\epsilon) = \sigma_0^2$. On the other hand, in the DR-BART-LH model $\Var(Y\mid \x) \geq \Var(\exp[v(\x)/2]\epsilon) = \sigma(\x)^2$. In a model without a covariate adaptive bandwidth, $\sigma^2_0$ must be at least as small as the most concentrated predictive density to avoid oversmoothing, yielding much rougher density estimates elsewhere. The efficiency of the MCMC sampler suffers as well, as smaller bandwidths imply more concentrated distributions for $U_i$.  A covariate dependent bandwidth might be preferable for this reason, even if a single bandwidth seems like a reasonable simplification.

The full DR-BART model has an additional degree of flexibility due to scale mixing. It can allow $v(\x, u)$ to grow large in some areas of of $U$-space, effectively ``turning off'' portions of $\f(\x, u)$ or capturing relatively flat areas of the density. The behavior of the different models is easiest to understand with an example, presented in Section \ref{sec:sims}.

\subsubsection{Posterior Sampling}

Generating samples from the posterior with MCMC is straightforward. Conditional on values for the latent variables $u=(u_1, u_2, \dots, u_n)$, DR-BART reduces to the heteroscedastic BART model so that sampling for the other parameters proceeds as described in the Supplement. The latent $u_i$ have full conditionals
\beq
p(u_i\mid -) \propto \phi_{\sigma(\x, u)}(y - f(\x, u))\ind{u \in \mathcal{B}(T, \vT)},\label{eq:ufc1}
\eeq
where $\mathcal{B}(T, \vT)$ is the set of possible $u$-values; that is, those that do not yield trees with leaves having fewer than 5 observations.  Since $f$ and $\sigma$ are step functions, \eqref{eq:ufc1} is piecewise constant, so $u_i$ can be updated with a Gibbs step: If $u^*_1 < u^*_2 < \dots  < u^*_k$ are the points of discontinuity of \eqref{eq:ufc1} and $u^*_0=0,\ u^*_{k-1}=1$, the Gibbs step first samples an interval from
\beq
\Pr(u_i\in (u^*_h, u^*_{h+1})\mid -) \propto 
\phi_{\sigma(\x, \tilde u_h)}(y - f(\x, \tilde u_h))\ind{ u \in \mathcal{B}(T, \vT)},\label{eq:ufcint}
\eeq
where $\tilde u_h = (u^*_{h+1} - u^*_h)/2$ (or any other point in the interval) and then $u_i$ is sampled uniformly from the selected interval. Note that $\ind{u \in \mathcal{B}(T, \vT)}$ necessarily equals either 0 or 1 on the entire interval $(u^*_h, u^*_{h+1})$ for each $h$ by construction. 

While conceptually simple, the Gibbs sampling update can require a large number of likelihood evaluations ($k$ is often well into the hundreds). 
On the other hand, a Metropolis step is difficult to tune because \eqref{eq:ufc1} is in general multimodal. An efficient alternative that doesn't require tuning is slice sampling which introduces latent variables $\omega_i$ so that
\beq
p(\omega_i,u_i\mid -) \propto \ind{
\omega_i < \phi_{\sigma(\x, u)}(y - \f(\x, u)}\ind{u \in \mathcal{B}(T, \vT)}.\label{eq:ufc2}
\eeq
Sampling proceeds using the techniques developed in \cite{Neal2003}. The slice sampler is much more efficient, which tends to make up for any loss in theoretical efficiency or mixing. 

\section{Theory}\label{sec:theory}
Here, we present some properties of the DR-BART model, showing that our proposed prior generates trees that are almost surely finite and upper bounding the rate at which DR-BART estimates the conditional density $p(y \mid \x)$. We focus on the special case where the predictors $\x_i\in [0,1]^p$ are continuous. This is a common assumption when studying theoretical properties of density regression models \citep[e.g.][]{pbss, Pati2013, modbart}. The assumption is even more innocuous here, as BART is invariant to monotone transformations of the covariates. Proofs can be found in the Supplement. 

\begin{thm}
\label{thm:finite}
Assume that the prior over a binary tree is as in CGM, but with a continuous uniform prior on splitting locations and no restriction to nonempty leaves. A tree sampled from this prior has finite depth with probability one.
\end{thm}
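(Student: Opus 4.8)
The plan is to view the random tree as the family tree of an inhomogeneous branching process and to show that its expected population at generation $d$ vanishes. Recall that in the CGM prior a node at depth $d$ (root at depth $0$) is split, independently of everything else, with probability $p_d = \alpha(1+d)^{-\beta}$ for fixed $\alpha\in(0,1)$ and $\beta>0$; when a node is split, two children at depth $d+1$ are created. The hypotheses of the theorem are exactly what make this a clean branching process: with continuous covariates, a continuous uniform prior on split locations, and no nonempty-leaf constraint, every node always has a valid split available, so the splitting indicator of each node really is an independent $\mathrm{Bernoulli}(p_d)$ variable and the depth of the tree does not depend on the split-location prior at all.

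First I would introduce $N_d$, the number of nodes at depth $d$, with $N_0 = 1$. Conditionally on $N_d$, each of those nodes splits independently with probability $p_d$, so $N_{d+1} = 2\,B_d$ where $B_d \mid N_d \sim \mathrm{Binomial}(N_d, p_d)$; taking expectations, $\E[N_{d+1}\mid N_d] = 2p_d N_d$. Iterating from the root,
\[
\E[N_d] \;=\; \prod_{j=0}^{d-1} 2p_j \;=\; (2\alpha)^d \prod_{j=1}^{d} j^{-\beta} \;=\; \frac{(2\alpha)^d}{(d!)^{\beta}}.
\]

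Next I would show $\E[N_d]\to 0$. The ratio of consecutive terms is $2\alpha(1+d)^{-\beta}\to 0$, so not only does $\E[N_d]\to 0$ but in fact $\sum_d \E[N_d] < \infty$; equivalently, by Stirling $(d!)^{-\beta}$ decays faster than any geometric sequence and dominates $(2\alpha)^d$. This is the only step that uses $\beta>0$, and it is genuinely needed: when $\beta = 0$ and $\alpha > 1/2$ the expected population grows geometrically and the tree can be infinite with positive probability, so the conclusion is special to $\beta>0$.

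Finally, since a node at depth $d$ forces the existence of an ancestor at every smaller depth, the events $\{N_d\ge 1\}$ are nested and decreasing in $d$, and the tree has infinite depth exactly on $\bigcap_{d} \{N_d\ge 1\}$. Hence, by continuity from above and Markov's inequality,
\[
\Pr(\text{infinite depth}) \;=\; \lim_{d\to\infty}\Pr(N_d\ge 1) \;\le\; \lim_{d\to\infty}\E[N_d] \;=\; 0,
\]
so the tree is almost surely of finite depth. I do not expect a real obstacle; the only points needing care are the bookkeeping that produces the factorial in $\E[N_d]$ and the observation that the hypothesis $\beta>0$ is indispensable.
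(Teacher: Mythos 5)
Your proposal is correct and follows essentially the same route as the paper: both compute $\E[N_d] = (2\alpha)^d (d!)^{-\beta}$ (you via iterated conditional expectations, the paper by summing over the $2^{d-1}$ potential node pairs at depth $d$) and then bound $\Pr(N_d \ge 1)$ by Markov's inequality, the paper invoking Borel--Cantelli where you use continuity from above on the nested events $\{N_d \ge 1\}$ --- an immaterial difference since the events are decreasing. Your side remark that $\beta > 0$ is genuinely needed (the claim fails for $\beta = 0$, $\alpha > 1/2$) is a correct and worthwhile observation that the paper leaves implicit.
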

Thus, introducing a latent $u$ into the prior does not affect the finite depth of the trees. 
We now focus on providing upper bounds for the posterior concentration rate $\epsilon_n$ of the posterior. A rate $\epsilon_n \downarrow 0$ is said to be a \emph{rate of convergence} of the posterior with respect to a divergence measure $h$ if there exists a positive constant $M$ such that $\Pi\{h(p_0, p_{f,\sigma}) \geq M \epsilon_n \mid \Data_n\} \to 0$ in $F_0$-probability, where $\Data_n = \{(\bX_i, Y_i) : i = 1\ldots,n\}$ and $(\bX_i, Y_i) \iid F_0(d\bx, dy) = p_0(y \mid \bx) \ dy \ F_\bX(d\bx)$.

The conditional density of $y$ induced by our model is given by the convolution
\begin{align*}
  p_{f,\sigma}(y\mid\x) =
  \int
  \phi_{\sigma(\x,u)}(y - f(\x, u)) \ du,
\end{align*}
and the limit $\sigma(\x,u) \to 0$ is associated with a random variable with quantile function $f(\x,u)$ if $f(\x,u)$ is monotonically increasing in $u$. This suggests that a reasonable strategy for establishing that $p_{f,\sigma}$ is close to $p_0$ is to show that $f(\x,u)$ is close to the true conditional quantile function $f_0(\x,u)$.

We characterize the concentration of DR-BART with respect to the integrated Hellinger distance, defined: $h(p, q) = \big(\int (\sqrt{p(y|\bx)} - \sqrt{q(y|\bx)}) ^ 2\ dy \ F_\bX(d\bx)\big) ^ {1/2}$. Two other divergence measures which will be useful for us are the \emph{generalized Kullback-Leibler divergences} $\KL(p_1 \| p_2) = \int p_1 \log(p_1 / p_2) \ dy \ F_\bX(d\bx)$ and $V(p_1 \| p_2) = \int p_1 \log^2(p_1 / p_2) \ dy \ F_\bX(d\bx)$.

To study the posterior concentration of DR-BART, we make use of results from (i) \cite{artofbart} involving the concentration of BART in a regression setting and (ii) \cite{dunson_density, gp} who leverage similar results about Gaussian processes to show convergence in a latent variable model similar to the one considered here. Our proof extends both of these works in fundamental ways. We extend \cite{artofbart} by introducing latent variables into the tree structure. While having immediate implications for DR-BART this also lays the groundwork for concentration results for latent variable BART models that we will consider in the future. Compared to \cite{dunson_density}, we not only introduce covariates into the setup, but also place a non-trivial DR-BART prior over the bandwidth, compared to their choice of a parametric, covariate-independent prior. 

Next, we outline conditions required for our proof. Throughout, we will write $a \lesssim b$ to mean that there exists a positive constant $C$, possibly depending on $p_0$ and on hyperparameters but otherwise independent of $n$, $p$, or any other variables, such that $a \le Cb$.

\begin{paragraph}{Condition F (on $p_0$)}
  We assume that $\log p_0(y \mid \x)$ is $\alpha$-H\"older as a map from $[0,1]^{p+1}$ to $\mathbb R$ for some $0 < \alpha \le 2$. Additionally, we assume $\log p_0$ is $d_0$-sparse in the sense that it depends on $(y,\x)$ only through the coordinates in $S_0 \subseteq \{1,\ldots,p+1\}$ where $|S_0| = d_0$,  $d_0 = o(\log(n))$, and $d_0\log p = o(n)$. Lastly, we assume that $||f_0||_\infty \lesssim \sqrt{\log n}$.
\end{paragraph}

\begin{remark}
  The assumption that $\log p_0$ is $\alpha$-H\"older implies that $p_0$ is bounded and bounded away from $0$. Condition F is used both to ensure that $p_0$ can be well-approximated with convolutions and to ensure that $\|f - f_0\|_\infty$ is small with sufficiently large prior probability. 
\end{remark}

\begin{paragraph}{Condition P (on $\Pi$)}
  Let $S \subseteq \{1,\ldots,p+1\}$ denote the coordinates of $\x$ which the trees $f(\x, u)$ and $v(\x, u)$ split on.
  \begin{itemize}
  \item[(P1)] The support set $S$ of $(f, \sigma)$ has prior $\pi(S) = \binom{p+1}{D}^{-1} \pi_D(D)$ where $D \equiv |S|$ and $\pi_D(d)$ is an \emph{exponentially decaying prior} satisfying $a_1 (p+1)^{-a_3} \pi_D(d - 1) \le \pi_D(d) \le a_2 (p+1)^{-a_4} \pi_D(d-1)$ for some positive constants $a_1,a_2,a_3,a_4$ and $d = 1,\ldots, p+1$.
  \item[(P2)] Given $S$, each tree $T_h, h = 1,\ldots,m$ and $T_h^{(v)}, h = 1, \ldots, m_v$ is assigned the branching process prior with splitting proportion $q(d) = \nu^d$ for some $\nu \in (0, 1/2)$.
  \item[(P3)] The leaf node parameters $\mu_{hl}$ of $T_h$ are assigned independent $N(0, \sigma^2_\mu)$ priors.
  \item[(P4)] The log-variance function is given by $v(\x, u) = \xi + \sum_{h=1}^{m_v} g(\x, u ; T_h^{(v)}, M_h^{(v)})$ where $e^{-\xi} \sim \Gam(a_\sigma, b_\sigma)$ and $\mu_{hl}^{(v)} \iid \pi_v$ where $\pi_v$ is a strictly positive density supported on an interval $[-V,V]$.
  \item[(P5)] Splits in the tree ensemble can occur only at a number $b_n$ of candidate split points $\mathcal Z_n \subseteq [0,1]^{p + 1}$, which are selected from uniformly. Additionally, $\log b_n \lesssim \log n$.
  \item[(P6)] For each $n$ there exists a decision tree $\widehat T$ and leaf node values $\widehat M$ built from the candidate split-points in $\mathcal Z_n$ such that the regression tree $f^\star(\x, u) = g(\x, u; \widehat T, \widehat M)$ satisfies $\|f_0 - f^\star\|_\infty \lesssim (\log n / n)^{\beta/(2\beta + d_0)}$ where $\beta = \min\{\alpha,1\}$.
  \end{itemize}
\end{paragraph}

\begin{remark}
  Though P4 constrains the support of $\pi_v$, $\xi$ is unbounded, allowing the variance function to have arbitrary scale even as the trees grant arbitrary flexibility in its shape.
\end{remark}

\begin{remark}
  The only assumption which is seemingly beyond our direct control is P6, which asserts that $f_0$ can be uniformly approximated with a single decision tree using the candidate split points $\mathcal Z_n$. \citet{artofbart} give several valid configurations of split points for which P6 would hold; for example, when $\mathcal{Z}_n$ is a regular grid of size $b_n \asymp n^{cp}$ for $c$ a sufficiently large constant if Condition F holds.
\end{remark}
Under these assumptions, we have the following theorem:

\begin{thm}
\label{thm:posterior_concentration}
Assume that Condition F and Condition P hold. Then, there exists a positive constant $M$ such that $\Pi\{h(p_0, p_{f,\sigma}) \ge M \epsilon_n \mid \Data_n\} \to 0$ in $F_0$-probability, where $\epsilon_n = (n / \log n)^{-\frac{\alpha}{\alpha+1}\times\frac{\beta}{2\beta+d_0}} + \sqrt{d_0\log(p + 1) / n}$, where $\beta = \min\{\alpha, 1\}$.
\end{thm}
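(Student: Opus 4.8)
The plan is to verify the three conditions of the standard posterior contraction machinery (Ghosal--Ghosh--van der Vaart, in the version for independent, non-identically distributed observations): (i) a prior-mass / Kullback--Leibler condition $\Pi\{\KL(p_0\,\|\,p_{f,\sigma})\lesssim\epsilon_n^2,\ V(p_0\,\|\,p_{f,\sigma})\lesssim\epsilon_n^2\}\gtrsim e^{-c_1 n\epsilon_n^2}$; (ii) a sieve $\mathcal F_n$ with $\Pi(\mathcal F_n^c)\lesssim e^{-c_2 n\epsilon_n^2}$ for $c_2$ large; and (iii) an entropy bound $\log N(\epsilon_n,\mathcal F_n,h)\lesssim n\epsilon_n^2$. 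Together these give $\Pi\{h(p_0,p_{f,\sigma})\ge M\epsilon_n\mid\Data_n\}\to 0$ in $F_0$-probability. After truncating $y$ to $|y|\lesssim\sqrt{\log n}$ (permissible since $\|f_0\|_\infty\lesssim\sqrt{\log n}$ by Condition F, $p_0$ having light tails), the induced density is a Gaussian location--scale mixture with location $O(\sqrt{\log n})$ and scale bounded above and below on $\mathcal F_n$, so $\log(p_0/p_{f,\sigma})$ is controlled there and the $\KL$ and $V$ neighbourhoods in (i) may be replaced, up to constants, by $\sup$-norm neighbourhoods of $(f,\log\sigma^2)$ about a target pair; the contribution of the tails of $y$ to $\KL$ and $V$ is handled by a standard (if delicate) computation for Gaussian-mixture sieves.

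\emph{Prior mass (i), the crux.} This splits into a deterministic approximation step and a prior anticoncentration step. Let $f_0(\x,u)$ be the conditional quantile function of $p_0(\cdot\mid\x)$, so that $p_{f_0,\sigma}(\cdot\mid\x)=p_0(\cdot\mid\x)\ast\phi_\sigma$ exactly, and fix a small constant bandwidth $\sigma$. By the $\alpha$-H\"older smoothness of $p_0$ and the vanishing first moment of the Gaussian kernel (with $\alpha\le 2$), $h(p_0,p_{f_0,\sigma})^2\lesssim\sigma^{2\alpha}$. Next, Condition P6 supplies a single regression tree $f^\star$ on the candidate splits with $\|f_0-f^\star\|_\infty\lesssim\delta_n:=(\log n/n)^{\beta/(2\beta+d_0)}$; by joint convexity of squared Hellinger distance together with $h(N(a,\sigma^2),N(b,\sigma^2))^2\le (a-b)^2/(4\sigma^2)$, one gets $h(p_{f_0,\sigma},p_{f^\star,\sigma})^2\lesssim\|f_0-f^\star\|_\infty^2/\sigma^2\lesssim\delta_n^2/\sigma^2$. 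Balancing by taking $\sigma=\sigma_n\asymp\delta_n^{1/(\alpha+1)}$ yields $h(p_0,p_{f^\star,\sigma_n})\lesssim\delta_n^{\alpha/(\alpha+1)}=(n/\log n)^{-\frac{\alpha}{\alpha+1}\cdot\frac{\beta}{2\beta+d_0}}$, the first term of $\epsilon_n$. It remains to lower-bound $\Pi\{\|f-f^\star\|_\infty\lesssim\delta_n,\ \sigma(\x,u)\asymp\sigma_n\}$ by $e^{-c_1 n\epsilon_n^2}$. The bound on $f$ follows by adapting the BART prior-concentration lemma of \citet{artofbart} to an ensemble splitting on $(\x,u)\in[0,1]^{p+1}$: from the prior's viewpoint the latent coordinate is an ordinary coordinate, so their construction (correct sparse support via P1, correct tree topologies via the branching-process prior P2 and candidate splits P5, leaf values near those of $f^\star$ via P3) carries over, and the support step contributes the $\sqrt{d_0\log(p+1)/n}$ term through $\log\binom{p+1}{d_0}$. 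The bandwidth constraint forces the variance trees trivial and $\xi\approx\log(\sigma_n^2/\sigma_0^2)$, so the required value of $e^{-\xi}$ is a large power of $n$ and the $\Gam(a_\sigma,b_\sigma)$ prior (P4) places mass $\asymp\exp(-c\,\sigma_n^{-2})$ there; this is $\ge e^{-c_1 n\epsilon_n^2}$ precisely because $\sigma_n^{-2}\asymp(n/\log n)^{\frac{2\beta}{(\alpha+1)(2\beta+d_0)}}$ while $n\epsilon_n^2\asymp(\log n)\,(n/\log n)^{1-\frac{2\alpha\beta}{(\alpha+1)(2\beta+d_0)}}$ and $\frac{2\beta}{2\beta+d_0}\le 1$ — the inequality that pins down the exponent $\frac{\alpha}{\alpha+1}\cdot\frac{\beta}{2\beta+d_0}$.

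\emph{Sieve and entropy (ii)--(iii).} Take $\mathcal F_n$ to restrict: the number of active predictors $|S|\le d_n\asymp n\epsilon_n^2/\log(p+1)$ (via the exponentially-decaying prior P1); each tree to have at most $L_n\asymp n\epsilon_n^2/\log n$ leaves (via Theorem~\ref{thm:finite} and the geometric leaf-count tails of the branching-process prior P2); each $|\mu_{hl}|\le A_n\asymp\sqrt{\log n}$ (Gaussian tails, P3); and $\xi$ to lie in a window of width $\asymp\log n$ (Gamma tails, P4), with the $\mu_{hl}^{(v)}$ already confined to $[-V,V]$ by P4. A union bound gives (ii). For (iii), densities in $\mathcal F_n$ are indexed by a choice of $S$ ($\log\binom{p+1}{d_n}\lesssim d_n\log(p+1)\lesssim n\epsilon_n^2$), by finitely many tree topologies on the $b_n$ candidate splits ($\log(\text{count})\lesssim(m+m_v)L_n\log b_n\lesssim n\epsilon_n^2$ since $\log b_n\lesssim\log n$), and by leaf and $\xi$ values in bounded boxes. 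Combining an $\eta$-net of those boxes with the Lipschitz estimate $h(p_{f,v},p_{f',v'})\lesssim\|f-f'\|_\infty/\underline\sigma_n+\|v-v'\|_\infty$ (again from the joint-convexity bound, with $\underline\sigma_n$ the sieve's lower bound on the bandwidth, a power of $n$), and taking $\eta$ a suitable power of $n$, gives $\log N(\epsilon_n,\mathcal F_n,h)\lesssim n\epsilon_n^2$.

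\emph{Main obstacle.} The principal difficulty is the prior-mass step, and within it two intertwined extensions of prior work: (a) the covariate-dependent convolution approximation through the conditional quantile function $f_0$ — new relative to \citet{dunson_density,gp}, which treated a single density with a parametric, covariate-free bandwidth, whereas here the approximation must be uniform in $\x$, the regularity of $f_0$ near $u\in\{0,1\}$ must be handled, and a genuine BART prior sits over the bandwidth; and (b) lifting the regression-BART prior-concentration and entropy bounds of \citet{artofbart} to an ensemble that splits on the latent coordinate $u$ while $u$ is \emph{integrated out} in the likelihood, so that it is the interaction between the mixing over $u$ and the piecewise-constant tree structure — not any one ingredient in isolation — that must be controlled. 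A secondary but essential piece of bookkeeping is verifying that the shrinking optimal bandwidth $\sigma_n\asymp\delta_n^{1/(\alpha+1)}$ is compatible with the $\exp(-c/\sigma^2)$ behaviour of the bandwidth prior near zero, which is what forces the particular exponent appearing in $\epsilon_n$.
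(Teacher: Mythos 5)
Your proposal follows essentially the same route as the paper's proof: the same G1--G3 verification, the same decomposition of the KL neighborhood into a convolution bias of order $\sigma^{2\alpha}$, a tree-approximation error $\|f-f_0\|_\infty/\sigma$, and a bandwidth prior-mass term, with the same balance $\sigma_n \asymp \delta_n^{1/(\alpha+1)}$ producing the exponent $\frac{\alpha}{\alpha+1}\cdot\frac{\beta}{2\beta+d_0}$, and the same sieve/entropy architecture lifted from \citet{artofbart} by treating $u$ as an extra coordinate. Your accounting of the bandwidth prior mass as $\exp(-c\sigma_n^{-2})$ (reading $\xi$ as the log-variance intercept) differs superficially from the paper's $\exp(-c\sigma_n^{-1})$, but as you observe the inequality $2\beta \le 2\beta+d_0$ makes either version compatible with $n\epsilon_n^2$, so that step is fine.

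There is, however, one concrete slip in your sieve. Bounding the leaf parameters by $A_n \asymp \sqrt{\log n}$ gives only $\Pi(\sup_{hl}|\mu_{hl}| > A_n \mid K_m \le K \ \forall m) \lesssim Km\, e^{-A_n^2/(2\sigma_\mu^2)} \asymp K m\, n^{-c}$, a polynomial rate. But G2 demands $\Pi(\mathcal F_n^c) \lesssim \exp\{-(C_2+4)n\epsilon_n^2\}$, and $n\epsilon_n^2$ grows like a positive power of $n$: the first term of $\epsilon_n$ gives $n\epsilon_n^2 \gtrsim n^{1-2\gamma}(\log n)^{2\gamma}$ with $\gamma = \frac{\alpha}{\alpha+1}\cdot\frac{\beta}{2\beta+d_0} < \frac{1}{2}$. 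No polynomial tail can beat that, so your sieve's complement is far too heavy. The fix is the paper's choice $U^2 = \kappa n\epsilon_n^2$ for the leaf bound (and the entropy cost is unaffected, since $\log(U/\eta) \lesssim \log n$ still holds for polynomially sized $U$ and net radius $\eta$). A second, smaller point: the correct Lipschitz estimate is $h(p_{f_1,\sigma_1},p_{f_2,\sigma_2}) \lesssim \sqrt{\|\log\sigma_1 - \log\sigma_2\|_\infty} + \|f_1-f_2\|_\infty / \inf\sigma$, with a square root on the log-variance term; this forces an $\epsilon_n^2$-net (not an $\epsilon_n$-net) for the variance trees and the intercept, which is harmless for the entropy count but is needed for the covering argument to actually deliver $h \lesssim \epsilon_n$.
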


\begin{remark}
  The gap between what is attainable by DR-BART-LH is larger than might be expected, as the rate is removed by a factor of $\alpha/(\alpha+1)$ from the minimax rate. The reason this occurs is that Condition F implies $f_0(\x, u)$ has H\"older smoothness of $\alpha+1 > 1$ in $u$, whereas BART is not known to be able to adapt to smoothness levels higher than 1. If we modify Condition F to state that $f_0(\x,u)$ is $\alpha+1$-smooth as a function of $(u,\x)$ (as opposed to just in $u$) then it is possible to show that replacing the BART model with the SBART model of \citet{sbart} gives the rate $n^{\alpha/(2\alpha + d_0)} \log(n)^{-\alpha(d_0+1)/(2\alpha+d_0)} + \sqrt{d_0 \log(p+1) / n}$ adaptively over $\alpha$ and $S_0$ for $\alpha \in (0, 2]$; extending these results to higher $\alpha$ using results of \citet{plummer2021statistical} is deferred to future work.
\end{remark}

As argued by \citet{modbart}, the posterior rate of convergence is (bounded by) a sequence $\epsilon_n$ with $n\epsilon_n^2 \to \infty$ if we can find positive constants $C_1, \ldots, C_4$ such that, for every sufficiently large $n$, there exists a set $\mathcal G_n$ of conditional densities satisfying the following:

\begin{itemize}
\item[(G1)] $\Pi\{p_{f, \sigma} \in \KL_{p_0}(C_1 \epsilon_n)\} \ge \exp\{-C_2 n \epsilon_n^2\}$, with $\KL_{p_0}(\epsilon) = \{p: \KL(p_0 \| p), V(p_0 \| p) \leq \eps ^ 2\}$
\item[(G2)]
  $\Pi(\mathcal G_n^c) \le C_3 \exp\{-(C_2 + 4) n \epsilon_n^2\}$.
\item[(G3)]
  $\log N(\mathcal G_n, \bar \epsilon_n, h) \le C_4  n \epsilon_n^2$ where $\bar \epsilon_n$ is a constant multiple of $\epsilon_n$ and $N(\mathcal G_n, \epsilon, d)$ denotes the $\epsilon$-covering number of $\mathcal G_n$ with respect to $d$ \citep[see, e.g.][]{ghosal_ghosh_vdv}.
\end{itemize}

The following two lemmas proved in the course of establishing Theorems 2 and 5 of \citet{artofbart} play a key role in establishing our results. The first ensures the prior on $f$ places sufficient mass around $f_0$ and is essential in establishing G1.
\begin{lemma}
  \label{lem:thick}
  Suppose that Condition F and Condition P hold and let $\delta_n = (\log n / n)^{\beta/(2\beta+d_0)}$. Then for sufficiently large $n$ we have
  \begin{align*}
    -\log \Pi(\|f - f_0\|_\infty \le \delta_n \mid S = S_0)
    \lesssim n \delta_n^2.
  \end{align*}
\end{lemma}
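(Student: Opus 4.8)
\textbf{Proof proposal for Lemma~\ref{lem:thick}.} The plan is to run the prior‑thickness argument of \citet{artofbart}, treating the latent variable $u$ as an ordinary $(p+1)$‑st input coordinate and exploiting the conditioning on $S=S_0$. Because $u\sim U[0,1]$ is processed by the tree‑splitting mechanism exactly like a continuous covariate, none of the combinatorial ingredients change, and all of the approximation‑theoretic content is supplied by Condition~P6: there is a single decision tree $(\widehat T,\widehat M)$, with splits only at candidate points of $\mathcal Z_n$ and only on coordinates in $S_0$, whose output $f^\star=g(\cdot\,;\widehat T,\widehat M)$ satisfies $\|f_0-f^\star\|_\infty\lesssim\delta_n$; moreover one can take $\widehat T$ to have a number of leaves $K_n\lesssim n\delta_n^2/\log n$, as in the constructions underlying P6 (explicitly, the regular‑grid construction of the preceding remark). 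Up to a harmless constant rescaling of $\delta_n$ (which changes $n\delta_n^2$ only by a constant) we may assume $\|f_0-f^\star\|_\infty\le\delta_n/2$, so it suffices to lower bound $\Pi\big(\|f-f^\star\|_\infty\le\delta_n/2\mid S=S_0\big)$.

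I would lower bound this probability by restricting to one favorable configuration of the ensemble $f=\sum_{h=1}^m g(\cdot\,;T_h,M_h)$: every tree $T_h$ is required to have exactly the topology and split rules of $\widehat T$, and the leaf parameters of $T_h$ are required to lie within $c\delta_n/m$ (entrywise) of $\widehat M/m$, for a small constant $c$. (One may instead keep a single tree equal to $\widehat T$ and force the rest to be minimal under the branching‑process prior; the symmetric version avoids committing to a depth convention for $q$.) On this event $f-f^\star$ is constant on each of the $K_n$ cells of $\widehat T$ and has sup‑norm $\le c\delta_n\le\delta_n/2$. The probability of the configuration factors as the probability that each tree reproduces the structure of $\widehat T$, times the probability that all leaf parameters land in their boxes.

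For the structure: under (P2) a fixed tree shape has prior $\prod_{\text{internal}}\nu^{d}\prod_{\text{leaf}}(1-\nu^{d})$, and conditional on the shape each of the $K_n-1$ internal splits must pick the correct coordinate (one of $d_0$) and the correct split point (one of $\le b_n$), by (P5). Using $\log b_n\lesssim\log n$, $d_0=o(\log n)$, the crude bound that all depths of $\widehat T$ are $\lesssim\log K_n\lesssim\log n$, and $K_n\lesssim n\delta_n^2/\log n$, the structure probability for all $m$ trees is at least $\exp\{-Cm\,K_n\log(b_nd_0)\}\ge\exp\{-C'n\delta_n^2\}$, $m$ being the fixed number of trees. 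For the leaves: the $mK_n$ parameters are i.i.d.\ $N(0,\sigma_\mu^2)$ with $\sigma_\mu^2$ fixed, so by the elementary Gaussian small‑ball estimate their joint probability of landing in the prescribed boxes is at least $(\mathrm{const}\cdot\delta_n)^{mK_n}\exp\{-\tfrac{1}{2\sigma_\mu^2}\sum_{h,\ell}(\widehat M_\ell/m)^2\}$, where $\sum_{h,\ell}(\widehat M_\ell/m)^2=\tfrac1m\sum_\ell\widehat M_\ell^2\le\tfrac{K_n}{m}\|f^\star\|_\infty^2\lesssim\tfrac{K_n}{m}\log n$, since $\|f^\star\|_\infty\le\|f_0\|_\infty+\delta_n\lesssim\sqrt{\log n}$ by Condition~F. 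Because $\log(1/\delta_n)\lesssim\log n$ and $K_n\log n\lesssim n\delta_n^2$, this factor is at least $\exp\{-C''n\delta_n^2\}$. Multiplying the two factors and taking $-\log$ yields $-\log\Pi(\|f-f_0\|_\infty\le\delta_n\mid S=S_0)\lesssim n\delta_n^2$.

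The skeleton here is exactly \citet{artofbart}'s, so the main obstacle is bookkeeping rather than ideas: one must confirm that the P6 approximant can be realized \emph{simultaneously} with splits confined to $\mathcal Z_n$ on $S_0$‑coordinates and with $K_n\lesssim n\delta_n^2/\log n$ leaves — it is this leaf count that makes the tree‑structure cost and the Gaussian leaf‑prior cost land at the same order $n\delta_n^2$ — and one must track constants carefully enough that the exponential penalty $\sum_\ell\widehat M_\ell^2/(2\sigma_\mu^2)$ of the normal leaf prior stays $O(n\delta_n^2)$, which is precisely where the hypothesis $\|f_0\|_\infty\lesssim\sqrt{\log n}$ of Condition~F is used. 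The latent coordinate $u$ plays no special role, which is the point: the lemma is a drop‑in extension of the homoscedastic‑regression thickness bound to the augmented input $(\x,u)$, and conditioning on $S_0$ removes the variable‑selection prior (P1) from the present bound (it re‑enters as the $\sqrt{d_0\log(p+1)/n}$ term when the lemma is assembled into G1).
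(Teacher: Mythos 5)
Your argument is correct and is essentially the standard prior-thickness argument of Jeong and Ro\v{c}kov\'a, which is exactly what the paper relies on: the paper does not reprove this lemma but imports it directly as "proved in the course of establishing Theorems 2 and 5" of that reference, and your reconstruction (anchor on the P6 approximant, force each tree to replicate its shape and split points, Gaussian small-ball for the leaves, with the leaf count $K_n\lesssim n\delta_n^2/\log n$ balancing the structural and leaf-prior costs) matches that source's proof. The only caveat you already flag yourself --- that P6 must be realized with $O(n\delta_n^2/\log n)$ leaves and depth $O(\log n)$ --- is indeed satisfied by the regular-grid/$k$-d-tree construction mentioned in the paper's remark, so no gap remains.
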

The second lemma ensures that the support of the BART prior is ``small'' in a suitable sense, allowing us to verify G2 and G3.

\begin{lemma}
  \label{lem:sieve}
  Let $\mathcal F$ denote the collection of decision tree ensembles with $m$ trees which (i) split on no more than $d$ variables, (ii) have at most $K$ leaf nodes per tree, (iii) have at most $b_n$ candidate split points, and (iv) satisfy $\sup_{hl} |\mu_{hl}| \le U$. Then
  \begin{align*}
    \log N(\mathcal F, \epsilon, \|\cdot\|_\infty)
    \lesssim
    d \log(p + 1) + K \log\left( \frac{d^m b_n^m K U}{\epsilon} \right).
  \end{align*}

\end{lemma}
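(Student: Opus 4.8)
\emph{Proof proposal.} The plan is to separate each ensemble $f = \sum_{h=1}^m g(\cdot; T_h, M_h) \in \mathcal F$ into a \emph{discrete skeleton} --- the tree topologies, the choice of which $\le d$ coordinates are split on, and which of the $b_n$ candidate split points is used at each internal node --- together with the \emph{continuous leaf values} $\{\mu_{hl}\}$. I would enumerate the skeletons exactly (they form a finite set), cover the leaf values for each fixed skeleton by a product grid, take the union, and read off the log-cardinality. This is essentially the covering bound established en route to Theorems~2 and~5 of \citet{artofbart}; the presence of the latent coordinate $u$ changes nothing, since the trees simply live on a $(p+1)$-dimensional cube, which is already reflected in the ``$p+1$'' appearing throughout.

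\emph{Counting skeletons.} There are $\binom{p+1}{d} \le (e(p+1)/d)^d$ choices for the support set $S$, contributing $\log\binom{p+1}{d} \lesssim d\log(p+1)$. Each tree has at most $K$ leaves, hence at most $K-1$ internal nodes; the number of binary tree shapes with at most $K$ leaves is bounded by a Catalan-type quantity $\lesssim 4^K$, and at each of the at most $m(K-1) < mK$ internal nodes of the ensemble we pick one of $\le d$ variables and one of $\le b_n$ split points. Hence the number of skeletons is at most $\binom{p+1}{d}\cdot 4^{mK} d^{mK} b_n^{mK}$, with logarithm $\lesssim d\log(p+1) + mK\log(d b_n) = d\log(p+1) + K\log(d^m b_n^m)$, after absorbing the hyperparameter $m$ and the crude $4^{mK}$ factor into the implied constant. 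The point where restricting to $b_n$ candidate split points (hypothesis (iii), Condition~P5) is essential is exactly here: it makes the set of induced partitions finite, so we may enumerate it; with unrestricted split locations the partitions form a non-compact set and no such bound holds.

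\emph{Covering leaves, fixed skeleton.} For a fixed skeleton the ensemble is determined by at most $mK$ leaf values lying in $[-U,U]$. If $f'$ shares the skeleton of $f$ but has leaf values $\mu'_{hl}$, then at every point $f(\x,u) - f'(\x,u)$ is a sum of $m$ per-tree leaf differences, so $\|f - f'\|_\infty \le \sum_{h=1}^m \max_l |\mu_{hl} - \mu'_{hl}|$. Hence a product grid on $[-U,U]^{mK}$ with spacing $\epsilon/m$, of cardinality at most $(2mU/\epsilon + 1)^{mK}$, is an $\epsilon$-net in $\|\cdot\|_\infty$ of all ensembles with that skeleton; its log-cardinality is $\lesssim mK\log(mU/\epsilon) \lesssim K\log(KU/\epsilon)$.

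\emph{Combining.} The union over skeletons of these grids is an $\epsilon$-net of $\mathcal F$ consisting of elements of $\mathcal F$, so $\log N(\mathcal F, \epsilon, \|\cdot\|_\infty)$ is at most the sum of the two log-counts above, namely $\lesssim d\log(p+1) + K\log(d^m b_n^m) + K\log(KU/\epsilon) = d\log(p+1) + K\log\!\big(d^m b_n^m K U / \epsilon\big)$, which is the claim. The computation is routine; the one place that needs care is the ensemble Lipschitz bound and the attendant loss of a factor $m$ in passing from per-leaf to per-function accuracy, together with the conceptual observation that only the leaf values are continuous while the split structure is discrete --- which is what makes a clean $\|\cdot\|_\infty$-covering bound possible at all.
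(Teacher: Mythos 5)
Your proposal is correct and is essentially the argument the paper relies on: the paper does not prove Lemma~\ref{lem:sieve} itself but imports it from the entropy calculations in the proofs of Theorems 2 and 5 of \citet{artofbart}, and your skeleton-plus-grid decomposition (enumerate support sets, tree shapes, and split rules; cover the $mK$ leaf values on an $\epsilon/m$-grid in $[-U,U]$; sum the log-counts) is exactly that calculation, including the correct handling of the factor $m$ in the ensemble Lipschitz bound and the absorption of the Catalan factor into the implied constant. No gaps.
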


\noindent Finally, Lemma \ref{lem:link} connects $h(p,q)$, $\KL(p\|q)$, and $V(p\| q)$ to the supremum norm.

\begin{lemma}
  \label{lem:link}
  Suppose that Condition F holds. Then there exists a constant $C_{\KL}$ independent of $(n,p)$ such that, for sufficiently small $\epsilon$, we have
  \begin{align*}
    \KL_{p_0}(C_{\KL} \epsilon)
    \supseteq
    \{\sigma(\x, u) \equiv \sigma \textnormal{ is constant},
    \sigma \in (\epsilon^{1/\alpha}, 2\epsilon^{1/\alpha}),
    \|f - f_0\|_\infty \le \epsilon^{1+1/\alpha})
    \}.
  \end{align*}
  Additionally, for any bounded measurable functions $f_1, f_2, \log \sigma_1, \log \sigma_2: [0,1]^{p+1} \to \mathbb R$:
  \begin{align*}
    h(p_{f_1,\sigma_1}, p_{f_2, \sigma_2})
    \lesssim
    \sqrt{\|\log \sigma_1 - \log \sigma_2\|_\infty} + \frac{\|f_1 - f_2\|_\infty}{\inf_{\x,u} \sigma_1(\x,u) \wedge \sigma_2(\x,u)}.
  \end{align*}
\end{lemma}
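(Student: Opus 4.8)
The plan is to establish the two displayed inequalities separately, proving the Hellinger bound first because it is used in the proof of the Kullback--Leibler inclusion.

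\textbf{The Hellinger bound.} Write $p_{f_i,\sigma_i}(\cdot\mid\x) = \int_0^1 q^{(i)}_{u,\x}\,du$, where $q^{(i)}_{u,\x}$ is the $N(f_i(\x,u),\sigma_i(\x,u)^2)$ density. Since the Bhattacharyya affinity $(p,q)\mapsto\int\sqrt{pq}$ is jointly concave, the squared Hellinger distance is jointly convex, so Jensen's inequality applied to the mixing measure $du$ on $[0,1]$ gives $h^2(p_{f_1,\sigma_1}(\cdot\mid\x),p_{f_2,\sigma_2}(\cdot\mid\x)) \le \int_0^1 h^2(q^{(1)}_{u,\x},q^{(2)}_{u,\x})\,du$. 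For two univariate normals the affinity equals $\sqrt{2\tau_1\tau_2/(\tau_1^2+\tau_2^2)}\,\exp\{-(\mu_1-\mu_2)^2/4(\tau_1^2+\tau_2^2)\}$, so I would split $h^2$ of the two Gaussians into a scale defect $1-\sqrt{2\tau_1\tau_2/(\tau_1^2+\tau_2^2)}$ and a location defect bounded by $(\mu_1-\mu_2)^2/4(\tau_1^2+\tau_2^2)$. Writing the scale defect as $1-(\cosh s)^{-1/2}$ with $s=\log(\tau_1/\tau_2)$, it is $\lesssim s^2\wedge 1\le|s|\le\|\log\sigma_1-\log\sigma_2\|_\infty$, while the location defect is at most $\|f_1-f_2\|_\infty^2\big/\big(4\,(\inf_{\x,u}\sigma_1(\x,u)\wedge\sigma_2(\x,u))^2\big)$. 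Both bounds are uniform in $(\x,u)$, so they pass through the integral over $u$ and then the integral over $\x$ against $F_{\bX}$; taking square roots and using $\sqrt{a+b}\le\sqrt a+\sqrt b$ yields the claimed inequality.

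\textbf{The Kullback--Leibler inclusion.} Take $f_0(\x,\cdot)$ to be the conditional quantile function of $p_0(\cdot\mid\x)$, so that $p_{f_0,\sigma}(\cdot\mid\x) = \phi_\sigma * p_0(\cdot\mid\x)$ for constant $\sigma$. I would then chain two approximations. First, a smoothing estimate: since $\log p_0$ is $\alpha$-H\"older with $\alpha\in(0,2]$, $p_0$ is bounded above and below on its support and inherits the $\alpha$-H\"older smoothness, so a Taylor--H\"older bound for $\phi_\sigma*p_0-p_0$ (with the first-order term killed by $\int z\,\phi(z)\,dz=0$ when $\alpha>1$) gives $\|\phi_\sigma*p_0-p_0\|_\infty\lesssim\sigma^\alpha$; with the density ratio then bounded below by a constant, $\KL(p_0\|p_{f_0,\sigma})$ and $V(p_0\|p_{f_0,\sigma})$ are $\lesssim\sigma^{2\alpha}$, which with $\sigma\in(\epsilon^{1/\alpha},2\epsilon^{1/\alpha})$ is $\lesssim\epsilon^2$. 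Second, a perturbation estimate in $f$: since $\|f-f_0\|_\infty\le\epsilon^{1+1/\alpha}=\epsilon\cdot\epsilon^{1/\alpha}\ll\sigma$, the second (Hellinger) inequality of Lemma~\ref{lem:link} applied with equal, constant bandwidths gives $h(p_{f_0,\sigma},p_{f,\sigma})\lesssim\|f-f_0\|_\infty/\sigma\le\epsilon$. Combining the two via the triangle inequality for $h$ yields $h(p_0,p_{f,\sigma})\lesssim\epsilon$.

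To convert this into the required bounds on $\KL(p_0\|p_{f,\sigma})$ and $V(p_0\|p_{f,\sigma})$, I would bound the density ratio $p_0/p_{f,\sigma}$ by an absolute constant $M$ on the support of $p_0$: for $y$ there, the $u$-set $\{u:|f_0(\x,u)-y|<\sigma\}$ has $P_0(\cdot\mid\x)$-probability $\gtrsim\sigma\,p_0(y\mid\x)$ by continuity of $p_0$, and since $\|f-f_0\|_\infty\ll\sigma$ we have $|f(\x,u)-y|<2\sigma$ on that set, whence $p_{f,\sigma}(y\mid\x)\gtrsim p_0(y\mid\x)$. Feeding $h(p_0,p_{f,\sigma})\lesssim\epsilon$ and $\|p_0/p_{f,\sigma}\|_\infty\le M$ into the standard inequalities $\KL(p_0\|p_{f,\sigma})\lesssim h^2(p_0,p_{f,\sigma})(1+\log M)$ and $V(p_0\|p_{f,\sigma})\lesssim h^2(p_0,p_{f,\sigma})(1+\log M)^2$ (such as those in \citet{ghosal_ghosh_vdv}) gives $\KL,V\lesssim\epsilon^2$, and taking $C_{\KL}$ large enough to absorb all implied constants completes the argument.

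\textbf{The main obstacle.} The delicate step is the smoothing estimate together with the density-ratio lower bound: both require controlling $\phi_\sigma*p_0$ (and $p_{f,\sigma}$) near the ends of the support and in the tails, where convolving a density that is bounded away from zero with a Gaussian produces order-one relative errors in a layer of width of order $\sigma$ and where $p_0/p_{f,\sigma}$ could in principle grow. This is where Condition F does its work---the H\"older control on $\log p_0$, the fact that $p_0$ is bounded above and below, and the bound $\|f_0\|_\infty\lesssim\sqrt{\log n}$ confining the region that must be tracked---and arranging these estimates so that no stray logarithmic factors survive, so that one truly obtains $\le(C_{\KL}\epsilon)^2$ rather than $\epsilon^2$ times a power of $\log n$, is the technical crux.
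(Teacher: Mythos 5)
Your proposal is correct and follows essentially the same route as the paper: the Hellinger bound via the affinity of normal mixtures (your Jensen step is the paper's Cauchy--Schwarz/Fubini step) split into a scale defect bounded by $|\log\sigma_1-\log\sigma_2|$ and a location defect bounded by $\|f_1-f_2\|_\infty^2/(4\inf\sigma^2)$, and the KL inclusion via the triangle inequality through $p_{f_0,\sigma}=\phi_\sigma * p_0$, the $\sigma^\alpha$ smoothing estimate, the $\|f-f_0\|_\infty/\sigma$ perturbation bound, and the Ghosal--van der Vaart conversion of Hellinger to $\KL$ and $V$ under a uniform density-ratio bound. Your direct ``mass of $\{u:|f_0(\x,u)-y|<\sigma\}$'' argument for $p_{f,\sigma}\gtrsim p_0$ is a minor variant of the paper's Lemmas~\ref{lem:bounding-below} and~\ref{lem:ghoshbound}, and the boundary-layer issue you flag as the crux is present (and similarly glossed) in the paper's own Lemma~\ref{lem:kruj}.
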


Straightforward application of Lemmas \ref{lem:thick} -- \ref{lem:link} suffices to verify G1 -- G3. The first part of Lemma \ref{lem:link} allows us to decompose the probability of the KL ball in G1 into pieces that can be bounded by Lemma \ref{lem:thick}. To verify G2, we consider a sieve defined by individual sieves for $f$, $v$, and $\xi$; the second part of Lemma \ref{lem:link} allows us to bound the Hellinger distance between a conditional density and an element in this sieve and Lemma \ref{lem:sieve} then ensures that the entropy is appropriately bounded. Given the sieve, G3 follows given Condition P. 

\section{Simulations}\label{sec:sims}
Here, we evaluate how well DR-BART and other methods estimate conditional densities and appropriately express uncertainty about these estimates. We do so via variants of a challenging univariate example. Additional simulation details can be found in the Supplement.

\subsection{Simulation 1: Contrasting DR-BART Models}\label{sec:sim_compare_drb}
Here, we introduce our basic simulation setup and gain insight into the differences between DR-BART-L, DR-BART-LH, and the full DR-BART model, before comparing to other methods. We consider a challenging example with a single regressor: 
%
\begin{gather}\label{eqn:DGP}
X_i\overset{iid}{\sim} U(0,1), \, 
Y_i = f_0(X_i) + \epsilon_i(X_i)
\end{gather}
where 
\[f_0(\mathbf{x}) = 5\exp[15(x-0.5)]/(1+\exp[15(x-0.5)]) - 4x\] and $\epsilon_i(X_i)$ is given by
\begin{gather*}
p(\epsilon \mid X=x) = \lambda(x)N(\epsilon; 2x-0.6, 0.3^2) + (1-\lambda(x))p_{G}(\exp(\epsilon), 0.5+x^2, 1.0)\exp(\epsilon), 
\end{gather*}
where $\lambda(x) = \exp[-10(x-0.8)^2]$ and the second component of $p(\eps \mid x)$ is a log-Gamma distribution with scale $1$ and shape $0.5+x^2$. Figure \ref{fig:drex-quant} shows selected quantile processes and conditional densities. The log-Gamma component is skewed and heteroscedastic, and the normal component is much more concentrated. The conditional distributions are nearly all unimodal, but for  $x$ values around 0.4 -- 0.6 the density is quite peaked around the mode with a heavy left tail. At $x=0.8$, $\epsilon$ has exactly a $N(1, 0.3 ^ 2)$ distribution.

\begin{figure}[h!]
\begin{center}
{\centering \includegraphics[height=.45\linewidth,width=.475\linewidth]{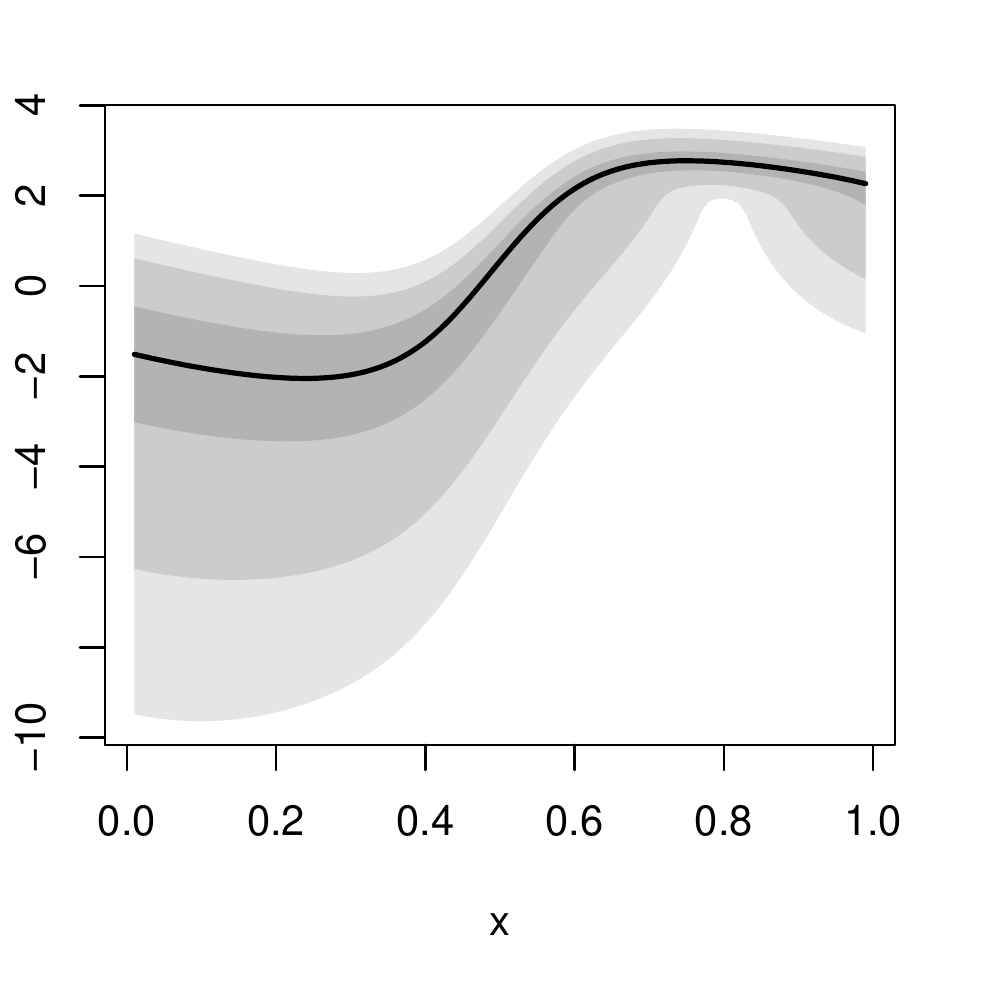} 
\includegraphics[height=.45\linewidth,width=.475\linewidth]{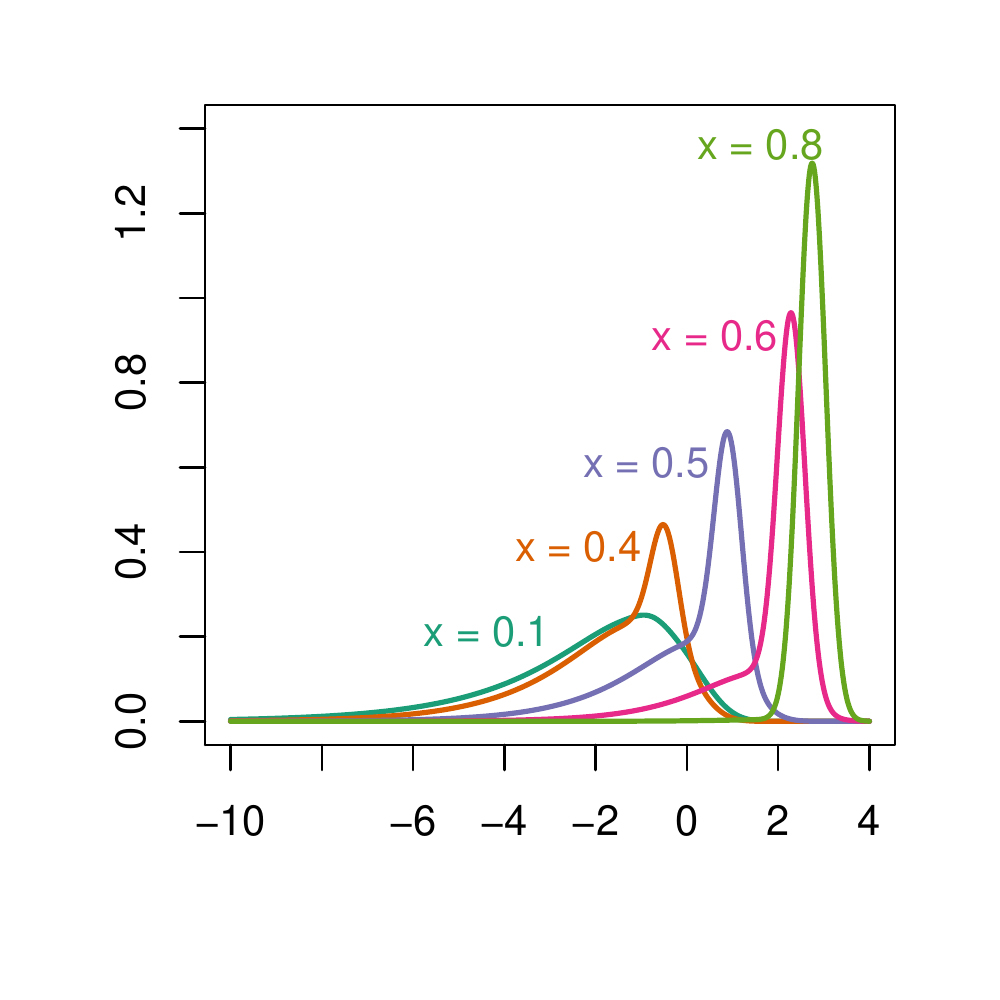} 
}
\end{center}
\caption{(Left) Conditional quantiles of the example in Section \ref{sec:sims}. Bands are equal-tailed 98\%, 90\%, and 50\% intervals as a function of $x$. The solid line is the median process. (Right) The conditional pdf for selected $\x$ values.}
\label{fig:drex-quant}
\end{figure} 

Representative results for a single dataset are shown in Figure \ref{fig:drex-result}, which shows the estimated predictive density at $x = 0.1, 0.5, 0.8.$ All three models do well when $x = 0.8$; BART captures the nonlinear mean function particularly well since the noise is low and symmetric, and the variance is well-estimated since a peak of approximately this width is present in many of the conditional densities. However, DR- BART-L predictably struggles at $x = 0.1$ when the peak disappears and the density becomes much more diffuse and skewed. Some of this severe multi-modality can be mitigated by adjusting the prior to include more / larger trees. But the fundamental problem is with the single bandwidth parameter: it must be low to capture the peak, which means that the spread in $p(y | x = 0.1)$ has to be captured by $f(0.1,u)$ varying greatly in $u$, yielding rougher densities.

\begin{figure}[h!]
\centering 
\includegraphics[height=.9\linewidth,width=.95\linewidth]{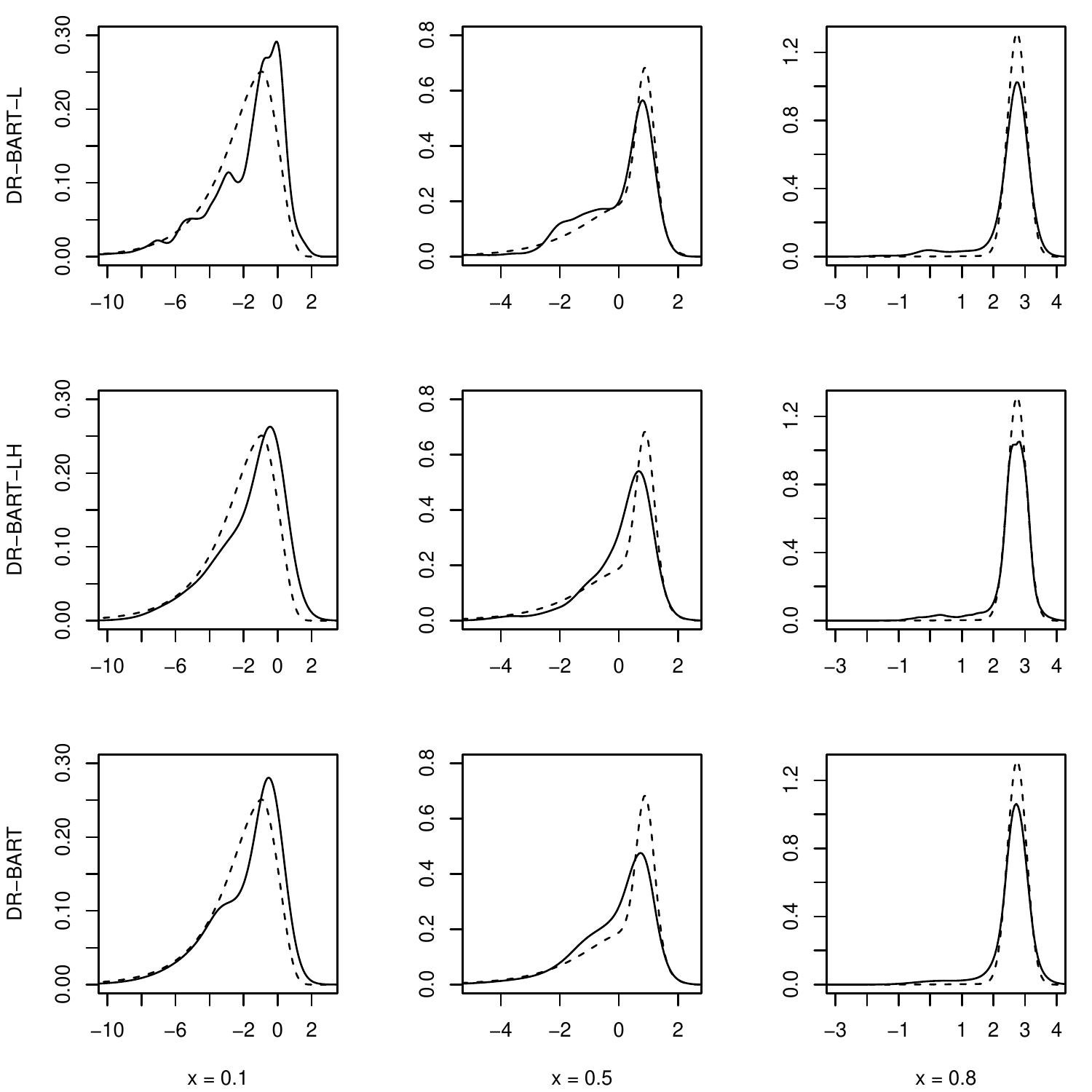} 
\caption{Representative true (dotted) and estimated (solid) densities from Simulation 1.}
\label{fig:drex-result}
\end{figure}

DR-BART-LH and DR-BART perform much better, with the full model doing slightly better, particularly when $x = 0.1$, since it is a location-scale mixture and can capture the long left tail by splitting on $u$ in the variance function to add high-variance “components”. The differences are fairly modest though, and the increase in computation time is {\raise.17ex\hbox{$\scriptstyle\sim$}}25\%. Given its comparable performance, DR-BART-LH is a viable alternative to the full model, especially when most of the conditional densities are not severely multimodal or skewed.

\subsection{Competing Methods}
We now compare the full \textbf{DR-BART} model to a variety of other methods for conditional density estimation. First, we compare to the Probit Stick-Breaking Process Mixture (\textbf{PSBPM}) in \cite{pbss}. Chung et al. model the conditional density $p(y|\mathbf{x})$ via an infinite mixture of normal linear regressions according to mixing distributions $P_x$:
\[p(y|\mathbf{x}) = \int \phi_{\tau}(y - \x'\beta)\,dP_{\mathbf{x}}\]
where the prior over the $P_{\mathbf{x}}$ is defined by a covariate-dependent stick-breaking prior.

Second, we compare to the Soft BART Density Sampler (\textbf{SBART-DS}) of \citet{modbart} which models conditional densities by modulating a base model $h(y|\mathbf{x}, \theta)$ via a link function $\Phi(\mu):$ $p(y|\mathbf{x}, \theta) \propto {h(y|\mathbf{x}, \theta)\Phi\{r(y, \mathbf{x})\}}$.
There are many possible choices for $h$, $r$, and $\Phi$; Li et al. choose to center their conditional densities on a normal linear regression: $h(y|\mathbf{x}, \theta) = \phi_{\sigma_\theta}(y - \x'\beta_\theta)$ and $\Phi$ to be a probit link. They take $r$ to be the weighted sum of Soft Additive Regression Trees \citep{sbart} with random Fourier expansions approximating a Gaussian process in the leaf nodes, as in \citet{tsbart}.

Third, we compare to a Dirichlet Process Mixture Model (\textbf{DPMM}) of \citet{dpmm} that models $(\mathbf{x}, y)$ as jointly normal and computes the implied conditional of $y$ on $\mathbf{x}$.

Lastly, we compare to a generalization of \citet{gp} that incorporates covariates into the transfer function $\mu$, which has a Gaussian process prior (\textbf{DR-GP}). This approach is similar to ours in its use of a latent $u$ to perform conditional density estimation. It differs in its use of 1. a Gaussian process to map $(u, \mathbf{x})$ to the observed response and 2. a homoscedastic, inverse Gamma prior on $\sigma$, instead of our BART priors on each.

\subsection{Simulation 2: Univariate Example}\label{sec:sim_uni}
\label{sec:sim1}
Here, we consider the same simulation as above. We simulate $n = 800$ data points from model \eqref{eqn:DGP} and fit all methods. Fits and credible intervals for a representative simulation are shown in Figure \ref{fig:sim1}. We see that DR-BART estimates the mean well with appropriate uncertainty quantification across all conditional densities, though it struggles to fully capture the peak of the normal density at $x = 0.8$ given how concentrated it is. Most the other methods are able to fit the densities at $x = 0.1$ and $x = 0.8$ reasonably well, but have difficulty with capturing both the peak and strong skew present in $p(y \mid x = 0.5)$. Results aggregated across 100 simulations are shown in Table \ref{tab:sim1}. As suggested by Figure \ref{fig:sim1}, DR-BART outperforms all other methods when $x \in \{0.1, 0.5\}$, but falls short of always capturing the peak at $x = 0.8$. DPMM estimates $p(y \mid x = 0.8)$ particularly well, which is to be expected given that it is built upon a normal specification.

We next assess how well the 95\% credible bands of each method cover the true $p(y \mid x)$. Table \ref{tab:sim1} also displays the proportion of simulations where the credible bands fully contain the true density within the true 95\% HDR interval. As Figure \ref{fig:sim1} suggests, DR-BART is able to do so consistently for $x = 0.1$ and $x = 0.5$, but not always for $x = 0.8$. For this and the next two simulations, the Supplement also contains information on credible band width and predictive coverage, which was consistently close to nominal for all methods but PSBPM.

\begin{figure}[h!]
    \centering
    \includegraphics[width=\textwidth]{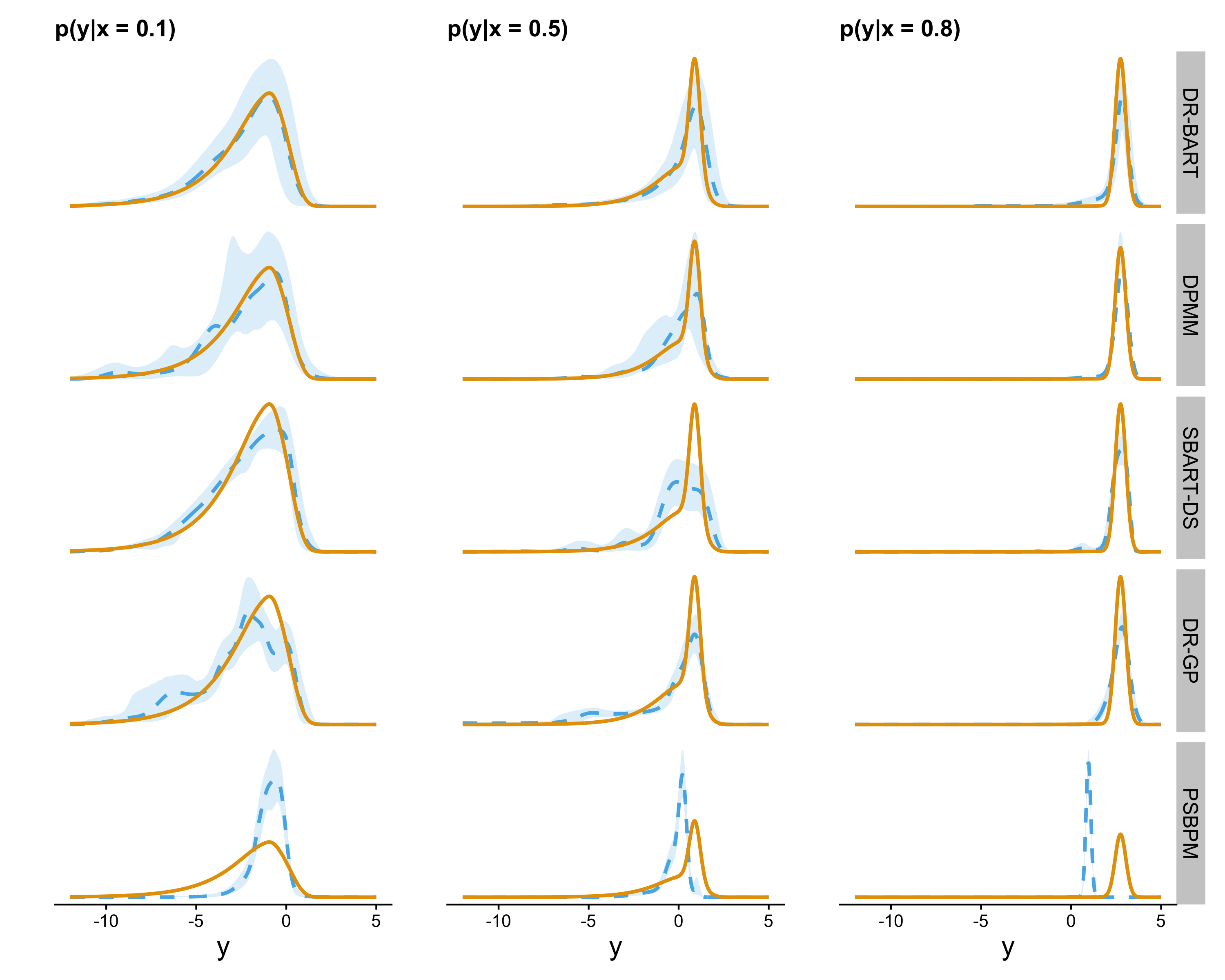}
    \caption{Representative true (solid) and estimated (dashed) densities from Simulation 2.}
    \label{fig:sim1}
\end{figure}

\begin{table}[hb]
\begin{adjustbox}{max width=1\textwidth,center}
\begin{tabular}{@{\extracolsep{1pt}} lcc|cc|cc|cc|cc} 
\toprule
& \multicolumn{2}{c}{\textsc{DR-BART}} & \multicolumn{2}{c}{\textsc{DPMM}} & \multicolumn{2}{c}{\textsc{SBART-DS}} &
\multicolumn{2}{c}{\textsc{DR-GP}} &
\multicolumn{2}{c}{\textsc{PSBPM}} \\
 [1ex] 
  & \textsc{Error} & \textsc{Coverage} & \textsc{Error} & \textsc{Coverage} & \textsc{Error} & \textsc{Coverage} &
  \textsc{Error} & \textsc{Coverage} & \textsc{Error} & \textsc{Coverage}  \\ 

\midrule
\large{$x = 0.1$ } & \large{1} & \large{0.93} & \large{1.67} & \large{0.75} & \large{1.65} & \large{0.48} & \large{2.74} & \large{0.02} & \large{9.80} & \large{0.00}\\
\large{$x = 0.5$} & \large{1} & \large{0.61} & \large{1.01} & \large{0.62} & \large{2.20} & \large{0.00} & \large{1.46} & \large{0.00} & \large{2.14} & \large{0.00}\\ 
\large{$x = 0.8$} & \large{1} & \large{0.19} & \large{0.28} & \large{0.85} & \large{0.79} & \large{0.00} & \large{1.03} & \large{0.00} & \large{2.27} & \large{0.00}\\ 
\bottomrule
\end{tabular} 
\end{adjustbox}
\caption{\footnotesize Error and Coverage for Simulation 2. Error is Wasserstein distance between true and estimated densities (normalized at each $x$ for interpretability), averaged over 100 simulations. Coverage is the proportion of simulations in which 95\% credible bands cover the truth within the true 95\% HDR region.}
\label{tab:sim1}
\end{table}

\subsection{Simulation 3: Irrelevant Covariates}\label{sec:sim_irrelevant}
Here, we generate data as above, but supply an additional 14 irrelevant, uniformly distributed variables, with pairwise correlations of 0.3, to each method. For each method, we evaluate the predictive density with the irrelevant variables fixed to $0.5$. Figure \ref{fig:sim2} and Table \ref{tab:sim2} illustrate the results. While the quality of DR-BART's fit decreases, it compensates by increasing the width of its credible bands to capture the truth. DPMM and DR-GP, which in the previous simulation fit better at $x = 0.1, 0.8$ do not, however, adjust their estimates of uncertainty to counter the substantial degradation in their fits. Perhaps most notable is that DR-BART still has good pointwise coverage, whereas the other methods are overconfident in their overfitting to the irrelevant covariates. We also note that the SBART-DS credible bands are not more erratic than in Simulation 1, reflecting BART's ability to perform variable selection. 


\begin{figure}[h!]
    \centering
    \includegraphics[width=\textwidth]{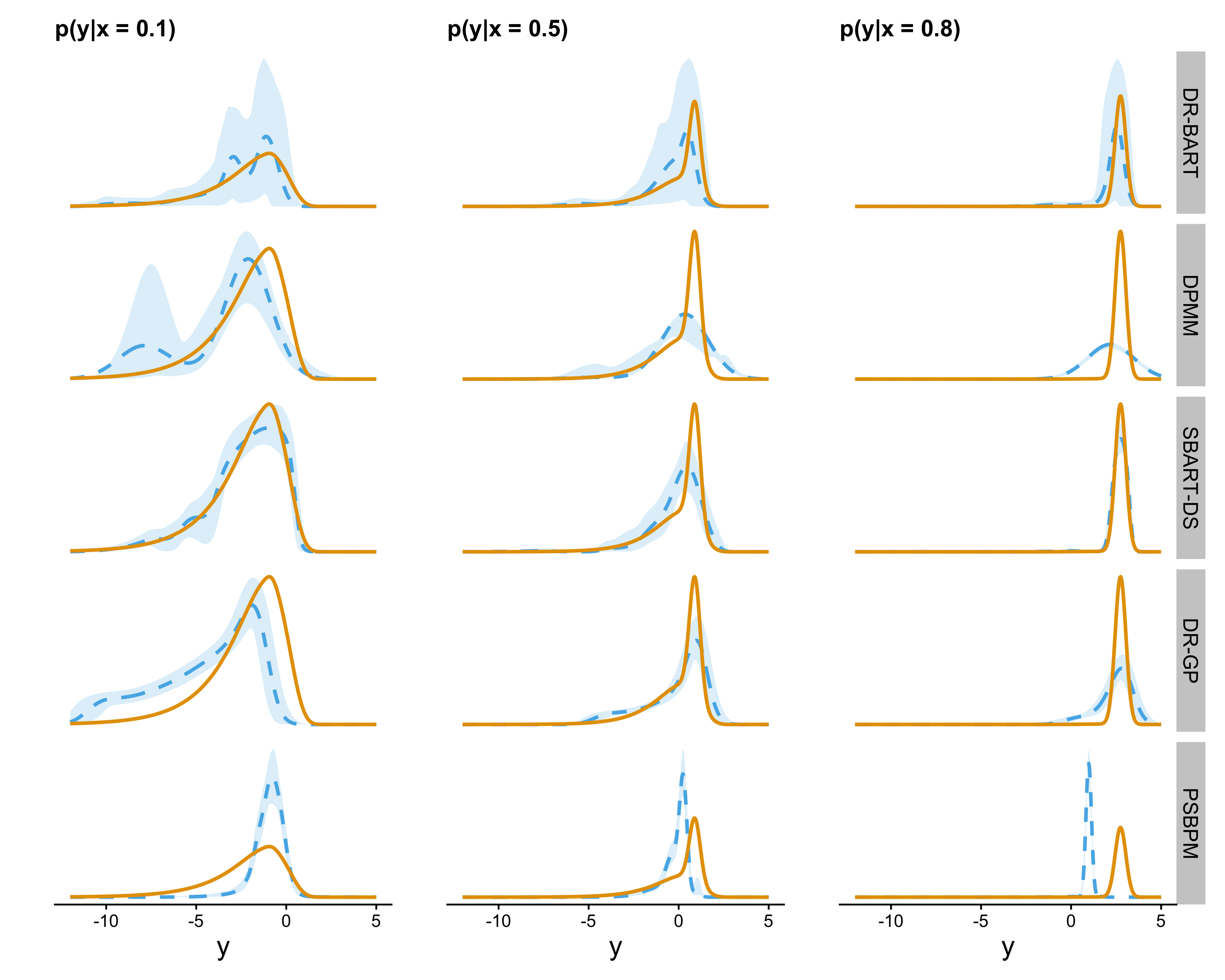}
    \caption{Representative true (solid) and estimated (dashed) densities from Simulation 3.}
    \label{fig:sim2}
\end{figure}

\begin{table}[ht]
\begin{adjustbox}{max width=1\textwidth,center}
\begin{tabular}{@{\extracolsep{1pt}} lcc|cc|cc|cc|cc} 
\toprule
& \multicolumn{2}{c}{\textsc{DR-BART}} & \multicolumn{2}{c}{\textsc{DPMM}} & \multicolumn{2}{c}{\textsc{SBART-DS}} &
\multicolumn{2}{c}{\textsc{DR-GP}} &
\multicolumn{2}{c}{\textsc{PSBPM}} \\
 [1ex] 
  & \textsc{Error} & \textsc{Coverage} & \textsc{Error} & \textsc{Coverage} & \textsc{Error} & \textsc{Coverage} &
  \textsc{Error} & \textsc{Coverage} & \textsc{Error} & \textsc{Coverage}  \\ 

\midrule
\large{$x = 0.1$ } & \large{1} & \large{0.78} & \large{1.10} & \large{ $0.00$ } & \large{0.78} & \large{0.46} & \large{2.63} & \large{0.00} & \large{5.58} & \large{0.00}\\
\large{$x = 0.5$}&\large{1} & \large{0.94} & \large{2.96} & \large{0.00} & \large{1.97} & \large{0.00} & \large{2.23} & \large{0.00} & \large{1.97} & \large{0.00}\\ 
\large{$x = 0.8$} & \large{1} & \large{1.00} & \large{1.60} & \large{0.00} & \large{0.56} & \large{0.02} & \large{2.52} & \large{0.00} & \large{1.70} & \large{0.00}\\ 
\bottomrule
\end{tabular} 
\end{adjustbox}
\caption{\footnotesize Error and Coverage for Simulation 3. Error is Wasserstein distance between true and estimated densities (normalized at each $x$ for interpretability), averaged over 50 simulations. Coverage is the proportion of simulations in which 95\% credible bands cover the truth within the true 95\% HDR region.}
\label{tab:sim2}
\end{table}

\subsection{Simulation 4: Insufficient Data}\label{sec:sim_insufficient}
Next, we explore the ability of DR-BART to appropriately express uncertainty about conditional distributions in regions of $x$-space where there is little data. To do so, we simulate $y$ given $x$ as in the previous experiments; however, instead of simulating $x$ uniformly, we simulate it from a mixture of uniforms: $0.475 \times U_{[0, 0.4]} + 0.05 \times U_{[0.4, 0.6]} + 0.475 \times U_{[0.6, 1]}$, where $U_{[a, b]}$ denotes the density of a $U(a, b)$ random variable. With only 5\% of the mass lying between $x = 0.4$ and $x = 0.6$, estimation of $p(y \mid x = 0.5)$ is much harder. With this experiment, we expect the performance of all methods to degrade; we are interested in whether their uncertainty grows appropriately to account for the lack of data. 

\begin{figure}[h!]
    \centering
    \includegraphics[width=\textwidth]{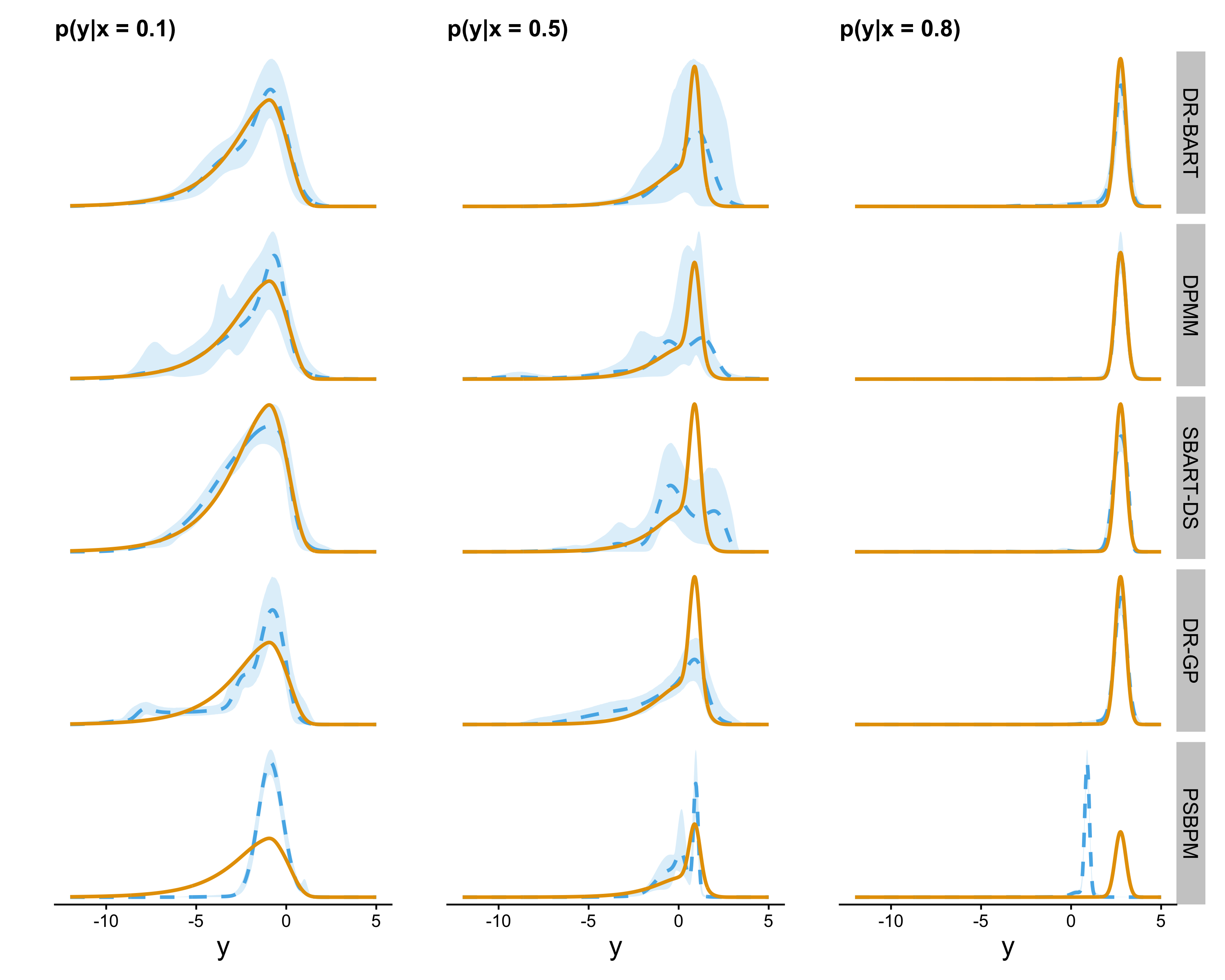}
    \caption{Representative true (solid) and estimated (dashed) densities from Simulation 4.}
    \label{fig:sim3}
\end{figure}

Figure \ref{fig:sim3} shows a representative simulation. Most of the methods perform better in estimating $p(y \mid x = 0.8)$, as there is now substantially more data in that region of the space. At $x = 0.5$, only DR-BART and DPMM increase their uncertainty in their estimation; the other methods have relatively poor fits with insufficient uncertainty about them. Table \ref{tab:sim3} summarizes this information across 50 simulations. DR-BART performs better in predicting $p(y \mid x = 0.5)$, but the relative performance for $x \in \{0.1, 0.8\}$ is comparable to that in Section \ref{sec:sim_uni}. Table \ref{tab:sim3} also supports the observation that DR-BART and DPMM are unique in increasing the uncertainty in their estimates in response to the lack of data. 

\begin{table}[t]
\begin{adjustbox}{max width=1\textwidth,center}
\begin{tabular}{@{\extracolsep{1pt}} lcc|cc|cc|cc|cc} 
\toprule
& \multicolumn{2}{c}{\textsc{DR-BART}} & \multicolumn{2}{c}{\textsc{DPMM}} & \multicolumn{2}{c}{\textsc{SBART-DS}} &
\multicolumn{2}{c}{\textsc{DR-GP}} &
\multicolumn{2}{c}{\textsc{PSBPM}} \\
 [1ex] 
  & \textsc{Error} & \textsc{Coverage} & \textsc{Error} & \textsc{Coverage} & \textsc{Error} & \textsc{Coverage} &
  \textsc{Error} & \textsc{Coverage} & \textsc{Error} & \textsc{Coverage}  \\ 

\midrule
\large{$x = 0.1$ }&\large{ $1$ }&\large{ $0.96$ }&\large{ $1.55$ }&\large{ $0.94$ }&\large{ $1.59$ }&\large{ $0.36$ }&\large{ $2.87$ }&\large{ $0.02$ }&\large{ 10.35 }&\large{ 0.00}\\
\large{$x = 0.5$ }&\large{ $1$ }&\large{ $0.54$ }&\large{ $1.34$ }&\large{ $0.56$ }&\large{ $1.91$ }&\large{ $0.02$ }&\large{ $1.34$ }&\large{ $0.00$ }&\large{ 1.26 }&\large{ 0.00}\\ 
\large{$x = 0.8$ }&\large{ $1$ }&\large{ $0.16$ }&\large{ $0.29$ }&\large{ $0.80$ }&\large{ $0.79$ }&\large{ $0.02$ }&\large{ $1.01$ }&\large{ $0.00$ }&\large{ 2.69 }&\large{ 0.00}\\ 
\bottomrule
\end{tabular} 
\end{adjustbox}
\caption{\footnotesize Error and Coverage for Simulation 4. Error is Wasserstein distance between true and estimated densities (normalized at each $x$ for interpretability), averaged over 50 simulations. Coverage is the proportion of simulations in which 95\% credible bands cover the truth within the true 95\% HDR region.}
\label{tab:sim3}
\end{table}

\subsection{Simulation 5: Comparison of BART Models}
Here, we further compare DR-BART to SBART-DS. Because both are built off of (S)BART models, we expect them to share many desirable properties, such as high flexibility and the ability to identify interaction effects and perform variable selection. SBART-DS, however, is centered on a base model, which may degrade the fit when the truth is far from the base model. Furthermore, there are cases in which prior information suggesting a reasonable base model may be lacking. To explore reliance on the base model, we run the original univariate simulation, but change the mean to be $f_0(x) = a(x - 0.5) ^ 2$. For small $a$, the mean is approximately linear for $x \in [0, 1]$, as specified by SBART-DS. But as $a$ increases, it becomes increasingly nonlinear and we expect performance to degrade. Using $n = 1500$, we run 50 simulations for each $a \in \{0, 1, 5, 15, 25, 35\}$, summarized in Table \ref{tab:bart_compare_wass}. For small $a$, both methods perform well. But the increasing nonlinearity of the mean as $a$ increases hinders SBART-DS' performance deteriorates, while DR-BART is unaffected.

\begin{table}[t]
\begin{adjustbox}{max width=1\textwidth,center}
\begin{tabular}{@{\extracolsep{1pt}} lcc|cc|cc|cc|cc|cc} 
& \multicolumn{2}{c}{$a = 0$} & \multicolumn{2}{c}{$a= 1$} & \multicolumn{2}{c}{$a= 5$} & \multicolumn{2}{c}{$a= 15$} & \multicolumn{2}{c}{$a= 25$} & \multicolumn{2}{c}{$a= 35$} \\
\toprule
& \textsc{DR-B} & \textsc{S-DS} & \textsc{DR-B} & \textsc{S-DS} & \textsc{DR-B} & \textsc{S-DS} & \textsc{DR-B} & \textsc{S-DS} & \textsc{DR-B} & \textsc{S-DS} & \textsc{DR-B} & \textsc{S-DS} \\
\midrule
  \multicolumn{1}{l|}{$x = 0.1$} &0.39 & 0.56 & 0.20 & 0.55 & 0.37 & 0.50 & 0.39 & 0.62 & 0.82 & 0.97 & 0.37 & 1.04\\ 
  \multicolumn{1}{l|}{$x = 0.5$} &0.76 & 1.09 & 0.74 & 0.85 & 0.89 & 0.91 & 0.39 & 1.50 & 0.71 & 1.57 & 0.80 & 2.20\\ 
  \multicolumn{1}{l|}{$x = 0.8$} &1.11 & 0.94 & 1.32 & 0.96 & 1.05 & 1.31 & 1.05 & 2.53 & 1.26 & 2.90 & 1.04 & 3.25\\ 
\bottomrule
\end{tabular} 
\end{adjustbox}
\caption{\footnotesize Wasserstein distance between true, estimated densities in Simulation 5, averaged over 50 simulations. As $a$ increases and the true mean becomes increasingly nonlinear, SBART-DS (S-DS) is hindered by its reliance on a linear base model, while DR-BART (DR-B) is unaffected .}
\label{tab:bart_compare_wass}
\end{table}

\section{Applications}\label{sec:ex}

We consider two applications of DR-BART. In both cases, we fit DR-BART-LH; the full DR-BART model gave similar results since both applications involve fairly well-behaved densities. For all models, $\sigma_0 ^ 2$ is set to be half the standard deviation of the OLS residuals and results seem to be insensitive to reasonable choices of this value. We also set $a_0 = \log(\sqrt{4})^{-2}$. For each model, 10,000 MCMC samples are collected after a burn-in of 10,000 iterations. The estimands of interest in both sections are functions of quantiles; we will use $Q(s \mid \x)$ to denote the quantile function of $P(y \mid \x)$. 

\subsection{Calculating Student Growth}\label{sec:app_edu}
We consider the estimation of student growth from a series of test scores, using anonymized mathematics test scores provided in the SGP R package \citep{Betebenner2011SGP}. \cite{Castellano2013} gives an overview of current mean and quantile regression approaches to this problem, as well as some consideration to the substantive problem of measuring student growth. Statistically, the problem reduces to estimating a series of conditional densities $p(y_t\mid y_{t-1}, y_{t-2},\dots,y_0)$. These distributions are heteroscedastic and tend to become skewed as they approach the extremes.  The current state of the art is the quantile regression methodology implemented in the SGP package and detailed in \cite{Betebenner2009}. The quantile regression models are specified as
\beq
Q_{y_t}(s\mid y_{t-1},\dots y_0) = \sum_{r=0}^{t-1} \eta_r^{(s)}(y_r).\label{eq:SGP}
\eeq
where each function $\eta_r^{(s)}(\cdot)$ is given a B-spline basis expansion. This model is fit at each of the 99 percentiles to approximate the conditional distribution. This procedure does not yield valid estimates for percentiles of the conditional distribution in general, due to quantile crossing (though this can be corrected post-hoc \citep{Chernozhukov2010}). Assessing uncertainty in this framework is challenging as well, and analysis generally relies solely on point estimates. Bayesian density regression addresses both problems simultaneously.

We fit independent DR-BART-LH models to scores from grades 4-7, each conditional on all the previous scores as well as the grade 3 scores. For simplicity we took a subsample of 3,000 students with complete data from grades 3-7. An interesting feature of this data is the interactions between previous test scores on the predictive distributions. One mechanism for this is regression to the mean: large jumps in test scores are unlikely to be sustained. Figure \ref{fig:interaction} gives an example, displaying posterior predictive densities for grade 5 scores when fixing grade 3 and 4 scores at all combinations of their marginal quartiles. In the absence of interactions the shift between the pairs of solid and dashed densities would be equal. Figure \ref{fig:interactionq} further suggests the presence of an interaction effect. When $y_4=455$, $y_3$ has a smaller effect on the quantiles of $p(y_5\mid y_4, y_3)$ than when $y_4 = 538$. Note that the additive model in \eqref{eq:SGP} cannot capture such an interaction and requires the curves to be equal.

\begin{figure}[h!]
\begin{center}
{\centering \includegraphics[height=.45\linewidth,width=.7\linewidth]{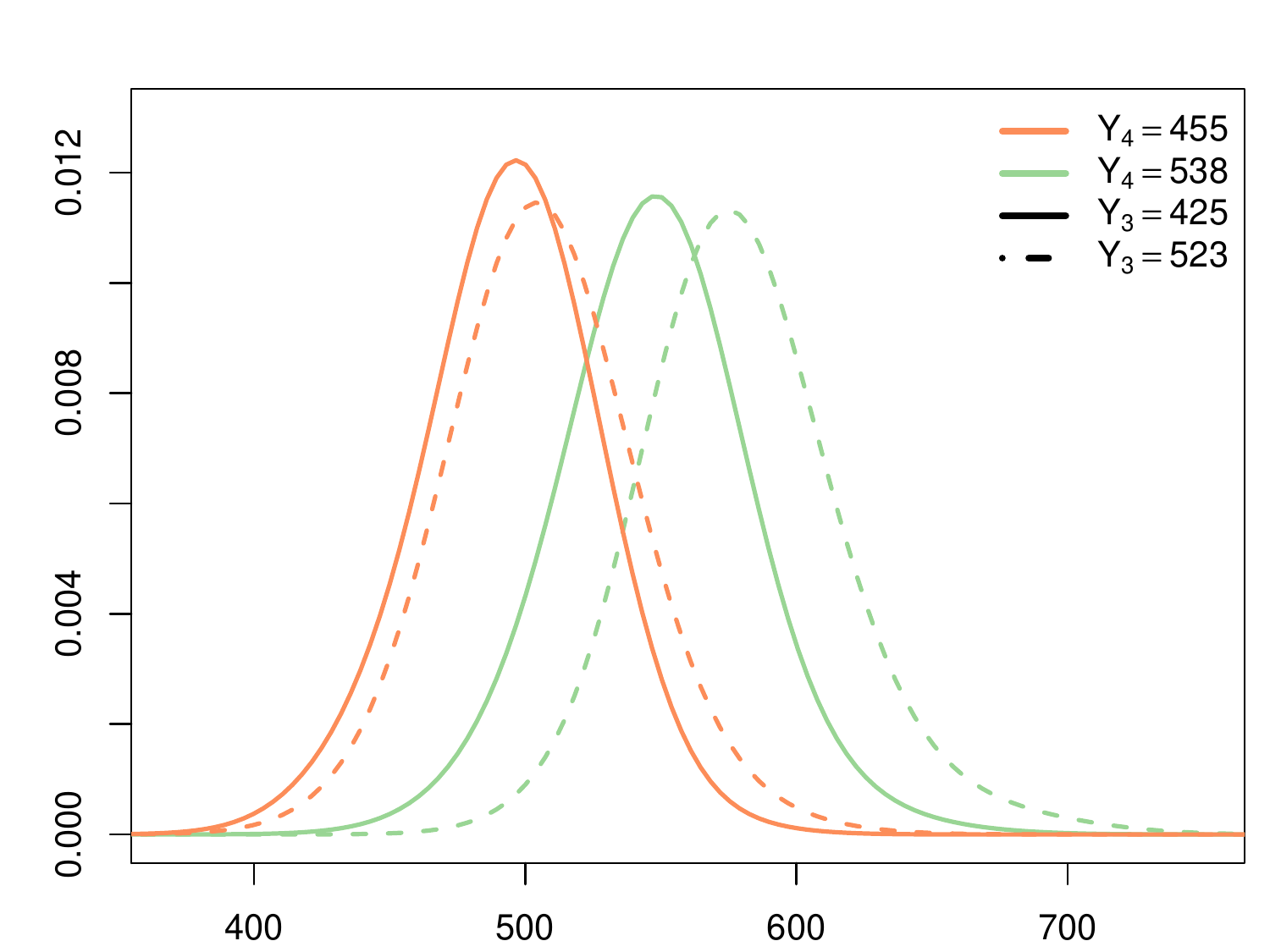} 
}
\end{center}
\caption{Four posterior predictive distributions for grade 5 scores obtained by setting grade 3 and 4 scores at each combination of their marginal quartiles. The effect of the grade 3 score on the predictive distribution clearly depends on the value of the grade 4 score, an important interaction that DR-BART is well-equipped to capture.}
\label{fig:interaction}
\end{figure} 

\begin{figure}[h!]
\begin{center}
{\centering \includegraphics[height=.45\linewidth,width=.95\linewidth]{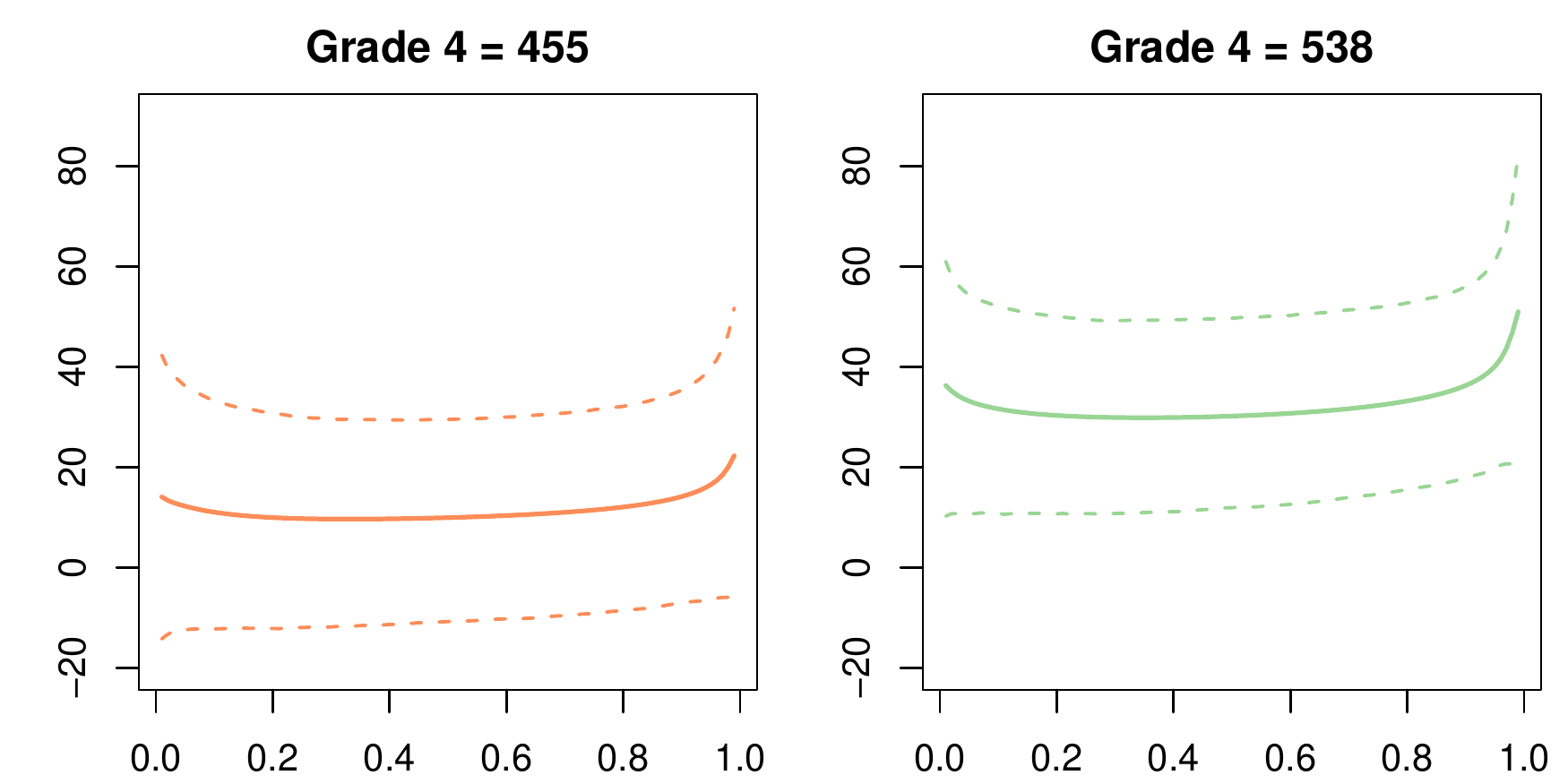} 
}
\end{center}
\caption{Quantile effects $Q_{y_5}(s\mid y_4, y_3=523) - Q_{y_5}(s\mid y_4, y_3=425)$ for $y_4 = 455,538$ along with 90\% credible intervals. Under \cite{Betebenner2009}'s additive model these two curves are forced to be equal. This reveals a likely interaction between grade 3 and grade 4 scores: the posterior mean of the 538 curve is nearly outside the 455 credible interval.}
\label{fig:interactionq}
\end{figure} 

A key objective in student growth modeling is computing growth quantile targets. A growth quantile is a quantile of the predictive distribution for the current test score given the score history. Intuitively, it provides a measure of how well a student performed on the test relative to academic peers (the hypothetical population of students with identical test history). A growth quantile target is the level of consecutive quantile growth that would be required to achieve pre-established achievement targets. For these data, there are four achievement levels used to define achievement targets:
Unsatisfactory, Partially Proficient, Proficient, or Advanced. Growth quantile targets answer questions like ``What level of sustained growth is necessary for a student with grade 3 test score $y_3$ to be Proficient by grade 7?''. This is different than simulating score trajectories and computing the probability that a student reaches a target. Growth quantile targets are intended to promote a ``what will it take'' attitude over a fatalistic ``where will s/he be'' attitude \citep{Betebenner2011}.

The current methodology for computing growth quantile targets uses a series of point estimates, ignoring uncertainty in the quantile curve estimates. But this uncertainty is generally not negligible, particularly for students with extreme test scores. To illustrate, we computed posterior samples of growth quantile curves for a hypothetical third grade student who scored at the cusp of Unsatisfactory/Partially Proficient on the third grade math test. Figure \ref{fig:sgpunsat} plots this student's  growth quantile curves and 90\% credible intervals for grades 4-7. 
It is easy to read off growth quantile targets from these charts: For example, the bottom right panel of Figure \ref{fig:sgpunsat} shows that if the student sustains $60^{th}$ percentile growth in grades 5,6 and 7, there is a 95\% probability of remaining partially proficient in grade 7. Having a 50\% chance of reaching proficiency in grade 7 would require sustained growth at the $80^{th}$ percentile, which is probably unattainable without some new intervention. Similar curves can be computed starting with any grade and any score history. 

\begin{figure}[h!]
\begin{center}
{\centering \includegraphics[width=.8\linewidth]{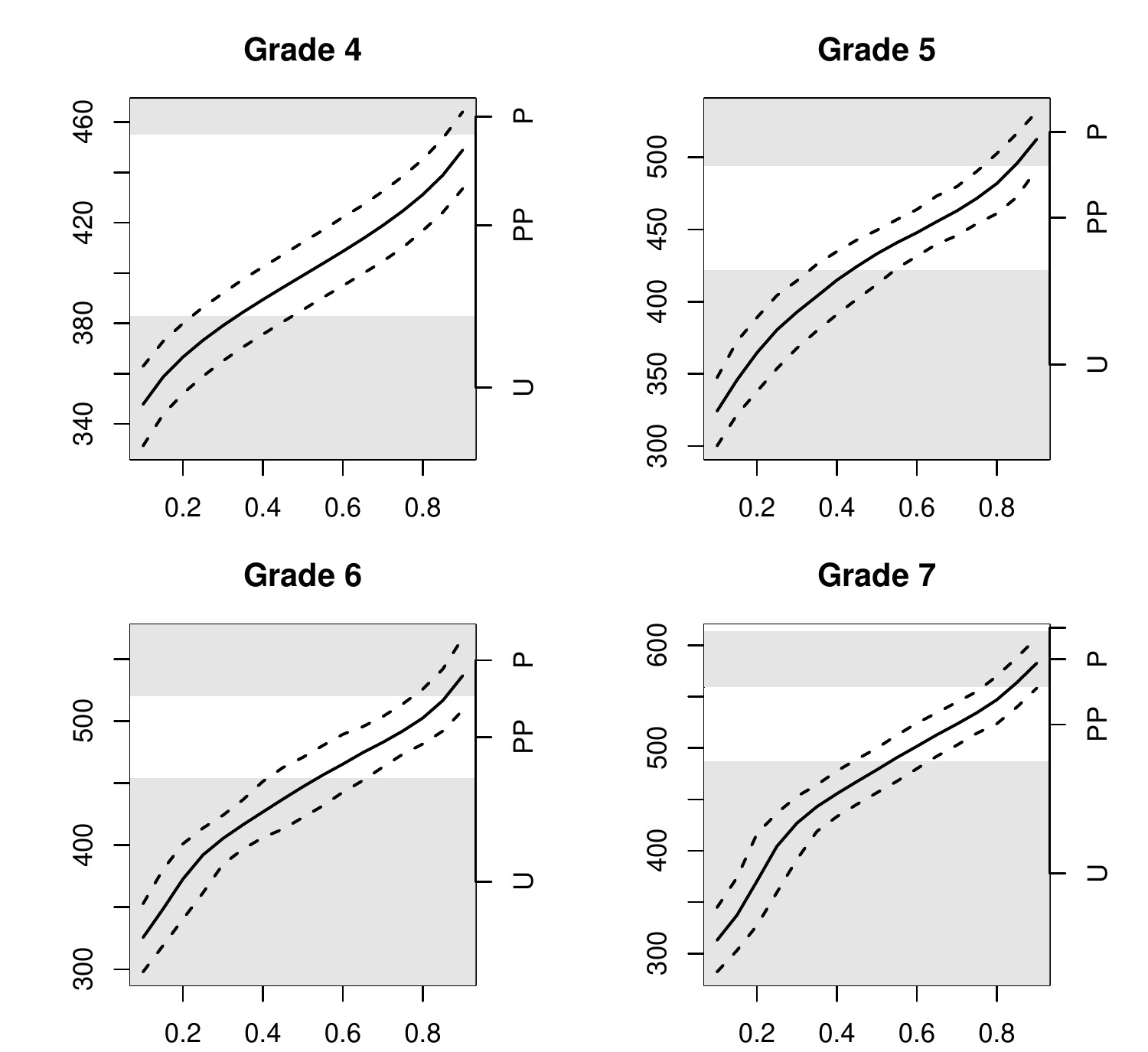} 
}
\end{center}
\caption{Projected scores as a function of quantile growth for a student at the cusp of Unsatisfactory/Partial Proficiency in grade three. The solid line is the posterior mean and dashed lines are pointwise 90\% credible intervals. Shaded regions correspond to achievement thresholds: Unsatisfactory (U), Partially Proficient (PP), Proficient (P).}
\label{fig:sgpunsat}
\end{figure}

In particular, we analyze a subset of public use microdata from the U.S. Census, originally compiled by \cite{Angrist2006} to estimate returns to education across the income distribution. For the 1980, 1990 and 2000 samples they extracted all U.S. born white and black men aged 40-49 with positive annual earnings and positive hours worked in the year prior to the Census. Records with imputed values were excluded, and wages were adjusted to 1989 dollars. \cite{Angrist2006} and its supplementary materials contain details of how the data were obtained and cleaned. The response is log monthly wages, and the covariates include years of education, experience (defined as $age-education-12$) and race (white or black). The study objective was to estimate returns to education as a function of quantile index. 
Taking $Q$ as the quantile function of wages, the return to education at quantile index $s$ is
\beq
100\times \frac{Q(s\mid \x_2) - Q(s\mid \x_1)}{Q(s\mid \x_1)},\label{eq:pmchange}
\eeq
the predicted percentage change in monthly wage from modifying education in $\x_1$ to yield $\x_2$. While \cite{Angrist2006} fit a linear quantile regression to estimate the effect of each additional year of education, we fit DR-BART-LH to each census sample separately, subsampling 5000 units from each, and study the difference between 12 and 16 years of education. In general the effect of a single additional year of schooling is probably heterogenous. For example, the difference in earnings between 11 and 12 years of schooling should be larger than the difference between 10 and 11, because 12 years of education in the U.S. typically indicates that the respondent completed high school. BART readily accommodates non-smooth features in the regression function, while usual econometric analyses assume that regression functions are linear or quadratic in education (or experience).

Figure \ref{fig:earn} shows the return to education for a 45 year old white male, comparing 12 to 16 years of schooling. In 1990, returns were highest at low and high quantiles, whereas in 2000 the returns are actually increasing as a function of the quantile index. 
Other covariate vectors show similar patterns, although the exact estimates vary (and are somewhat unstable for black men due to small sample size). \cite{Angrist2006} found a similar pattern in the returns at the population level and verified it using additional data from the Current Population Survey, so this seems to be a robust finding.

\begin{figure}[h!]
\begin{center}
{\centering \includegraphics[height=.45\linewidth,width=.475\linewidth]{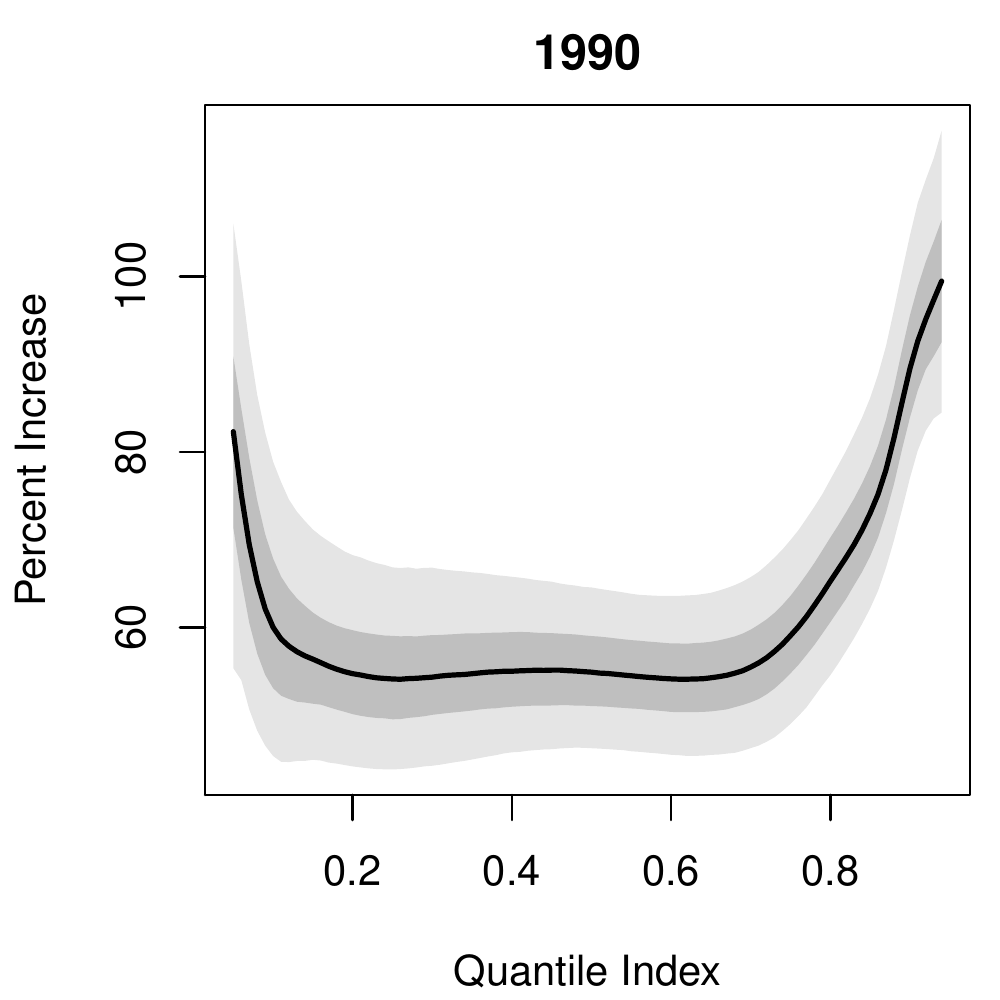} 
\includegraphics[height=.45\linewidth,width=.475\linewidth]{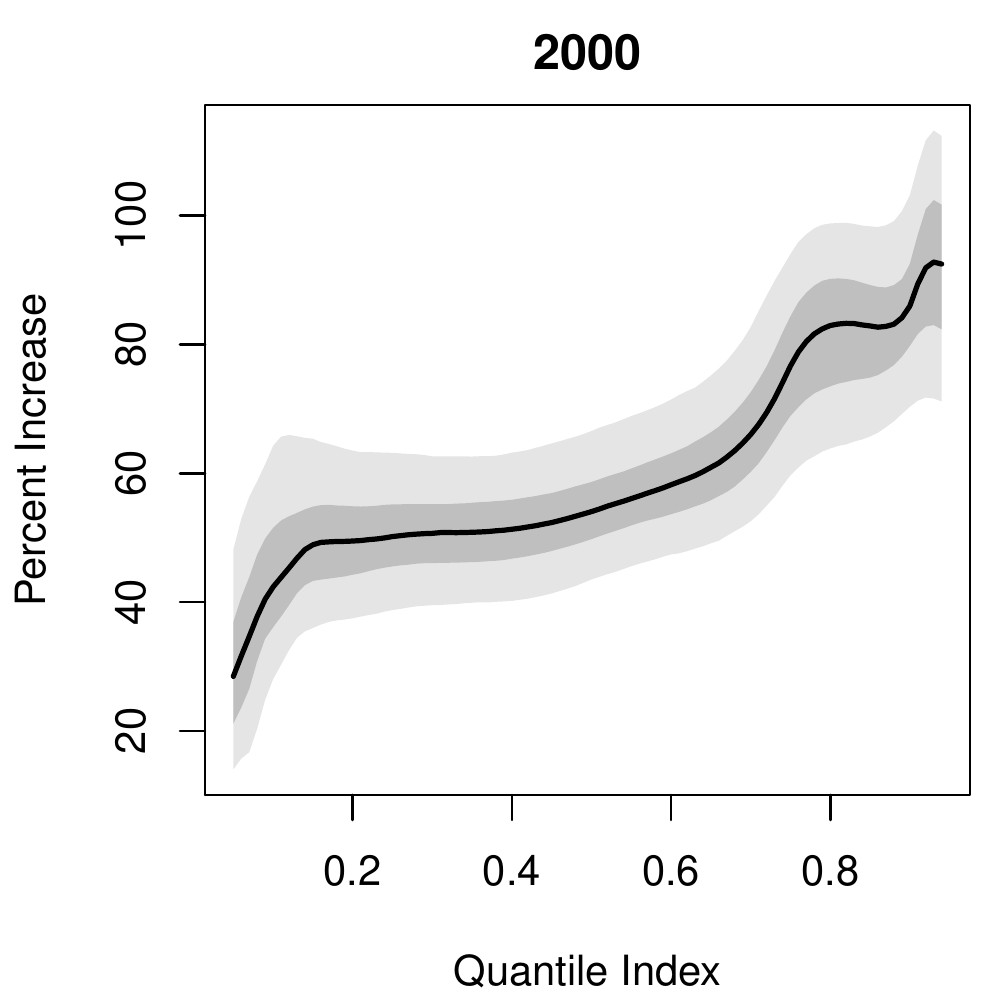} 
}
\end{center}
\caption{Returns to education (16 years versus 12 years) for a 45 year old white man.}
\label{fig:earn}
\end{figure}

\section{Conclusion}\label{sec:conclusion}
In this paper, we introduced a new nonparametric model, DR-BART, extending Bayesian Additive Regression Trees to the novel and challenging setting of density regression. This model has the appeal of being flexible yet easy to specify and understand, with few prior parameters to set. This distinguishes DR-BART from other nonparametric Bayesian methods for density regression, which often include collections of infinite dimensional regression parameters. Inference via MCMC is fast, with acceptable mixing in our examples obtained in a matter of minutes for datasets with thousands of observations. We showed DR-BART to empirically outperform a variety of other density regression methods in its ability to point estimate conditional densities and to express uncertainty about these estimates. Lastly, we extended previous work on posterior concentration results for BART and for density estimation via a latent variable model to prove concentration rates for DR-BART.

Having introduced the latent variable as a modeling device a natural question is whether it might represent some real structure in the scientific problem, like measurement error in $\x$, some combination of omitted variables, or a latent construct like ability or motivation. Any of these seem plausible in the applications presented here. In the educational testing application we treated the latent variables as independent. This is a useful ``saturated'' model for the joint distribution of test scores, but models for dependence across these variables or lower-dimensional representations based on shared latent variables may be more scientifically meaningful or efficient. Further modeling of latent variables within BART is a promising area for future work, even beyond the setting of density regression.

\bibliographystyle{unsrtnat}


\section{Supplement}

\subsection{BART}
In this section, we review the basic details of Bayesian Additive Regression Trees (BART). We refer the reader to \cite{Chipman2010} for a full exposition.
\subsubsection{Model and Priors}
\citeauthor{Chipman2010} consider the regression model
\[y_i = f(\bx_i) + \eps_i, \eps_i \stackrel{iid}{\sim} N(0, \sigma ^ 2)\]
and place a BART prior on $f$. Such a prior represents the function as the sum of $m$ many piecewise constant regression trees. 
Each tree $T_h,\;1\leq h\leq m$ consists of a set of interior decision nodes (where decisions are generally of the form $x_j<c$ for some value $c$) and a set of $b_h$ terminal nodes. The terminal nodes have associated parameters $\M_h = (\mu_{h1},\mu_{h2},\dots \mu_{hb_h})'$.  For each tree there is a partition of the covariate space $\{\mathcal{A}_{h1},\dots,\mathcal{A}_{hb_h}\}$ with each element of the partition corresponding to a terminal node. A tree and its associated parameters define step functions:
\beq
\g(\x, T_h, \M_h) = \mu_{hb}\text{ if }\x\in\mathcal{A}_{hb} \text{ (for $1\leq b\leq b_h$)}.
\eeq

These functions are additively combined to obtain $f$:

\begin{equation}
f(\x) = \sum_{h=1}^m \g(\x, T_h, \M_h).
\end{equation}

In the spirit of boosting, each term in the sum is constrained by a strong prior to be a ``weak learner"; that is, the prior on $(T_h, \M_h)$ strongly favors small trees and leaf parameters that are near zero. Each tree independently follows the prior described by \cite{Chipman1998}, where the probability that a node at depth $d$ splits (is not terminal) is given by 
\begin{equation}
\alpha (1+d)^{-\beta},\;\;\alpha\in (0,1),\;\beta\in [0,\infty).
\end{equation}
 A variable to split on, and a cutpoint to split at, are then selected uniformly at random from the available splitting rules. We follow CGM throughout by taking $\alpha=0.95$ and $\beta=2$. Traditionally the prior on cutpoints for the $j^{th}$ variable is a discrete uniform distribution over a uniformly spaced grid or some collection of quantiles of $\{x_{ij}: 1\leq i\leq n\}$. Most implementations of tree-based models also require that the terminal nodes not be empty, or contain at least 5 observations.

To set the priors on $M$, CGM suggest scaling the data to lie in $\pm \, 0.5$ and assign the leaf parameters independent priors:

\begin{equation}
\mu_{hb}\sim N(0,\sigma^2_\mu)\;\;\textrm{ where }\sigma_\mu=1/(2k\sqrt{m}).
\end{equation} 
CGM recommend $1\leq k\leq 3$, with $k=2$ as a reasonable default choice. This prior shrinks $g_l(x)$ strongly toward zero, while ensuring that the induced prior for $f(x)$ is centered at zero and puts approximately 95\% of the prior mass within $\pm 0.5$. Larger values of $k$ imply increasing degrees of shrinkage. Performance is fairly insensitive to the number of trees $m$, as long as it is large enough. In practice, $m = 200$ is a common choice. 

\subsubsection{Posterior Sampling}

\cite{Chipman2010} provide full details of the MCMC algorithm used to fit BART. A key ingredient of the sampler is performing blocked updates for $(T_h\M_h\mid \{(T_l, M_l\}_{l\neq h}, -)$. This ``backfitting'' step utilizes the fact that the full conditional for $(T_h, \M_h)$ depends on $\{(T_l, M_l\}_{l\neq h}$ and $y$ only through the residuals
\beq
R_{hi} = \left( y_i - \sum_{l\neq h}^m \ghxi{l}{i}\right) \sim N(\ghxi{h}{i}, \sigma^2),
\eeq
 These residuals follow the single tree model studied by \cite{Chipman1998} with parameters $(T_h, \M_h, \sigma^2)$, so the \cite{Chipman1998} Metroplis-Hastings update can be embedded within the BART Gibbs sampler to sample $(T_h, \M_h)$ jointly from their full conditional.
 Since $\M_h$ has a normal prior, the integrated likelihood $p(R_h\mid T_h, \sigma^2)$
is available in closed form. Therefore $(T_h, M_h)$ can be updated in a block by sampling $T_h$ marginally over $\M_h$ with a Metropolis step and then sampling $\M_h$ from its full conditional given the new $T_h$. Complete details including proposal distributions are given in \cite{Chipman1998} (the results in this paper use a smaller set of propsal distributions, outlined in \cite{Pratola2014}). This blocked update enhances mixing and obviates the need for transdimensional MCMC algorithms due to the changing dimensionality of $\M_h$.

\subsection{Details on Heteroscedastic Regression with BART priors}\label{sec:app:hetero}
\subsubsection{Model and Priors}
Here, we give an overview of the \cite{murray2021log} extension of the BART prior to model variance functions. Begin with the loglinear model
\beq
v(x) = \sum_{h=1}^{m_v} g(\x, \vT_h, \vM_h)
\eeq
 where $\{(\vT_h$, $\vM_h)\}$ are trees and parameters for the variance functions. To collect posterior samples, a variant of the backfitting MCMC algorithm is possible. Suppose we are updating $(\vT_h,\vM_h)$. The first step is constructing the scaled residuals
%
\begin{equation}
R^{(v)}_{hi} = \frac{(Y_i - f(x_i))}{\exp[\sum_{l\neq h}^m g(\x, \vT_l, \vM_l)]}.
\end{equation}
Now $R^{(v)}_{hi}\sim N(0, \sigma^2\exp[g(\x, \vT_h, \vM_h)])$ 
so we have
\beq
p(R^{(v)}_h\mid \vT_h, \vM_h, \sigma^2) = 
 \prod^{b^{(v)}_h}_{l=1}
	\prod_{i:\x_i\in \mathcal{A}_l} 
	\frac{1}{\sigma_0\exp[\mu^{(v)}_{hl}/2]}
	\phi\left(\frac{R^{(v)}_{ih}}{\sigma_0^2\exp[\mu^{(v)}_{hl}]}\right)
\eeq

As a normal prior for $\mu^{(v)}_{hb}$ is no longer conditionally conjugate, \citeauthor{murray2021log} introduces another prior which is symmetric about $0$ on the log scale and also admits closed form representations for the integrated likelihood and full conditional distribution. 
This prior is specified as 
an equal probability mixture of gamma and inverse gamma distributions with the same parameters. Let $\vtau_{hb} := \exp(\mu_{hb}^{(v)})$. Then, the prior is:

\begin{equation}
p\left(\vtau_{hb}\right) 
=\frac{1}{2}
\frac{b^a}{\Gamma(a)}\left(\vtau_{hb}\right)^{-a-1}\exp\left(-b/\vtau_{hb}\right)
+\frac{1}{2}\frac{b^a}{\Gamma(a)}\left(\vtau_{hb}\right)^{a-1}\exp\left(-b\vtau_{hb}\right)\label{eq:ptau}  
\end{equation}

Throughout this paper we take $a = b = a_0 m_v$. To motivate this choice, note that if $\vtau_{hb}$ has density \eqref{eq:ptau} with $a = b = a_0 m_v$ then $\log(\vtau_{hb})\eqd (-1)^{Z_h}\log(W_h)$ where $Z_h\iid Bern(0.5)$ and $W_h\iid Ga(a_0m_v, a_0m_v)$. The first two moments of $\log(\vtau_{hb}))$ are 0 and $(a_0m_v)^{-1}$ and for large $a_0m_v$ its distribution is nearly normal since the Gamma distributions are so concentrated near 1. Thus, marginally, $v(x)$ will be approximately distributed $N(0,1/a_0)$ and can be scaled to match the \emph{a priori} expected range of the log-variance process. Even for relatively small $a_0m_v$ this approximation is excellent; see Figure \ref{fig:hetprior}. The marginal prior for $\log(v(x))$ is nearly normal for $a_0m_v=5$ and is indistinguishable from the normal density when $a_0m_v$ is greater than 25. In practice $m_v$ will generally be at least 100. To set $a_0$, note that $a_0 = [\log(\sqrt{d})]^{-2}$ implies that $\Pr(\exp[v(\x)]\in (\sigma^2_0/d, d\sigma_0^2))\approx 0.95$. For example, taking $a_0 = 1.5$ makes $d$ about $5$. In practice, values of $d$ around 2-4 tend to work well; much larger and the risk of overfitting and degenerate bandwidths increases.

\begin{figure}
{ \includegraphics[width=1\linewidth]{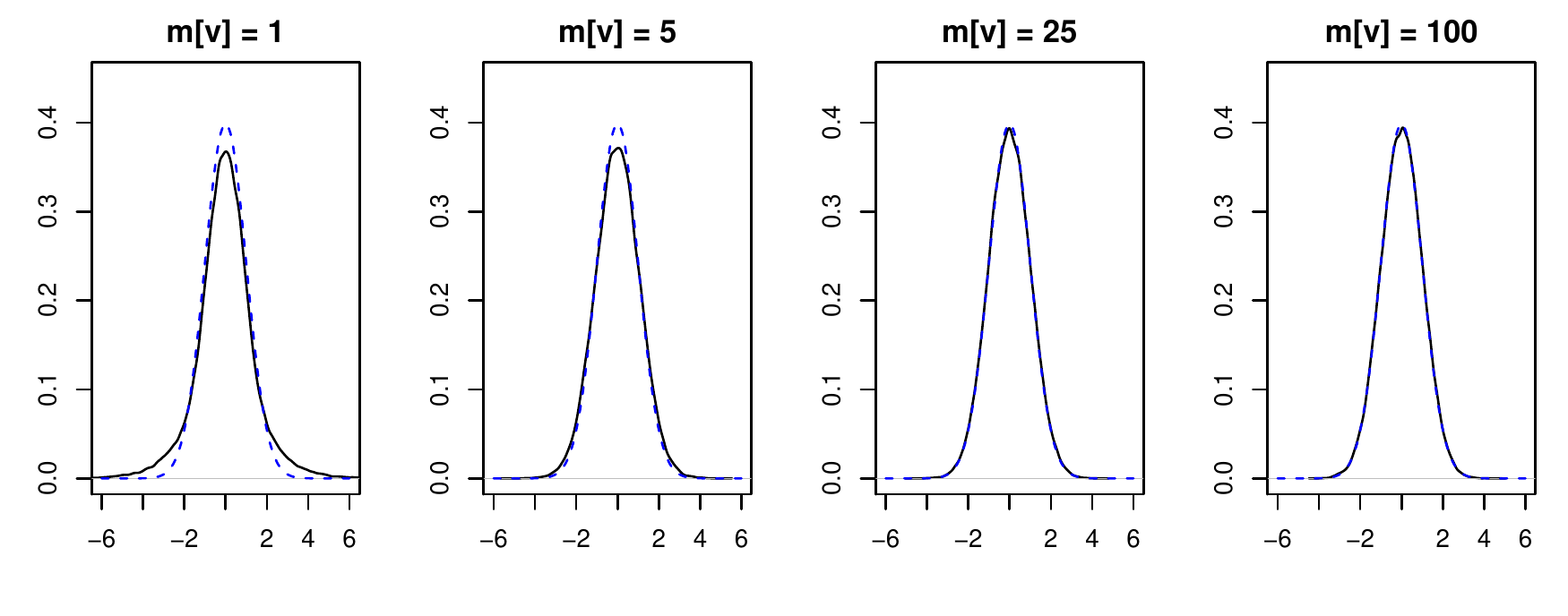} 
}
\caption{Marginal prior on $v(x)$ for $a_0=1$ and a range of $m_v$. The blue dashed line is the standard normal density.}
\label{fig:hetprior}
\end{figure} 

\subsubsection{Posterior Sampling}

Let $n_{hb} = \sum_{i=1}^n \ind{\x_i\in \mathcal{A}^{(v)}_{hb}}$, the number of observations in leaf $b$ of $\vT_h$, and let $r^2_{hb} = \sum_{i=1}^n [R^{(v)}_{hi}]^2\ind{\x_i\in \mathcal{A}^{(v)}_{hb}}$ be the sum of squared residuals in each leaf. Under \eqref{eq:ptau} the full conditional distribution for $\vtau_{hb}$ is another mixture distribution:
\beq
\begin{split}
p(\vtau_{hb}\mid R^{(v)}_h, \sigma^2)\propto
\frac{\Gamma(n_{hb}/2 + a)}{(b+r^2_{hb}/2)^{n_{hb}/2+a}}p_{IG}(\vtau_{hb}, n_{hb}/2 + a, b+r_{hb}^2/2)\\
+\frac{2K_{(n_{hb}/2-a)}(\sqrt{2br_{hb}^2})}
{(2b/r_{hb}^2)^{(a-n_{hb}/2)/2}}p_{GIG}(a-n_{hb}/2, b, r_{hb}^2/2)\label{eq:taufc}
\end{split}
\eeq
where $K_p(x)$ is the modified Bessel function of the second kind, $p_{IG}$ is the pdf of an inverse gamma random variable with rate $b$, and $p_{GIG}$ is the pdf of a generalized inverse Gaussian random variable:
\beq
p_{GIG}(\tau; \lambda, \psi, \chi) = \frac{(\psi/\chi)^{(\lambda/2)}}{2K_\lambda(\sqrt{\psi\chi})}\tau^{\lambda-1}\exp\left[{-\frac{1}{2}} \left( \psi\tau + \chi/\tau\right) \right]
\eeq
The integrated likelihood is 
\beq
P(R^{(v)}_h\mid \vT_h, \sigma^2) = 
 \prod^{b^{(v)}_h}_{l=1}
\frac{1}{2(2\pi)^{n_{hl}/2}}
\frac{b^a}{\Gamma(a)}
\left(\frac{\Gamma(n_{hl}/2 + a)}{(b+r^2_{hl}/2)^{n_{hl}/2+a}}
+\frac{2K_{(n_{hl}/2-a)}(\sqrt{2br^2_{hl}})}
{(2b/r_{hl}^2)^{(a-n_{hl}/2)/2}}\right).\label{eq:precil}
\eeq
Block updating  $(\vT_h, \vM_h)$ proceeds by substituting the integrated likelihood \eqref{eq:precil} into the Metropolis step for $\vT_h$ and drawing $\vM_h$ by sampling each $\tau_{hb_h}^{(v)}$ from \eqref{eq:precil}. Given $v$, updates for the rest of the parameters are straightforward although the integrated likelihood and full conditionals also have to be adjusted in the update for $(T_h,\M_h)$ to account for the heteroscedastic residuals. See \cite{Bleich2014} for details.

\subsection{Proofs of Theorems}\label{sec:app_proofs}
\subsubsection*{Proof of Theorem \ref{thm:finite}}
Let $N_d$ be the number of nodes at depth $d$ of a tree sampled from this prior. There are $2^{d-1}$ pairs of possible nodes at depth $d$. Label these pairs by $1\leq i\leq 2^{d-1}$ and define $U_i^{(d)}$ as 1 if the $i^{th}$ pair of nodes exists and 0 otherwise, so that $N_d = \sum_{i=1}^{2^{d-1}}2U_i^{(d)}$. Now $U_i^{(d)}=1$ if and only if all its parents are splitting nodes, so $\Pr(U_i^{(d)}=1) = \E(U_i^{(d)}) = \prod_{s=0}^{d-1} \alpha(1+s)^{-\beta} = \alpha^d(d!)^{-\beta}$. Then $E(N_d) = \sum_{i=1}^{2^{d-1}}2\alpha^d(d!)^{-\beta} = (2\alpha)^d(d!)^{-\beta}$. Since $N_d$ is nonnegative, $\Pr(N_d\geq 1)\leq \E(N_d)$. Since $\sum_{d=0}^\infty\E(N_d) = \sum_{d=0}^\infty (2\alpha)^d(d!)^{-\beta}<\infty$, $\Pr(N_d\geq 1\ i.o.)=0$ so the tree is almost surely finite.

\subsubsection*{Proof of Theorem \ref{thm:posterior_concentration}}

As discussed in Section \ref{sec:theory}, we must find constants $C_1$, ..., $C_4$ and a sieve $\mathcal{G}_n$ that allow us to verify conditions G1, G2, and G3. We first prove the conditions using Lemmas \ref{lem:thick} -- \ref{lem:link}; in the next subsection we prove a series of Lemmas allowing us to establish Lemma $\ref{lem:link}$.
\begin{proof}[Proof of G1]
Let $\delta_n = (\log n / n)^{\beta/(2\beta+d_0)}$ and let $\epsilont_n = (\log n / n)^{\frac{\alpha}{\alpha+1} \times \frac{\beta}{2\beta + d_0}}$. 

\noindent Applying Lemma~\ref{lem:link} with $\epsilon = \epsilont_n$, for sufficiently large $n$ we have
  \begin{align}
    \label{eq:bound1}
    \begin{split}
    -\log\Pi\{\KL_{p_0}(C_{\KL} \epsilont_n)\}
    &\le
    -\log \Pi(S = S_0)
    - \log \Pi(\|f - f_0\|_\infty \le \delta_n \mid S = S_0)
    \\&- \log \Pi\{\sigma(\x,u) \equiv \sigma \text{ is constant},
    \sigma \in [\epsilont_n^{1/\alpha}, 2\epsilont_n^{1/\alpha}] \mid S = S_0\}.
  \end{split}
  \end{align}
  
  As argued by \citet{artofbart}, P1 implies $-\log\Pi(S = S_0) \lesssim d_0 \log(p + 1) \lesssim n \epsilon_n^2$. Next, by Lemma~\ref{lem:thick} we have $-\log \Pi(\|f - f_0\|_\infty \le \delta_n \mid S = S_0) \lesssim n\delta_n^2 \lesssim n \epsilon_n^2$. 
  Finally, note that with $\sigma_n = \epsilont_n^{1/\alpha}$, by P4 and standard properties of the inverse gamma distribution we can bound the final term of \eqref{eq:bound1} by
  \begin{align*}
    &-\log\Pi\{e^\xi \in [1.2 \sigma_n, 1.8 \sigma_n],
    \sup_{hl} |\mu_{hl}^{(v)}| \le 0.1/m_v,
    \text{all trees have depth 0}
    \mid S = S_0\}
    \\
    &\quad\lesssim
    -\log\Pi\{e^\xi \in [1.2 \sigma_n, 1.8 \sigma_n]\}
    \lesssim    
    \sigma_n^{-1}
    \lesssim n \epsilon_n^2.
  \end{align*}
  Combining these facts, we get $-\log \Pi\{\KL_{p_0}(C_{\KL} \epsilont_n)\} \lesssim n \epsilon_n^2$, and because $\epsilont_n \le \epsilon_n$ this implies that G1 holds for some choice of $C_1$ and $C_2$.
\end{proof}
\begin{proof}[Proof of G2 and G3]
  Let $\kappa$ be a large constant to be determined later and define $\mathcal F_f$ as in Lemma~\ref{lem:sieve} with the choices $d = \kappa n\epsilon_n^2 / \log(p+1)$, $K = \lfloor \kappa n \epsilon_n^2 / \log n \rfloor$, $U^2 = \kappa n\epsilon_n^2$, and $\epsilon = \epsilon_n / (n\epsilon_n^2)$. Let $\tilde{v}$ denote the tree-based contribution to the variance function, without the intercept: $\tilde{v} := v(\bx, u) - \xi$. Then, we similarly define $\mathcal F_{\tilde v}$ but with the choice $\epsilon = \epsilon_n^2$ and $U = V$. Lastly, define $\ell = 1/(\kappa n\epsilon_n^2)$ and $u = e^{\kappa n\epsilon_n^2}$. We take the set $\mathcal G$ to be given by
  \begin{align*}
    \mathcal G = \{p_{f,\sigma} : f \in \mathcal F_f, \tilde{v} \in \mathcal F_{\tilde{v}}, \xi \in [\log \ell, \log u]\}.
  \end{align*}
  Let $\widehat F_f$ and $\widehat F_{\tilde{v}}$ denote $\epsilon_n / (n\epsilon_n^2)$ and $\epsilon_n^2$ nets of $\mathcal F_f$ and $\mathcal F_{\tilde{v}}$ respectively and let $\widehat R$ denote an $\epsilon_n^2$-net of $[\log \ell, \log u]$. Given a $p_{f_1, \sigma_1} \in \mathcal G$ we can find $f_2 \in \widehat F_f$, $\tilde{v}_2 \in \widehat F_{\tilde{v}}$, and $\xi_2 \in \widehat R$ such that, by Lemma \ref{lem:link}:
    \begin{align*}
    h(p_{f_1, \sigma_1}, p_{f_2, \sigma_2})
    \le
    C_h \left\{\sqrt{2\epsilon_n^2} + e^{m_vV}\frac{\epsilon_n (\kappa n\epsilon_n^2)}{n\epsilon_n^2} \right\}
    \le
    C_\kappa \epsilon_n,
  \end{align*}

  for a global constant $C_h$ and a constant $C_{\kappa}$ depending only on the constant $\kappa$ and the prior. Hence
  \begin{align*}
    \log N(\mathcal G, C_\kappa \epsilon_n, h)
    \le
    \log N(\mathcal F_f, \epsilon_n / (n\epsilon_n^2), \|\cdot\|_\infty)
    + \log N(\mathcal F_{\tilde{v}}, \epsilon_n^2, \|\cdot\|_\infty)
    + \log \frac{\log u + \log \ell^{-1}}{\epsilon_n^2}.
  \end{align*}
  By Lemma~\ref{lem:sieve}, each term is easily verified to be bounded by a constant multiple of $\kappa n \epsilon_n^2$ so that $\log N(\mathcal G, C_\kappa \epsilon_n, h) \lesssim \kappa n \epsilon_n^2$.
  
 Next we compute the complementary probability bound. First, by the union bound:
  \begin{align}
    \label{eq:target}
    \begin{split}
    \Pi(p \notin \mathcal G)
    &\le
    \Pi(\xi < \log \ell) + \Pi(\xi > \log u) + \Pi(D > d)
    \\&\quad +
    \Pi(f \notin \mathcal F_f \mid D \le d) + \Pi(v \notin \mathcal F_{\tilde{v}} \mid D \le d).
    \end{split}
  \end{align}
  Using standard properties of the inverse gamma distribution we have
  \begin{align*}
    \log \Pi(\xi < \log \ell)
    &\lesssim
    -\ell^{-1}
    =
    -\kappa n \epsilon_n^2, \\
    \log \Pi(\xi > \log u)
    &\lesssim
      -\log u  = -\kappa n \epsilon_n^2.
  \end{align*}
  By P1 we also have $\log \Pi(D > d) \lesssim -d \log(p + 1) = -\kappa n \epsilon_n^2$. Next, if $D \le d$ but $f \notin \mathcal F_f$ then either (i) $K_m > K$ for some $m$ where $K_m$ is the number of leaf nodes in tree $m$ or (ii) $K_m \le K$ for all $m$, but $\sup_{h\ell} |\mu_{h\ell}| > U$. Hence, by the tail properties of Gaussian random variables,
  \begin{align*}
    \Pi(f \notin \mathcal F_f \mid D \le d)
    &\le
    \Pi(K_m > K \text{ for some $m$}) + \Pi(\sup_{h\ell} |\mu_{h\ell}| > U \mid K_m \le K \text{ for all $m$})
    \\&\le
    m \Pi(K_1 > K) + Km e^{-U^2/(2 \sigma^2_\mu)}.
  \end{align*}
  
  As noted in the proof of Theorem 2 of \citet{artofbart} we have $\log \Pi(K_1 > K) \lesssim -K \log K \lesssim -\kappa n\epsilon_n^2$ while $-U^2/(2\sigma^2_\mu) \lesssim -\kappa n\epsilon_n^2$ by our choice of $U$. Hence $\log \Pi(f \notin \mathcal F_f \mid D \le d) \lesssim -\kappa n \epsilon_n^2$ as well, provided that $n$ is sufficiently large. Finally, the only way for $\tilde{v} \notin \mathcal F_{\tilde{v}}$ to occur when $D \le d$ is for at least one tree to have more than $K$ leaf nodes, and we have already seen that the log of this probability is bounded by a constant multiple of $-\kappa n \epsilon_n^2$. Putting all of these facts together and bounding each term of \eqref{eq:target} by the maximum, we have
  \begin{align*}
    \Pi(p \notin \mathcal G)
    \le
    \exp\{-C_{\mathcal G} \kappa n \epsilon_n^2\}
  \end{align*}
  for some small constant $C_{\mathcal G}$ depending only on the prior. By taking $\kappa$ larger than $(C_2 + 4) / C_{\mathcal G}$ we have $\Pi(p \notin \mathcal G) \le \exp\{-(C_2 + 4) n \epsilon_n^2\}$ and $\log N(\mathcal G, \bar \epsilon_n, h) \le C_5 n \bar \epsilon_n^2$ for some constant $C_5$ and $\bar \epsilon_n$ a constant multiple of $\epsilon_n$, as desired.
\end{proof}

\subsubsection*{Proof of Supporting Lemmas}

As before, we use $a \lesssim b$ to mean there is a positive constant $C$ which can be computed from the prior and $p_0$, independent of $n$ and $p$, such that $a \le Cb$. Unless otherwise stated, the constant is assumed to be universal in the sense that if we write $f(x;\xi) \lesssim g(x)$ then we have $f(x;\xi) \le C g(x)$ for all $(x, \xi)$ (unless $\xi$ is part of the prior or a function of $p_0$).

Throughout this section, $\phi_\sigma * p_0(y\mid\x)$ denotes the convolution $\int \phi_\sigma(y - z) \, p_0(z\mid\x) \ dz$ where $\phi_\sigma(z) = (2\pi\sigma^2)^{-1/2} e^{-z^2/(2\sigma^2)}$. When $\sigma(\x,u)$ is a constant we note that $\phi_\sigma * p_0(y\mid\x) = p_{f_0, \sigma}$.

\begin{lemma}
  \label{lem:bounding-below}
  Suppose $p_0$ satisfies Condition F. Then for $\sigma < 1$, we have
  \begin{align*}
    p_0(y\mid \x)
    \lesssim
    \phi_{\sigma} * p_0(y\mid\x)
  \end{align*}
\end{lemma}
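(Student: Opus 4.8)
The plan is to use the fact that Condition F forces $p_0(\cdot\mid\x)$ to be bounded above and bounded away from $0$ on its support, so that, for $\sigma$ bounded, smoothing by $\phi_\sigma$ can only shrink the density by a constant factor. First I would record what the Remark after Condition F already gives us: there are constants $0 < b_0 \le B_0 < \infty$ depending only on $p_0$ with $b_0 \le p_0(y\mid\x) \le B_0$ for every $(y,\x)\in[0,1]^{p+1}$, and $p_0(y\mid\x)=0$ for $y\notin[0,1]$. When $y\notin[0,1]$ the claimed inequality is immediate since its left-hand side vanishes, so it remains only to handle $y\in[0,1]$.

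The second step is to bound the convolution from below. Since the integrand is supported on $z\in[0,1]$,
\[
  \phi_\sigma * p_0(y\mid\x) = \int_0^1 \phi_\sigma(y-z)\,p_0(z\mid\x)\,dz \ge b_0\int_0^1\phi_\sigma(y-z)\,dz = b_0\int_{y-1}^{y}\phi_\sigma(w)\,dw.
\]
For $y\in[0,1]$ the interval $[y-1,y]$ has unit length and contains the origin; because $\phi_\sigma$ is symmetric and unimodal about $0$, the quantity $\int_a^{a+1}\phi_\sigma(w)\,dw$ is, over $a\in[-1,0]$, minimized at the endpoints, so it is at least $\int_0^1\phi_\sigma(w)\,dw = \Phi(1/\sigma)-\tfrac12 \ge \Phi(1)-\tfrac12 =: c_0 > 0$, where $\Phi$ is the standard normal CDF and the last inequality uses $\sigma<1$. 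Hence $\phi_\sigma * p_0(y\mid\x)\ge b_0 c_0$, and combining with $p_0(y\mid\x)\le B_0$ gives $p_0(y\mid\x)\le (B_0/(b_0 c_0))\,\phi_\sigma * p_0(y\mid\x)$; since $b_0, B_0$ depend only on $p_0$ and $c_0$ is universal, this is the asserted $\lesssim$ bound.

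There is no serious obstacle in this lemma; it is essentially a one-paragraph calculation. The only points that deserve a line of justification are (i) tracking that $p_0(\cdot\mid\x)$ is supported on $[0,1]$, so that the mass of $\phi_\sigma$ relevant to the convolution lies on a unit window straddling $0$, and (ii) the elementary monotonicity fact that a unit interval containing $0$ captures at least $\Phi(1)-\tfrac12$ of the mass of $\phi_\sigma$ as long as $\sigma<1$. (If one instead wanted the bound uniformly in $\sigma$ over a bounded range $[0,\bar\sigma]$ with $\bar\sigma>1$, the same argument works with $c_0$ replaced by $\Phi(1/\bar\sigma)-\tfrac12$.)
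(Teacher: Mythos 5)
Your proposal is correct and follows essentially the same route as the paper's proof: bound the convolution below by $\inf p_0$ times the Gaussian mass on the unit window $[y-1,y]$, observe that this mass is minimized at the endpoints and exceeds $\Phi(1)-\tfrac12$ (the paper uses the slightly weaker bound $1/3$) for $\sigma<1$, and dispose of $y\notin[0,1]$ trivially. No gaps.
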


\begin{proof}
  By Condition F, $p_0(y\mid\x)$ is uniformly bounded and bounded away from 0 on $[0,1]$. First, if $y \notin [0,1]$ then we have $\phi_\sigma * p_0(y \mid \x) > 0 = p_0(y \mid \x)$ so that $\phi_\sigma * p_0(y \mid \x) > C p_0(y \mid \x)$ for any positive constant $C$; hence we can assume without loss of generality that $y \in [0,1]$. Let $M = \inf_{y,\x} p_0(y\mid\x)$ and write
  \begin{align*}
    \phi_\sigma * p_0(y\mid\x)
    &=
    \int_0^1  \phi_\sigma(y - z) \, p_0(z \mid \x) \ dz
    \\&\ge
    M \int_0^1 \phi_\sigma(y - z) \ dz
    \\&=
    M \int_{y-1}^y \phi_\sigma(u) \ du.
    \\&\ge
    M \int_0^1 \phi_\sigma(u) \ du
    \ge M / 3
    \ge \frac{M}{3 \sup_{y,\x} p_0(y\mid\x)} p_0(y\mid\x),
  \end{align*}
  where the second inequality follows from the fact $U(y; \sigma) = \Phi\{y / \sigma\} - \Phi\{(y - 1) / \sigma\}$ is minimized at $y = 1$ for $y\in[0,1]$ and all $\sigma$, and the third inequality follows from the fact that $U(1; \sigma) > 1/3$ for all $\sigma < 1$. Hence $\phi_\sigma * p_0(y\mid\x) \gtrsim p_0(y \mid \x)$ for $\sigma < 1$.
\end{proof}

\begin{lemma}
  \label{lem:ghoshbound}
  If $p_0$ satisfies Condition F and $\sigma(\x,u)$ is constant with $\sigma < 1$ then
  \begin{align*}
    \log \|p_0 / p_{f, \sigma}\|_\infty
    \le
    C_{\inf} + \frac{\|f-f_0\|^2_\infty}{\sigma^2},
  \end{align*}
  for some $C_{\inf}$.
\end{lemma}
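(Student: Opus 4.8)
The plan is to bound $p_{f,\sigma}(y\mid\x) = \int_0^1 \phi_\sigma(y - f(\x,u))\,du$ from below, pointwise in $(y,\x)$, by a constant multiple of $e^{-\|f-f_0\|_\infty^2/\sigma^2}\,p_0(y\mid\x)$, and then take logarithms. The bridge back to $p_0$ is the identity recorded just before Lemma~\ref{lem:bounding-below}: since $f_0(\x,\cdot)$ is the conditional quantile function of $p_0(\cdot\mid\x)$, for any bandwidth $\tau$ one has $\int_0^1 \phi_\tau(y - f_0(\x,u))\,du = (\phi_\tau * p_0)(y\mid\x)$.

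First I would replace $f$ by $f_0$ inside the integrand at the cost of an explicit factor. Writing $\Delta = \|f-f_0\|_\infty$, the triangle inequality gives $|y - f(\x,u)| \le |y - f_0(\x,u)| + \Delta$, and since $\phi_\sigma(\cdot)$ is a decreasing function of the absolute value of its argument, $\phi_\sigma(y - f(\x,u)) \ge \phi_\sigma(|y - f_0(\x,u)| + \Delta)$. Completing the square and using $(a+b)^2 \le 2a^2 + 2b^2$ turns the right-hand side into $\tfrac{1}{\sqrt2}\,e^{-\Delta^2/\sigma^2}\,\phi_{\sigma/\sqrt2}(y - f_0(\x,u))$, a bound that involves $f$ only through $\Delta$. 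Integrating over $u \in [0,1]$ and invoking the quantile-function identity with $\tau = \sigma/\sqrt2$ yields $p_{f,\sigma}(y\mid\x) \ge \tfrac{1}{\sqrt2}\,e^{-\Delta^2/\sigma^2}\,(\phi_{\sigma/\sqrt2} * p_0)(y\mid\x)$.

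Then I would apply Lemma~\ref{lem:bounding-below} with bandwidth $\sigma/\sqrt2$ — which is again $< 1$ because $\sigma < 1$ — to obtain $(\phi_{\sigma/\sqrt2} * p_0)(y\mid\x) \gtrsim p_0(y\mid\x)$ uniformly in $(y,\x)$. Chaining the two displays gives $p_{f,\sigma}(y\mid\x) \gtrsim e^{-\Delta^2/\sigma^2}\,p_0(y\mid\x)$ with an absolute implied constant (the one from Lemma~\ref{lem:bounding-below} times $\sqrt2$), and this holds for every $y$, trivially so where $p_0(y\mid\x)=0$. Taking $-\log$ of both sides and a supremum over $(y,\x)$ produces $\log\|p_0/p_{f,\sigma}\|_\infty \le C_{\inf} + \|f-f_0\|_\infty^2/\sigma^2$, with $C_{\inf}$ the logarithm of that combined constant.

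I do not expect a genuine obstacle here; the only delicate point is checking that the crude square-completion $(|y-f_0(\x,u)|+\Delta)^2 \le 2(y-f_0(\x,u))^2 + 2\Delta^2$ does not throw away too much. A tighter expansion retains a cross term of order $|y-f_0(\x,u)|\,\Delta/\sigma^2$, which is harmless but would require an extra argument to absorb a residual $\log(\Delta/\sigma)$ factor (using $\log t \le t^2$). Rescaling the bandwidth by $\sqrt2$ avoids this entirely at the (immaterial) cost of a larger constant, and this is exactly why the hypothesis is stated as $\sigma < 1$ with room to spare: all that is used downstream is that $\sigma/\sqrt2 < 1$ so Lemma~\ref{lem:bounding-below} still applies.
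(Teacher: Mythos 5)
Your argument is correct and is essentially the paper's own proof: both lower-bound $p_{f,\sigma}(y\mid\x)$ by $\tfrac{1}{\sqrt 2}\,e^{-\|f-f_0\|_\infty^2/\sigma^2}\,\phi_{\sigma/\sqrt 2}*p_0(y\mid\x)$ via the quadratic inequality $(y-f)^2\le 2(y-f_0)^2+2(f-f_0)^2$, then invoke Lemma~\ref{lem:bounding-below} at bandwidth $\sigma/\sqrt 2<1$ and take logarithms. Your extra remarks about the cross term and the role of the $\sqrt2$ rescaling only make explicit what the paper's one-line computation does implicitly.
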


\begin{proof}
  The proof is essentially the same as the proof of Lemma 6.2 of \citet{gp}. We write
  \begin{align*}
     p_{f,\sigma}(y \mid \x)
    &= \frac{1}{\sqrt{2\pi} \sigma} \int_0^1 \exp\left\{ -\frac{(y - f(\x,u))^2}{2\sigma^2} \right\}
    \\&\ge \sqrt 2\exp\{-\|f - f_0\|_\infty^2/\sigma^2\}
           \int_0^1 \frac{1}{\sqrt{2\pi \sigma^2/2}} \exp\left\{ -\frac{(y-f_0(\x,u))^2}{\sigma^2} \right\} 
    \\&= \sqrt{2}\exp\{-\|f - f_0\|_\infty^2/\sigma^2\} \phi_{\sigma/\sqrt 2} * p_0(y\mid\x).
  \end{align*}
  Therefore, using the fact that $\phi_{\sigma/\sqrt 2} * p_0(y\mid\x) \gtrsim p_0(y\mid\x)$ we have
  \begin{align*}
    \log \frac{p_0(y\mid\x)}{p_{f,\sigma}(y\mid\x)}
    \le
    C_{\inf} + \frac{\|f - f_0\|_\infty^2}{\sigma^2}
  \end{align*}
  for some constant $C_{\inf}$.
\end{proof}

\begin{lemma}
  \label{lem:kruj}
  Let $p_0$ satisfy Condition F. Then $\|p_0 - \phi_\sigma * p_0\|_\infty \lesssim \sigma^\alpha$ for sufficiently small $\sigma$.
\end{lemma}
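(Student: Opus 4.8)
The plan is to reduce the statement to the classical bias bound for smoothing an $\alpha$-H\"older function with a Gaussian kernel. First I would upgrade the regularity hypothesis on $\log p_0$ to the same regularity on $p_0$ itself: Condition F gives that $y \mapsto \log p_0(y\mid\x)$ is $\alpha$-H\"older uniformly in $\x$, and (by the remark following Condition F) $\log p_0$ is bounded, so composing with $\exp$ — which is locally Lipschitz, and $C^\infty$ on any bounded range — preserves the H\"older exponent; hence there is a constant $L$ depending only on $p_0$ with $p_0(\cdot\mid\x) \in C^\alpha$ uniformly in $\x$. Then, using $\int \phi_\sigma = 1$, I would write
\[
  \phi_\sigma * p_0(y\mid\x) - p_0(y\mid\x)
  = \int \phi_\sigma(z)\,\bigl\{ p_0(y - z\mid\x) - p_0(y\mid\x) \bigr\}\, dz,
\]
so that everything reduces to bounding this integral uniformly in $(y,\x)$, and I would split on the size of $\alpha$.

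For $0 < \alpha \le 1$ the bound is immediate: $|p_0(y - z\mid\x) - p_0(y\mid\x)| \le L|z|^\alpha$, hence $|\phi_\sigma * p_0(y\mid\x) - p_0(y\mid\x)| \le L\int \phi_\sigma(z)|z|^\alpha\,dz = L\,\E|Z|^\alpha\,\sigma^\alpha \lesssim \sigma^\alpha$ with $Z \sim N(0,1)$ and $\E|Z|^\alpha < \infty$. For $1 < \alpha \le 2$, $p_0(\cdot\mid\x)$ is $C^1$ in $y$ with $\partial_y p_0(\cdot\mid\x)$ being $(\alpha-1)$-H\"older, and a first-order Taylor expansion with integral remainder gives $p_0(y - z\mid\x) - p_0(y\mid\x) = -z\,\partial_y p_0(y\mid\x) + R(y,z,\x)$ where $|R(y,z,\x)| = \bigl| \int_0^{-z} \{\partial_y p_0(y + s\mid\x) - \partial_y p_0(y\mid\x)\}\,ds \bigr| \le \tfrac{L}{\alpha}|z|^\alpha$. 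The linear term integrates to $0$ against $\phi_\sigma$ by symmetry of the Gaussian, so again $|\phi_\sigma * p_0(y\mid\x) - p_0(y\mid\x)| \le \tfrac{L}{\alpha}\int \phi_\sigma(z)|z|^\alpha\,dz \lesssim \sigma^\alpha$. Taking the supremum over $(y,\x)$ yields the claim.

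The only non-routine step is the $1 < \alpha \le 2$ case: the naive $\alpha$-H\"older estimate would only give $O(\sigma)$, which is worse than $O(\sigma^\alpha)$ since $\sigma < 1$, and one genuinely needs the symmetry of $\phi_\sigma$ to cancel the first-order term and recover the full exponent. This is where a little care is required to push the regularity of $\log p_0$ through $\exp$ — i.e. to control the $(\alpha-1)$-H\"older seminorm of $\partial_y p_0 = p_0\,\partial_y \log p_0$ via $\sup p_0$, $\sup|\partial_y \log p_0|$, and the seminorm of $\partial_y \log p_0$, using boundedness of $p_0$ and of the domain. The remaining point to watch, and presumably the reason for the qualifier ``for sufficiently small $\sigma$,'' is that the difference estimate for $p_0(y - z\mid\x)$ must remain valid over the range of $z$ that $\phi_\sigma$ effectively sees; since $\phi_\sigma$ has exponentially light tails, contributions from large $|z|$ are negligible relative to $\sigma^\alpha$ once $\sigma$ is small, and the crude bound $p_0 \lesssim 1$ controls them.
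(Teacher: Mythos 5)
Your proof follows essentially the same route as the paper's: for $0<\alpha\le 1$ a direct H\"older bound on $p_0(y-z\mid\x)-p_0(y\mid\x)$ integrated against $\phi_\sigma$, and for $1<\alpha\le 2$ a first-order Taylor expansion whose linear term cancels by symmetry of the Gaussian, leaving an $(\alpha-1)$-H\"older remainder of order $|z|^\alpha$. The only differences are cosmetic (integral-form versus mean-value remainder) and that you make explicit the transfer of H\"older regularity from $\log p_0$ to $p_0$ and $\partial_y p_0$, which the paper asserts implicitly when declaring its constant $L$ finite.
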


\begin{proof}
 Suppose $\alpha \le 1$. Let $L = \sup_{(y,\x) \ne (y',\x')} \frac{|p_0(y\mid\x) - p_0(y'\mid\x')|}{\|(y-y', \x-\x')\|^\alpha}$, which is finite by Condition F. Then:
  \begin{align*}
    |p_0(y\mid\x) - \phi_\sigma * p_0(y\mid\x)|
    &= |p_0(y\mid\x) - \int \phi_\sigma(u)p_0(y-u\mid\x)\,du| \\
    &= |\int p_0(y\mid\x)\phi_\sigma(u)\,du- \phi_\sigma(u)p_0(y-u\mid\x)\,du|\\
    &\le \int |p_0(y\mid\x)\phi_\sigma(u)-p_0(y-u\mid\x)\phi_\sigma(u)|\,du\\
    &\le
    L \int |u|^\alpha \, \phi_\sigma(u) \ du \lesssim
    \sigma^\alpha.
  \end{align*}
  
  For $1 < \alpha \le 2$, let $\dot{p}_0(y\mid\x) = \frac{d}{dy} p_0(y\mid\x)$ and let $L = \sup_{(y,\x) \ne (y',\x')} \frac{|\dot{p}_0(y\mid\x) - \dot{p}_0(y'\mid\x')|}{\|(y,\x)-(y',\x')\|^{\alpha - 1}}$, which is finite by Condition F. \\Then:
  \begin{align*}
    p_0(y - u\mid\x) = p_0(y\mid\x) - u\dot{p}_0(y'\mid\x) + u\dot{p}_0(y) - u\dot{p}_0(y\mid\x)
  \end{align*}
  for some $y'$ between  $y$ and $y - u$, depending on $(y,\x,u)$. Integrating with respect to $\phi_\sigma(u)$, the term $u \dot{p}_0(y\mid\x)$ drops; then taking the absolute value we have:
  \begin{align*}
    \bigg|\int (p_0(y - u\mid\x) - p_0(y\mid\x))\phi_\sigma(u)\,du\bigg| &= \bigg|\int u(\dot{p}_0(y\mid\x) - \dot{p}(y'\mid\x))\phi_\sigma(u)\,du\bigg|
  \end{align*}  
  The left hand side becomes $|p_0(y\mid\x) - \phi_\sigma * p_0(y\mid\x)|$ and the right hand side is upper bounded by:
  \begin{align*}
      \int |u(\dot{p}_0(y) - \dot{p}_0(y'))\phi_\sigma(u)|\,du \leq L\int |u|^\alpha \phi_\sigma(u) \,du \lesssim \sigma^\alpha
  \end{align*}
  yielding the stated result.
\end{proof}

\begin{lemma}
  \label{lem:kl}
  Suppose Condition F holds. Then for sufficiently small $\sigma$ we have
  \begin{align*}
    \int p_0(y\mid\x) \log \frac{p_0(y\mid\x)}{\phi_\sigma * p_0(y\mid\x)}
    \ dy \ F_\bX(d\x)
    \lesssim
    \sigma^{2\alpha}.
  \end{align*}
\end{lemma}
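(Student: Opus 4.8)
The plan is to deduce this directly from the two preceding lemmas: Lemma~\ref{lem:bounding-below}, which gives $\phi_\sigma * p_0 \gtrsim p_0$ (so the mollified density is bounded away from $0$ on the support of $p_0$), and Lemma~\ref{lem:kruj}, which gives the first-order approximation bound $\|p_0 - \phi_\sigma * p_0\|_\infty \lesssim \sigma^\alpha$. The key observation is that the Kullback--Leibler divergence is a \emph{second-order} functional of the error $p_0 - \phi_\sigma * p_0$, so a sup-norm bound of order $\sigma^\alpha$ ought to upgrade to a KL bound of order $\sigma^{2\alpha}$; the naive estimate $\log x \le x-1$ alone only produces $\sigma^\alpha$ and must be sharpened to exploit this.

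Concretely, write $q_\sigma = \phi_\sigma * p_0$. I would first invoke the elementary sandwich $\KL(p_0 \| q_\sigma) \le \chi^2(p_0 \| q_\sigma) = \int (p_0 - q_\sigma)^2 / q_\sigma \ dy \ F_\bX(d\x)$, which follows from applying $\log x \le x-1$ to $x = p_0 / q_\sigma$ together with the fact that $p_0$ and $q_\sigma$ both integrate to one in $y$. (Equivalently one can center using $\int (p_0 - q_\sigma)\,dy = 0$ and use the inequality $t \log t - t + 1 \le C(t-1)^2$, valid for $t$ in any compact subinterval of $(0,\infty)$ containing $1$.) Since Condition F forces $p_0$ bounded away from $0$ and Lemma~\ref{lem:bounding-below} gives $q_\sigma \gtrsim p_0$, the denominator $q_\sigma$ is bounded below by a positive constant on the region where $p_0 > 0$, which lets me drop it: $\KL(p_0 \| q_\sigma) \lesssim \int (p_0 - q_\sigma)^2 \ dy \ F_\bX(d\x)$. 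On the compact domain at hand this last integral is at most $\|p_0 - q_\sigma\|_\infty^2$ up to the finite measure of the domain, and Lemma~\ref{lem:kruj} then finishes the job, giving $\|p_0 - q_\sigma\|_\infty^2 \lesssim \sigma^{2\alpha}$; integrating against the probability measure $F_\bX$ does not affect the order.

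The step I expect to be the main obstacle is the uniform lower bound on $q_\sigma = \phi_\sigma * p_0$ as $\sigma \to 0$: one must be sure the mollified density is not depressed to something arbitrarily small on the effective domain of integration, in particular near the boundary of the support of $p_0$, where $p_0$ itself stays bounded away from $0$ but $q_\sigma$ could in principle shrink. This is precisely what Lemma~\ref{lem:bounding-below} rules out (its proof shows $U(1;\sigma) > 1/3$ uniformly in $\sigma < 1$), so once that lemma is in hand the remainder is routine. Finally, I would note that running the same argument with $t\log t - t + 1 \le C(t-1)^2$ replaced by $|\log t| \le C|t-1|$ on the same compact interval yields, with no extra work, the companion bound $\int p_0 \log^2(p_0 / q_\sigma) \ dy \ F_\bX(d\x) \lesssim \sigma^{2\alpha}$ for the $V$-divergence, which is the form ultimately needed when verifying G1.
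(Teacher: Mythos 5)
Your proposal is correct and takes essentially the same route as the paper: bound the Kullback--Leibler divergence by the chi-squared divergence $\int (p_0 - \phi_\sigma * p_0)^2/(\phi_\sigma * p_0)\, dy\, F_\bX(d\x)$, lower-bound the denominator using Lemma~\ref{lem:bounding-below} together with Condition F, and bound the numerator by $\sigma^{2\alpha}$ via Lemma~\ref{lem:kruj}. Your closing remark about the $V$-divergence is a reasonable aside, though the paper obtains that bound later (in the proof of Lemma~\ref{lem:link}) by a different device, namely Lemma 8 of \citet{ghosal_vdv} combined with Lemma~\ref{lem:ghoshbound}.
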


\begin{proof}
  For fixed $\x$ we can bound the Kullback-Leibler divergence in terms of the chi-squared divergence as
  \begin{align*}
    \int p_0(y\mid\x) \log \frac{p_0(y\mid\x)}{\phi_\sigma * p_0(y\mid\x)}
    \ dy \ F_\bX(d\x)
    \le
    \int \frac{(p_0(y\mid\x) - \phi_\sigma * p_0(y\mid\x))^2}{\phi_\sigma * p_0(y\mid\x)} \ dy \ F_{\bX}(d\x).
  \end{align*}
  By Lemma~\ref{lem:bounding-below} and Condition F, we have that the denominator is bounded away from 0 for small enough $\sigma$. Lemma~\ref{lem:kruj} further tells us that the numerator is $O(\sigma^{2\alpha})$, giving the result.
\end{proof}

\subsubsection*{Proof of Lemma~\ref{lem:link}}
We first establish the second statement. Let $h^2_{\x}(p, q) = \int \{p(y \mid \x)^{1/2} - q(y \mid \x)^{1/2}\} ^ 2 \ dy$ denote the usual Hellinger distance between $p(\cdot \mid \x)$ and $q(\cdot \mid \x)$. We begin by noting that, by Cauchy-Schwarz,
\begin{align*}
  p_{f_1,\sigma_1}(y\mid\x) \, p_{f_2,\sigma_2}(y\mid\x)
  \ge
  \left\{\int_0^1 \sqrt{\phi_{\sigma_1}(y - f_1(\x,u)) \phi_{\sigma_2}(y-f_2(\x,u))}\right\} ^ 2.
\end{align*}
By Fubini's Theorem and the usual formula for the affinity between normal distributions we have
\begin{align}
  \label{eq:holder}
  \begin{split}
  h^2_{\x}(p_{f_1,\sigma_1}, p_{f_2,\sigma_2})
  &\le
  1 - \int_0^1 \int \sqrt{\phi_{\sigma_1}(y - f_1(\x,u)) \phi_{\sigma_2}(y-f_2(\x,u))} \ dy \ du
  \\&=
  \int_0^1 1 - \sqrt{\frac{2\sigma_1(\x,u)\sigma_2(\x,u)}{\sigma_1(\x,u)^2 + \sigma_2(\x,u)^2} }\exp\left\{-\frac{(f_1(\x,u) - f_2(\x,u))^2}{4(\sigma_1(\x,u)^2 + \sigma_2(\x,u)^2)}\right\}  \ du
  \end{split}
\end{align}
By the triangle inequality we have $h_\x^2(p_{f_1, \sigma_1}, p_{f_2, \sigma_2}) \le 2\{h_\x^2(p_{f_1, \sigma_1}, p_{f_1, \sigma_2}) + h^2_{\x}(p_{f_1, \sigma_2} p_{f_2, \sigma_2})\}$. Now, note from the inequality $1 - e^{-|x|} \le |x|$ that for any $\sigma_1 , \sigma_2$ we have
\begin{align*}
  1 - \sqrt{\frac{2\sigma_1 \sigma_2}{ \sigma_1^2 + \sigma_2^2}}
  \le
  1 - \sqrt{\frac{\sigma_1^2\wedge \sigma_2^2}{\sigma_1^2 \vee \sigma_2^2}}
  =
  1 - \exp\{-|\log\sigma_1 - \log \sigma_2|\}
  \le
  |\log \sigma_1 - \log \sigma_2|.
\end{align*}
Hence using \eqref{eq:holder} with $f_1$ in place of $f_2$ gives
\begin{align*}
  h_\x^2(p_{f_1, \sigma_1}, p_{f_1, \sigma_2})
  \le
  \int_0^1 |\log \sigma_1(\x,u) - \log \sigma_2(\x,u)| \ du
  \le
  \|\log \sigma_1 - \log \sigma_2\|_\infty.
\end{align*}
Next, we use \eqref{eq:holder} with $\sigma_2$ in place of $\sigma_1$ to get
\begin{align*}
  h_\x^2(p_{f_1,\sigma_2}, p_{f_2,\sigma_2})
  &\le
  \int_0^1 1 - \exp\left\{- \frac{(f_1(\x,u) - f_2(\x,u))^2}{4\sigma_2(\x,u)^2}
  \right\} \ du
  \\&\le \int_0^1 \frac{(f_1(\x,u) - f_2(\x,u))^2}{4\sigma_2(\x,u)^2} \ du
  \\&\le \frac{\|f_1 - f_2\|^2_\infty}{4\inf_{u}\sigma_2(\x,u)^2}.
\end{align*}
Putting these together we get
\begin{align*}
  h_{\x}^2(p_{f_1,\sigma_1},p_{f_2,\sigma_2})
  \le
  2\left\{
  \|\log \sigma_1 - \log \sigma_2\|_\infty
  +
  \frac{\|f_1 - f_2\|^2_\infty}{4 \inf_{u} \sigma_1(\x,u)^2 \wedge \sigma_2(\x,u)^2}
  \right\}.
\end{align*}
Integrating with respect to $F_{\bX}(d\x)$ and taking the square root gives the result.

To prove the first bound, fix $\x$ and suppose $p_{f,\sigma}$ is such that $\sigma(\x,u)$ is constant, $\sigma \in (\epsilon^{1/\alpha}, 2\epsilon^{1/\alpha})$, and $\|f - f_0\|_\infty \le \epsilon^{1+1/\alpha}$. Note that $\phi_\sigma * p_0 = p_{f_0,\sigma}$. For any $p_{f,\sigma}$, applying the triangle inequality we have
\begin{align*}
  h_{\x}^2(p_0, p_{f,\sigma})
  \le
  2\{h_{\x}^2(p_0, p_{f_0,\sigma}) + h^2_{\x}(p_{f_0,\sigma},p_{f,\sigma})\}.
\end{align*}
Using the fact that $h_{\x}^2(p, q) \le \int p \log(p / q) \ dy$, the proof of Lemma~\ref{lem:kl} gives $h_{\x}^2(p_0, p_{f_0,\sigma}) \lesssim \sigma^{2\alpha} \lesssim \epsilon^2$ for sufficiently small $\epsilon$. Next, we have
\begin{align*}
  h^2_{\x}(p_{f_0,\sigma}, p_{f,\sigma})
  \le
  1 - \exp\left\{ - \frac{\|f - f_0\|^2_\infty}{8\sigma^2} \right\}
  \lesssim
  \frac{\|f - f_0\|^2_\infty}{\sigma^2}
  \lesssim
  \frac{\epsilon^{2 + 2/\alpha}}{\epsilon^{2/\alpha}}
  \lesssim
  \epsilon^2.
\end{align*}

Combining these facts and integrating with respect to $F_\bX(d\x)$ gives $h^2(p_0, p_{f\sigma}) \lesssim \epsilon^2$. An application of Lemma 8 of \citet{ghosal_vdv} and Lemma~\ref{lem:ghoshbound} gives $K(p_0 \| p_{f,\sigma}) \lesssim \epsilon^2$ and $V(p_0 \| p_{f,\sigma}) \lesssim \epsilon^2$ as well; let $C_{\KL}$ denote the constant which makes $\KL(p_0 \| p_{f,\sigma}) \vee V(p_0 \| p_{f,\sigma}) \le C^2_{\KL} \epsilon^2$. Hence $p_{f,\sigma} \in \KL(C_{\KL} \epsilon)$, proving the result.

\subsection{Simulation Details}
\subsubsection{Implementation}
Code for conducting these simulations can be found at: \url{https://github.com/vittorioorlandi/DR-BART}.

Implementation details are as below: 
\begin{itemize}
    \item \textbf{DR-BART} DR-BART was implemented in R using the \texttt{Rcpp} package. 
    \item \textbf{DR-GP} We implemented DR-GP via the \texttt{rstan} package on CRAN for interfacing with Stan \citep{stan}. We followed \cite{Kundu2014}, used a squared exponential kernel for the Gaussian Process, and introduced covariates appropriately. 
    \item \textbf{SBART-DS} The code for SBART-DS was graciously provided by \citeauthor{modbart} 
    \item \textbf{PSBPM} This method was run via the implementation on the author's GitHub at: \url{https://github.com/david-dunson/probit-stick-breaking}. Prior parameters were as suggested in \cite{pbss}.
    \item \textbf{DPMM} The code for DPMM was taken from version 1.17 of the archived CRAN package \texttt{DPpackage}, which can be found here: \url{https://cran.r-project.org/src/contrib/Archive/DPpackage/}. Prior parameters were left at their defaults. 
\end{itemize}
All methods were run for 25,000 iterations of burn-in, after which 25,000 posterior samples were saved. 

\subsubsection{Predicted Coverage and Credible Band Width}
Here, we provide additional information on 1. the predictive coverage and 2. the credible band width of each method across various simulation settings. To compute the predictive coverage, an additional $n = 1000$ test points were generated for each run of a simulation. The predicted coverage for a run is the proportion of the test points that were contained in the 95\% HDR intervals; the reported values are averages across all runs of a simulation. The reported widths are average widths -- across runs of a simulation -- of 95\% posterior credible bands within the 95\% HDR region of the true density. That is, we evaluate credible band width in a high density region of the data. 
The predictive coverage below shows that all methods expect for PSBPM consistently have good predictive coverage. The information on credible bands is useful in conjunction with the coverage results in the main text, as it helps show which of the methods that undercover tend to do so because they are unreasonably confident in their estimates (e.g. DR-GP and PSBPM) versus those that simply do not capture the shape of the density well enough (e.g. SBART-DS). 

\begin{table}[h]
\centering
\begin{tabular}{|l||l|l|l|l|l|}
\hline
$x$   & DR-BART & DPMM  & SBART-DS & DR-GP & PSBPM \\ \hline
0.1 & 0.95    & 0.95 & 0.93 & 0.96 & 0.63 \\ \hline
0.5 & 0.95    & 0.95 & 0.96 & 0.96 & 0.75  \\ \hline
0.8 & 0.99    & 0.97  & 0.97 & 0.99 & 0.00  \\ \hline
\end{tabular}
\caption{Mean predictive coverage of 95\% predictive intervals, across runs of Simulation 1.}
\label{tab:sim1_pred_cvg}
\end{table}

\begin{table}[h]
\centering
\begin{tabular}{|l||l|l|l|l|l|}
\hline
$x$   & DR-BART & DPMM  & SBART-DS & DR-GP & PSBPM \\ \hline
0.1 & 0.12    & 0.13 & 0.07 & 0.08 & 0.07 \\ \hline
0.5 & 0.26   & 0.20  & 0.17 & 0.10 & 0.12  \\ \hline
0.8 & 0.67   & 0.33  & 0.27 & 0.20 & 0.00  \\ \hline
\end{tabular}
\caption{Mean 95\% credible band width within the true 95\% HDR region, averaged across runs of Simulation 1.}
\label{tab:sim1_band_width}
\end{table}

\begin{table}[h]
\centering
\begin{tabular}{|l||l|l|l|l|l|}
\hline
$x$   & DR-BART & DPMM  & SBART-DS & DR-GP & PSBPM \\ \hline
0.1 & 0.31    & 0.1 & 0.07 & 0.05 & 0.08 \\ \hline
0.5 & 0.53    & 0.1  & 0.17 & 0.07 & 0.12  \\ \hline
0.8 & 1.55    & 0.3  & 0.27 & 0.20 & 0.00  \\ \hline
\end{tabular}
\caption{Mean 95\% credible band width within the true 95\% HDR region, averaged across runs of Simulation 2.}
\label{tab:sim2_band_width}
\end{table}

\begin{table}[h]
\centering
\begin{tabular}{|l||l|l|l|l|l|}
\hline
$x$   & DR-BART & DPMM  & SBART-DS & DR-GP & PSBPM \\ \hline
0.1 & 0.92    & 0.94 & 0.93 & 0.90 & 0.64 \\ \hline
0.5 & 0.94    & 0.97 & 0.95 & 0.97 & 0.73  \\ \hline
0.8 & 1.00    & 0.99  & 0.97 & 1.00 & 0.00  \\ \hline
\end{tabular}
\caption{Mean predictive coverage of 95\% predictive intervals, across runs of Simulation 2.}
\label{tab:sim2_pred_cvg}
\end{table}

\begin{table}[h]
\centering
\begin{tabular}{|l||l|l|l|l|l|}
\hline
$x$   & DR-BART & DPMM  & SBART-DS & DR-GP & PSBPM \\ \hline
0.1 & 0.95    & 0.95 & 0.94 & 0.96 & 0.61 \\ \hline
0.5 & 0.95    & 0.97 & 0.96 & 0.97 & 0.78 \\ \hline
0.8 & 1.00    & 0.96  & 0.97 & 0.99 & 0.00  \\ \hline
\end{tabular}
\caption{Mean predictive coverage of 95\% predictive intervals, across runs of Simulation 3.}
\label{tab:sim3_pred_cvg}
\end{table}

\begin{table}[h]
\centering
\begin{tabular}{|l||l|l|l|l|l|}
\hline
$x$   & DR-BART & DPMM  & SBART-DS & DR-GP & PSBPM \\ \hline
0.1 & 0.11    & 0.13 & 0.07 & 0.08 & 0.07 \\ \hline
0.5 & 0.33    & 0.34  & 0.24 & 0.16 & 0.17  \\ \hline
0.8 & 0.63    & 0.33  & 0.25 & 0.20 & 0.00  \\ \hline
\end{tabular}
\caption{Mean 95\% credible band width within the true 95\% HDR region, averaged across runs of Simulation 3.}
\label{tab:sim3_band_width}
\end{table}

\subsection{Applications}
Code for performing these analyses can be found at: \url{https://github.com/vittorioorlandi/DR-BART}. The data for the application on returns to education can be found \href{https://www.econometricsociety.org/content/supplement-quantile-regression-under-misspecification-application-us-wage-structure-data}{here}. The data for the student growth application can be found in the \texttt{SGPdata} R package.

\end{document}